\newcommand{\commentbs}[1]{}
\newcommand{\abs}[1]{\lvert#1\rvert}
\def\bkR{{\rm I\kern-.17em R}}
\def\bkQ{{\rm I\kern-.17em Q}}
\def\bkN{{\rm I\kern-.17em N}}
\def\bkZ{{\rm I\kern-.17em Z}}
\newenvironment{stdarray}{\[ \left\{ \begin{array}{lcl}}{\end{array} \right. \]}
\newcommand{\bstd}{\begin{stdarray}}
\newcommand{\estd}{\end{stdarray}}
\newcommand{\beq}{\begin{equation}}
\newcommand{\eeq}{\end{equation}}
\newtheorem{assumption}{Assumption}
\newtheorem*{assumption*}{Assumption}
\newtheorem{lemma}{Lemma}
\newtheorem{proposition}{Proposition}
\theoremstyle{definition}
\newtheorem{definition}{Definition}
\newtheorem{example}{Example}
\newtheoremstyle{break}
  {9pt}
  {9pt}
  {\itshape}
  {}
  {\bfseries}
  {.}
  {\newline}
  {}
\theoremstyle{break}
\theoremstyle{break}
\theoremstyle{remark}
\numberwithin{equation}{section}
\crefname{figure}{figure}{figures}
\crefname{equation}{equation}{equations}
\crefname{lemma}{lemma}{lemmas}
\crefname{design}{design}{designs}
\crefname{proposition}{proposition}{propositions}
\crefname{condition}{condition}{conditions}
\crefname{assumption}{assumption}{assumptions}
\crefname{remark}{remark}{remarks}
\crefname{appendix}{appendix}{appendices}
\def\citeapos#1{\citeauthor{#1}'s (\citeyear{#1})}
\def\uniset{{\rm 1\kern-.40em 1}}
\newcommand{\calZ}{\mathcal{Z}}
\newcommand{\calR}{\mathcal{R}}
\newcommand{\calT}{\mathcal{T}}
\newcommand{\acalT}{\vert \mathcal{T} \vert}
\newcommand{\calTs}{\calT^\ast}
\newcommand{\calZs}{\calZ^\ast}
\newcommand{\calZnotZs}{\calZ\, \setminus\, \calZs}
\newcommand{\calTnotTs}{\calT\, \setminus\, \calTs}
\newcommand{\calZnotA}{\calZ\, \setminus \, A}
\newcommand{\calZnotAi}{\calZ\, \setminus \, A_i}
\newcommand{\bcalB}{{\boldsymbol{\mathcal{B}}}}
\newcommand{\bcalK}{{\boldsymbol{\mathcal{K}}}}
\global\long\def\cup{\operatorname*{\mathsmaller{\bigcup}}}
\begin{document}


\onehalfspacing

\title{\Large\textbf{Treatment Effects with Targeting Instruments}\footnote{This  is a shorter, revised  version of our ``Filtered and Unfiltered Treatment Effects with Targeting Instruments'' (\href{https://arxiv.org/pdf/2007.10432v1.pdf}{first arXiv version July 2020}).}}

\author{\textbf{Sokbae Lee}\footnote{Department of Economics, Columbia University and Centre for Microdata Methods and Practice, Institute for Fiscal Studies, sl3841@columbia.edu.}
\and
\textbf{Bernard Salani\'e}\footnote{Department of Economics, Columbia University and FGV EPGE, Rio de Janeiro, bsalanie@columbia.edu.}
}

\date{May 4, 2026}

\maketitle

\thispagestyle{empty}

\begin{abstract}
Multivalued treatments are commonplace in applications. We explore the use of discrete-valued instruments to control for selection bias in this setting. Our discussion revolves around the concept of targeting:  which instruments target which treatments. It allows us to establish conditions under which counterfactual averages and treatment effects are point- or partially-identified for composite complier groups. We explore the additional identifying power of a positive selection assumption.  We illustrate its usefulness by revisiting the findings of \citet{kline2016} on the  Head Start Impact Study. We derive informative bounds that   suggest less beneficial effects of Head Start expansions
than their parametric estimates.

\bigskip

\textsc{Keywords}: Identification, selection, multivalued treatments, discrete instruments,  monotonicity.

\end{abstract}

\clearpage

\setcounter{page}{1}

\section*{Introduction}

   Much of the literature on the evaluation of treatment effects 
has concentrated on the paradigmatic ``binary/binary'' example, in which both   treatment and instrument only take two values. 
Multivalued treatments are 
common in actual policy implementations, however, as are multivalued instruments. Many different programs aim to help train job seekers for instance, and each of them has its own eligibility rules. Tax and benefit regimes distinguish many categories of taxpayers and eligible recipients. The choice of a college and major has  many dimensions too, and responds to a variety of financial help programs and other incentives.    Finally, more and more randomized experiments in economics  resort  to  factorial designs\footnote{\citet{Factorial:Designs:REStat} review recent applications of factorial designs.}.

   Existing work on  multivalued treatments under selection on observables includes
  \citet{imbens2000role},
  \citet{cattaneo2010efficient},
  and
  \citet{Ao:et:al}
  among others.
  As the training, education choice, and tax-benefit examples illustrate,   in non-experimental settings multivalued treatments   are also subject to selection on unobservables.  The use of instruments to evaluate the effects of multivalued treatments under selection on unobservables
 has received increasing attention in  recent literature.
   In previous work \citep*{LeeSalanie2018}, we analyzed the case when  enough continuous instruments are available. Identification is of course more difficult when  instruments only take discrete values.
   We explore in this paper the use of such discrete-valued instruments in order to control for selection bias when  evaluating  discrete-valued treatments.  Our goal is to find plausible conditions on treatment assignment and on the distribution of outcomes under which counterfactual averages and treatment effects are  point- or partially identified for various (sometimes composite) complier groups. This distinguishes our paper from the recent contributions of  \cite{baietal:monotonicityaverage}, which focuses on population-wide average outcomes, and of \cite{goff:oa}, which studies  identification without any assumption on outcomes.

  In the binary/binary model, the analyst can often take for granted that switching on the binary instrument makes treatment (weakly) more likely  for all or no  observations. This is satisfied under the local average treatment effect (LATE)-monotonicity assumption \citep[e.g.,][]{LATE1994,vytlacil2002independence,HV2007-handbook}.
  With multiple instrument values and multiple treatments,  there may be no natural ordering of instrument or treatment values that would give meaning to the word ``monotonicity''.   \citet{heckmanpinto-pdt} defined an   ``unordered monotonicity''  property; various papers have proposed other definitions of (qualified) monotonicity\footnote{See \citet{nafjeevanpinto:2022} for a detailed  analysis of some of these proposals.}. 
  Even when such a condition holds, there exist several  groups  of compliers---individuals whose treatment assignment changes with the  value of the instrument. 
 Since this may give rise to a multiplicity of cases, existing literature has often added assumptions that reduce this complexity.  \citet{angrist1995two} analyzed two-stage least squares (TSLS) estimation when  the  treatment takes a finite number of ordered values. 
  Several recent papers  have studied the case of binary treatments with multiple instruments.
  \citet{MTW:2020}
 and 
 \citet{Goff}
 analyzed the identifying power of different monotonicity assumptions in this context\footnote{\cite{MTW:2024} further apply their framework of monotonicity with multiple instruments to marginal treatment effects \citep[e.g.,][]{heckman2001policy,MIV2005,CHV-AER}.}. Others have studied models with binary instruments and multivalued or continuous treatments.
 \cite{Torgovitsky2015}, \cite{DF2015}, 
 \cite{Huang:et:al:2009},
 \citet{caetano_escanciano_2020},   \citet{Feng:2024}, and \citet{AngristSantosTecchio2025} developed identification results for different models.

 In a wide-ranging contribution, \cite{heckmanpinto-pdt} derived  results on  partial identification in discrete-instrument, discrete-treatment models; they also  showed how additional identifying assumptions, such as unordered monotonicity, can be applied to shrink the identified set of treatment effects for various complier groups. While their results are very general, they are not as transparent as one would like. Our 
approach to this issue is different: we seek a parsimonious framework within which we can make constructive progress, and that can still be useful in many applications.
 In order to reduce the 
 complexity of the problem,  we start by imposing an additive  random-utility
  model  (ARUM) structure,  as did \citet{HUV2006,HUV2008} and \citet{HV2007-handbook-2}.  Under ARUM, the selection into treatment depends on mean values and additive,  observation-specific shocks. Some, but not all, ARUM models satisfy the  unordered monotonicity property of \citet{heckmanpinto-pdt}, which was applied by \citet{pinto2021}  to the Moving to Opportunity program.
  
 In many applications, a  value of the treatment is especially salient; since it often is the no-treatment value $t=0$, we call it the ``control''. Under ARUM, each treatment $t$ generates a change in the mean value, relative to the control, that depends on the value $z$ of the instrument.  
It is  natural  to speak of an instrument value $z$ {\em targeting\/} a treatment value $t$  when  it maximizes this change in  mean value. Most of our paper relies on the  assumption of {\em strict targeting\/},  which obtains when each instrument only changes the mean values of the treatments it  targets. Strict targeting holds for instance in models of imperfect compliance when the cost of non-compliance does not depend on its nature.  Some of our results also require   {\em one-to-one targeting}, where each non-zero instrument targets one treatment only, and each treatment  (apart from  the control) is targeted by one instrument only.   Finally, we speak of {\em universal targeting\/} when both strict and one-to-one targeting apply.

As we will see, each of these targeting assumptions generates testable implications. These tests can also be used to match instrument values and the treatment values that they target. They only rely on estimates of the generalized propensity scores, which are directly identifiable from the data. In particular, the testable implications of universal targeting that we derive are identical to those of \cite{Bai:Tabord-Meehan:2025:arXiv} and are therefore sharp.

Our use of ``targeting'' instruments is similar in spirit  to Section~7.3 of~\cite{HV2007-handbook-2}\footnote{See also the  recent contribution by \citet{Buchinsky:Gertler:Pinto:2023}, which  uses  revealed preference arguments.}. We define it differently and we seek to identify a more general class of treatment effects.
The term ``targeting'' is inspired by the time-honored Targeting Principle\footnote{Early references include \cite{tinbergen:econpolicybk} and \cite{bhagwati:1971}.}.  Some  policies  act directly on final outcomes, and others  aim to modify choices. Our use of the term ``targeting'' refers to the latter. Take a Roy model in which workers choose among occupations on the basis of their net utilities; we observe the choice of occupation and the wage in that occupation. A safety regulation that reduces the disutility of labor for (say) construction workers is, in our terminology,  an instrument that targets the choice to be a construction worker. Policymakers might also seek to
 increase average incomes by offering a college credit. While their final aim is to increase  wages (an outcome), we would say that the college credit is an instrument that targets the choice to go to college---a treatment variable.



  To illustrate the usefulness of our framework, we  apply it to the Head Start Impact Study (HSIS),  a randomized experiment that sought to evaluate the value added of Head Start preschools.  \citet{kline2016} revisited the HSIS; they  took into account   the presence of a substitute treatment (alternative preschools in this case).  They found that Head Start was only beneficial for children who would not have attended another preschool program instead.  In this study, the instrument is binary: a child is offered admission to Head Start or not. Treatment is ternary, as the child may end up in Head Start, in an alternative preschool, or not be enrolled in preschool. In our language, ``no preschool'' is the reference treatment. Head Start offers of course target Head Start; and since the instrument is binary, targeting is trivially strict. This makes it an example of universal targeting.


As another example, consider a randomized experiment with imperfect compliance. Individuals are assigned to a treatment arm via instrument (targeting), but may self-select into an alternative treatment  (imperfect compliance). When random assignments map uniquely to treatments, the design satisfies our one-to-one targeting assumption. Strict targeting further implies that selecting into treatment is equally costly for all individuals not assigned to it.
Consider for instance the interventions reported in \citet{Angrist:et:al:2009}  and in \citeauthor{attanasioetal:colombia:2014}   (\citeyear{attanasioetal:colombia:2014,attanasioetal:colombia:2020}).
These are 4-way factorial randomized experiments: each subject is randomly assigned to a control group, to receive treatment~1, to receive treatment~2, or to receive both treatments. By definition, this is one-to-one targeting. Compliance was very imperfect in \cite{Angrist:et:al:2009}, and it is described as ``high'' in the other two papers. If subjects self-selected into treatments on the basis of their expected benefits, then strict targeting is a natural assumption. 

Combining  ARUM and  assumptions on targeting allows us to point-identify the size of some complier groups  and the corresponding counterfactual averages and treatment effects on the outcomes, and to partially identify others.  
 We use two examples to demonstrate the identification power and implications of ARUM and  targeting. Our   first example is a $2 \times T$ model where a binary instrument targets only one of $T \geq 3$ treatment values, as in~\citet{kline2016}.  In our second example, 
   three unordered treatment values target three instrument values. This  $3 \times 3$ model was also studied by \citet{kirkeboen2016}\footnote{See also more recent work by \citet{Bhuller22}, \citet{Heinesen:2025}, and \citet{nos22}.}.  Unlike them, we do not  assume that the  data contains information on next-best alternatives. Whereas  the $2\times T$ model satisfies unordered monotonicity under our strongest targeting assumptions, the $3\times 3$ model does not\footnote{
  It does satisfy the weaker generalized monotonicity assumption of \cite{baietal:monotonicityaverage}, however.}.

 We obtain novel identification results for both examples; they lead to new estimands or bounds for average treatment effects on various groups.
  Additional identifying assumptions can refine these bounds. Our leading  example is what we call {\em positive selection}. This assumes that the average outcome for a given treatment $t$ is larger for some response group   than for another. Consider for instance the binary instrument case. It seems natural to assume that the always-takers of a treatment get more  from it than compliers who only take it if they are incentivized to do so.  Positive selection  also obtains under weak assumptions in the generalized Roy model. More generally, let us return to our earlier illustration  of a randomized experiment under imperfect compliance.  Consider the  response group  of individuals who would end up in treatment $t^\prime$ both when drawing $z$ and when drawing  $z^\prime$. We would expect this response group to  have better outcomes under $t^\prime$, on average, than the response group that exhibits perfect compliance to $z$ and $z^\prime$ draws---assuming that these two response groups end up in the same treatment branches for all other instrument values.  This falls exactly under our positive selection assumption. It adds  identifying power in both of our leading examples.

 To show the value of our approach, we apply it to the reanalysis of Head Start by \cite{kline2016}. We   confirm the importance of taking into consideration alternative preschools when evaluating Head Start.   Unlike \citet{kline2016},  we do not rely on parametric selection models.   Under a plausible positive selection assumption, our estimates suggest that the large difference between complier groups that they find  can only be rationalized under {\em negative\/} selection into Head Start. As a by-product, we provide an upper bound on the welfare effect of expanding access to Head Start. Interestingly, the estimated upper bound turns out to be lower than the point estimate of \cite{kline2016};  and it yields a lower marginal value for  public funds used in expanding access to Head Start.

The  paper is organized as follows.
\Cref{sec:model} defines our framework.
 In \Cref{sec:targeting}, we     define  and discuss the concepts of targeting,  one-to-one targeting, and strict targeting. Section~\ref{sec:identif}  derives
their implications for the identification of population shares,
 counterfactual averages,
 and the  effects of the treatments on  various complier groups;  it also defines and illustrates   positive selection. 
Finally, we present estimation results for   Head Start  in \Cref{sec:empirical}.
The Appendices contain the proofs of all  propositions and lemmata, along with some additional material.

\section{The Framework}\label{sec:model}

In all of the paper, we denote observations as $i=1,\ldots,n$. Each observation consists of covariates $X_i$, instruments $Z_i$, outcome variables $Y_i$,  and 
treatments $T_i$. We assume that the covariates $X_i$ are exogenous to treatment assignment and outcomes. Since they will not play any role in our identification strategy, we condition on  the covariates throughout and we omit them from the notation. Our results should therefore be interpreted as conditional on $X$.

We assume that observations are independent and identically distributed. Random sampling rules out that the treatment status of one observation influences other observations. This further implies that the outcome for a specific observation does not impact the outcomes of other members within the population. In other words, we rely on the Stable Unit Treatment Value Assumption (SUTVA).


  We focus  in this paper 
on  treatment variables that take discrete values,   which we label
$t \in \calT$.  For simplicity, we will call $T=t$ ``treatment $t$''.  
These values do not have to be ordered; e.g., when $t=2$ is available, it does not necessarily indicate ``more treatment'' than $t=1$.
We assume that   the only available  instruments are discrete-valued, and we label their values as $z \in \calZ$.  Finally, for any $z\in\calZ$ and $t\in \calT$, we denote the {\em generalized propensity score\/} by
\[
P(t|z) \equiv \Pr(T_i=t\vert Z_i=z).
\]

\subsection{Restricting Heterogeneity}
As in most of this literature, we will 
need an assumption that restricts the heterogeneity in the  counterfactual mappings $T_i(z)$, that is, potential treatments.  In the binary/binary model, this is most often done by imposing LATE-monotonicity. 	 As is well-known, LATE-monotonicity imposes that (denoting instrument values as $z=0,1$) 
	(i) or (ii) must hold:
	\begin{enumerate}
		\item[(i)] 
	for each observation $i$,  $T_i(1)\geq T_i(0)$;
	\item[(ii)] for each observation $i$, 
	 $T_i(0)\geq T_i(1)$.
	\end{enumerate}

With more than two treatment values and/or more than two instrument values, there are many ways to restrict the heterogeneity in treatment assignment. Since treatments  may not be ordered in any meaningful
 way, we cannot apply the results in~\citet{angrist1995two} for instance.  \citet{MTW:2020}  state  several versions of monotonicity for a binary treatment model with $\abs{\calZ}>2$. 	They propose a ``partial monotonicity'' assumption which 
 applies binary LATE-monotonicity component by component. This requires that the instruments  be 
 interpretable as combinations of component instruments, which is not necessarily the case here.

 To cut through this complexity, we assume from now on that assignment to treatment can be represented by an Additive Random-Utility Model (ARUM), that is  
 by a discrete choice problem with additively separable errors:
 \[
 T_i(z) = \arg\max_{t \in \calT} (U_{z}(t) + u_{it})
 \] 
 for some real numbers $U_z(t)$ which are common across observations,  and random vectors ${(u_{it})}_{t\in \calT}$ that are distributed independently of $Z_i$. We do not restrict the codependence of the  random variables $u_{it}$. 

In a randomized experiment with perfect compliance, we would have  $U_z(t^\prime)=-\infty$ and $U_{z^\prime}(t)=-\infty$. With imperfect compliance, these mean values are finite; if for instance $u_{it^\prime}-u_{it}>U_z(t)-U_z(t^\prime)$, individual $i$ will get into treatment $t^\prime$ when drawing $z$ would normally assign her to $t$.


 Imposing an ARUM structure will greatly simplify our discussion of treatment assignment. It incorporates a substantial restriction, however. Suppose that observation $i$ has treatment values $t$ under $z$ and $t^\prime$ under $z^\prime$. By the ARUM structure, this implies
 \begin{align*}
	U_z(t) + u_{it} \geq  U_z(t^\prime) + u_{it^\prime}\\
	U_{z^\prime}(t^\prime) + u_{it^\prime} \geq  U_{z^\prime}(t) + u_{it}.
 \end{align*} 
Combining these two restrictions implies an ``increasing differences'' property:
 \[
U_{z^\prime}(t^\prime)	- U_{z^\prime}(t) \geq  U_z(t^\prime)-U_z(t).
\]
     This inequality  in turn is incompatible with the existence of an observation $j$ that has treatment values $t^\prime$ under $z$ and $t$ under $z^\prime$. Thus we rule out  ``direct two-way flows'': if a change in the value of an instrument causes an observation to shift from a treatment value $t$ to a treatment value $t^\prime$, it can cause no other observation to switch from $t^\prime$ to $t$. The argument above is a special case of the general discussion in~\citet{heckmanpinto-pdt}; their Theorem~{T-3}  shows that the treatment assignment models that satisfy unordered monotonicity  for each pair of  instrument values  can be represented as an ARUM. Not all ARUM models satisfy unordered monotonicity, however; unordered monotonicity excludes a more general class of two-way flows. We will illustrate  this point on one of our leading examples in \Cref{subsec:ternary-ternary}.

\subsection{Assignment to Treatment}

Assumption~\ref{assn:ARUM} defines the class of models of assignment to treatment that we analyze in this paper.

\begin{assumption}[ARUM]\label{assn:ARUM}
	The treatment assignment model consists of:
	\begin{enumerate}
		\item a finite set $\calT =\{0,1,\ldots, \abs{\calT}-1\}$;
		\item a finite set of instrument values $\calZ=\{0,1,\ldots, \abs{\calZ}-1\}$; 
		\item an ARUM  model of treatment:
		\[
		T_i(z) = \arg\max_{t\in \calT}(U_z(t)+u_{it}),
		\] 
where the vector $(u_{it})_{t\in\calT}$ is distributed independently of $Z_i$ and is continuously
distributed with full support on $\mathbb{R}^{\acalT}$.
\end{enumerate}
\end{assumption}

We will often refer to the   $U_z(t)$ as ``mean values''. This is only meant to simplify the exposition; it is consistent with, but need not refer to,  preferences on the part of the agent.
Note that when $\calT=\{0,1\}$, Assumption~\ref{assn:ARUM} is just the standard monotonicity assumption, with a threshold-crossing rule
\[
	T_i(z)=\uniset(u_{i0}-u_{i1}\leq U_z(1)-U_z(0)).
	\] 
If we add a third treatment value so that $\calT=\{0,1,2\}$, the ARUM assumption starts to bite as it excludes direct two-way flows in the treatment model. However, 
Assumption~\ref{assn:ARUM} 
is far from  sufficient to identify interesting treatment effects in general. In order to better understand what is needed, we now resort to the notion of {\it response-groups\/} of observations, whose members share the same mapping from instruments $z$ to  treatments $t$. We first state a general definition\footnote{This is analogous to  the definitions in \citet{heckmanpinto-pdt}.}.

\begin{definition}[Response-vectors and Response-groups]\label{def:cgroups}
Let $R$ be an element of  the  Cartesian product $\calT^{\calZ}$ and  $R(z)\in \calT$ denote its component for instrument value $z\in\calZ$. 
\begin{itemize}
	\item Observation $i$ has 
	(elemental) {\em response-vector\/}  $R$ if and only if for all $z\in \calZ$, $T_i(z)=R(z)$. The set $C_R$ denotes   the set of observations with response-vector $R$ and we call it a {\em response-group}. 
	\item We extend the definition in the natural way to incompletely specified mappings, where each $R(z)$ is a subset of $\calT$. We call the corresponding response-vectors and response-groups {\em composite}.
	 In particular, if $R(z)=\calT$ we denote it by an asterisk in the corresponding position.
\end{itemize}
\end{definition}

To illustrate, consider  the binary instrument/binary treatment case. It has 
 a priori $2^2=4$ response vectors, $R \in \{00, 01, 10, 11\}$ with corresponding response-groups
$C_{00}, C_{01}, C_{10}, C_{11}$. In this notation, the first number refers to a treatment value with $z=0$ and the second number with $z=1$. For instance, $C_{01}$ refers to those with $T_i(0) = 0$ and $T_i(1) = 1$,  while the composite response-group $C_{\ast 1}$, for which $R(0)=\{0,1\}$,  represents the union of $C_{01}$ and $C_{11}$.
The LATE-monotonicity assumption implies that either $C_{01}$ or $C_{10}$ is empty.

\section{Targeting}\label{sec:targeting}

We start by introducing additional assumptions on the underlying treatment model. 
We will illustrate these assumptions  on   three examples: the ``binary instrument model'' or the ``$2\times T$'' model;     the ``$3\times 3$ model''; and a generalized Roy model.
 We first define them briefly.

\begin{example}[The binary instrument ($2\times T$) model\label{ex:binaryinstrument0}]
$\calT=\{0,1,\ldots,T-1\}$ and $\calZ=\{0,1\}$.  This could for instance represent an intent-to-treat model, where agents in the control group $Z=0$ are not treated  ($T=0$) and agents with $Z=1$ self-select the  type of the treatment $T\geq 1$ or opt out altogether ($T=0$). $\qed$
		\end{example}

		When $\abs{\calT}=3$, treatment assignment  can be 
represented in 
the $(u_{i1}-u_{i0},u_{i2}-u_{i0})$ plane. The points of coordinates  $P_z=(U_z(0)-U_z(1),U_z(0)-U_z(2))$ 
play an important role as for a given $z$, 
\begin{itemize}
	\item $T_i(z)=0$ to the south-west  of $P_z$;
	\item $T_i(z)=1$ to the right of $P_z$ and below the diagonal that goes through it;
			\item $T_i(z)=2$ above $P_z$ and above the diagonal that goes through it.
\end{itemize}
Treatment assignment is illustrated in  Figure~\ref{fig:treatT3} for a given $z$,  
where the origin is in $P_z$.
We will make recurrent use of this type of figure.

\begin{example}[$3\times 3$  model\label{ex:ternaryternary}]
Assume that $\calZ=\{0,1,2\}$ and $\calT=\{0,1,2\}$.  
As a leading example, \citet{kirkeboen2016} investigate the $3\times 3$  model in order to
analyze the effect of students' choice of field  of study  on their earnings;  each  instrument value shifts the eligibility of a student for a given field. We will return to this application  in Section~\ref{sub:klm}.
\end{example}

 Finally, our framework also includes multivalued generalized Roy models (see \cite{EisenhauerHV:2015}).
\begin{example}[A Generalized Roy Model\label{ex:galroy}]
Suppose that agents choose occupations $t=0,\ldots,\abs{\calT}-1$ on the basis  of their expected  wages $w_i(t)=\bar{w}(t)+\eta_{it}$, net of labor disutilities $d_i(z,t)=\bar{d}_z(t)+ v_{it}$ which depend on the values of the instruments:
\[
T_i(z)= \arg\max_{t=0,\ldots,\abs{\calT}-1}(w_i(t) -d_i(z,t)).
\]
Potential wages are $Y_i(t)=w_i(t)+\varepsilon_{it}$.
We observe $Z_i$, the chosen occupation $T_i=T_i(Z_i)$, and  realized wages $Y_i=Y_i(T_i)$. If the vector of variables $\{(\eta_{it}, v_{it})\}_{t=0}^{\abs{\calT}-1}$ is independent of $Z_i$, this is an ARUM  model with $U_z(t)=\bar{w}(t)-\bar{d}_z(t)$ and $u_{it}=\eta_{it}-v_{it}$.
\end{example}

\begin{figure}[htb]
\begin{tikzpicture}[scale=0.5] 
\draw [blue!20!white, fill=blue!20!white] (0,0) -- (5,5) -- (5,-5) -- (0,-5);
\draw [green!20!white, fill=green!20!white] (0,0) -- (5,5)  -- (-5,5) -- (-5,0);
\draw [red!20!white, fill=red!20!white] (0,0) -- (0,-5)  -- (-5,-5) -- (-5,0);

\draw[->,>=latex] (-5,0)  -- (5,0) node[below] {$u_{i1}-u_{i0}$};
\draw[->,>=latex] (0,-5)  -- (0,5) node[above] {$u_{i2}-u_{i0}$};
\draw[-] (0,0)  -- (5,5);
		\fill[blue] (0.0,0.0) circle (.1cm);
	\node[blue] at (0.3,-0.3) {$P_z$};
\node[blue] at (-3,-3) {$T_i(z)=0$};
	\node[blue] at (3,-3) {$T_i(z)=1$};
\node[blue] at (-3,3) {$T_i(z)=2$};
\end{tikzpicture} 
\caption{Treatment assignment for $\abs{T}=3$ for given $z$}\label{fig:treatT3}
\end{figure}

 ``Targeting'' will be the common thread in our analysis. Just as in general economic discussions a policy measure may target a particular outcome, we will speak of instruments (in the econometric sense) targeting the assignment to a particular treatment. 

Under  \Cref{assn:ARUM}, assignment to treatment is governed by the differences in mean values
$(U_z(t)-U_z(\tau))$  and by the differences in unobservables $u_{it}-u_{i\tau}$. Only the former depend on the instrument. 
From now on, we assume that there is a reference treatment value $t_0$ whose mean utility does not depend on the value of the instrument:

\begin{assumption}[Reference Treatment]\label{assn:ref:treat}
There exists $t_0\in\calT$ such that $z\in \calZ \to U_z(t_0)$ is constant.
Without loss of generality, we renumber treatment values so that $t_0=0$; and we normalize utilities with $U_z(0)=0$ for all $z\in\mathcal{Z}$.
\end{assumption}

In many applications, $t=0$ is a ``no-treatment'' value, and instruments only change the mean utilities of the other treatments. For instance, tuition subsidies, investment credits, and invitations to training programs have no effect for those who do  not attend college, do not invest, or choose not to train.  \Cref{assn:ref:treat} seems 
natural in such cases\footnote{In the generalized Roy model (Example~\ref{ex:galroy}), it holds if the disutility of occupation~0 does not depend on the values of the instruments.}. For a counter-example, consider a program of unconditional cash transfers with different values $z$, for which we observe the purchases $t$ of several categories of goods  the following month. If a household decides to save the transfer ($t=0$), its mean (discounted) utility will still depend  on the value of the transfer $z$ that it received\footnote{In that case one could define targeting with the function $\tilde{U}_z(t)= U_z(t)-U_z(0)$. We have not explored the consequences of this alternative definition.}. 


Given Assumption~\ref{assn:ref:treat}, we will say that an instrument value $z$ {\em targets\/} a treatment value  $t$ if it maximizes the mean utility  $U_z(t) - U_z(0)=U_z(t)$.

\begin{definition}[Targeted  Treatments and Targeting Instruments]
    \label{def:tst}
    For any $z\in \calZ$ and $t\in\calT$, let 
 \[
 \bar{U}(t)  \equiv \max_{z\in \calZ} U_z(t)  \; \mbox{ and } \; Z^\ast(t) \equiv \arg\max_{z\in \calZ} U_z(t)
 \]
denote the maximum value of $U_z(t)$ over $z\in \calZ$ and 
 the set of maximizers, respectively.
 If  $Z^\ast(t)$ is not all of $\calZ$, then we will say that the instrument values $z\in Z^\ast(t)$ {\it target\/} treatment value $t$; and we write $t\in T^\ast(z)$. If $z$ does not target any treatment, we define $T^\ast(z)=\emptyset$.
We denote   $\calTs$ the set of
  targeted treatments and 
  $\calZs = \cup_{t\in\calTs} \;  Z^\ast(t)$ the set of targeting instruments. 
\end{definition}

Definition~\ref{def:tst} calls for several remarks. First, Assumption~\ref{assn:ref:treat} implies that   $Z^\ast(0)=\calZ$.  Therefore $t=0$ is not in $\calTs$; the set $\calTs$ may exclude other treatment values, however.
If a treatment value $t$ is not targeted  ($t\not\in \calTs$), by definition the function $z\to U_z(t)$ is constant over $z\in\calZ$, with value $\bar{U}(t)$.
If an instrument value $z$ does not target any treatment ($z\not\in \calZs$), then  $U_z(t)<\bar{U}(t)$ for every $t\in\calTs$.
 While non-targeted treatment values ($t\in \calTnotTs$) have mean values that do not respond to changes in the instruments, these mean values may and in general will differ across treatments. The probability that an individual observation takes a treatment $t \in \calTnotTs$ also generally depends on the value of the instrument.

It is important to note here that the  values $U_z(t)$ and therefore the targeting maps $Z^\ast$ and $T^\ast$ are not observable; any assumption on targeting instruments and targeted treatments must be a priori and   context-dependent. These prior assumptions  have  consequences that can be tested, however.   As a simple example, $z=1$ targets $t=1$ in the $2\times 2$ model under monotonicity; this implies that $P(1\vert 1)\geq P(1\vert 0)$. We will show that targeting always yields linear inequalities on generalized propensity scores in the class of models defined by Assumptions~\ref{assn:ARUM} and~\ref{assn:ref:treat}.

\subsection{Top targeted treatments and  top alternative treatments}

\begin{definition}[Top targeted and   top alternative treatments]\label{def:top-targets}
	Take any observation $i$ in the population, and an instrument value $z\in\calZ$.
\begin{enumerate}[label=(\roman*)]
\item
	If $z$ is a targeting instrument (that is, $z \in \calZs$), let 
	\[
		V^\ast_i(z) = \max_{t\in \calTs(z)} (\bar{U}(t)+u_{it})
	\]
	and $t^{\ast}_i(z)$ denote the  maximizer, which is unique almost surely. 
	We call $t^\ast_i(z)$ the {\it top targeted treatment} for observation $i$ under instrument value $z$. 
    
    If $z\not\in\calZs$, we define $V^\ast_i(z)=-\infty$.  
\item Also  define
	\[
        \underline{V}_i(z) = \max_{t\in \calT\setminus\calTs(z)} 
        \left(U_z(t) + u_{it}\right)
	\]
 and  let $\underline{t}_i(z)$ denote the almost surely unique  maximizer. 
We call $\underline{t}_i(z)$ the {\it top alternative treatment} for observation $i$ under $z$.
\end{enumerate}	
\end{definition}

By definition, the treatment $T_i(z)$ assigned to $i$ under $z$  can only be the top targeted treatment  $t^\ast_i(z)$ (if $V^\ast_i(z)>\underline{V}_i(z)$) or the top alternative treatment $\underline{t}_i(z)$ (if $V^\ast_i(z)<\underline{V}_i(z)$). If $z$ is not a targeting instrument, then $T_i(z)$ must be $\underline{t}_i(z)$. 

Definition~\ref{def:top-targets} could be extended to broader settings than our ARUM model. On the other hand, the ARUM structure interacts with our definition of targeting to constrain treatment assignment in interesting ways.
Suppose that an instrument value $z$ targets a treatment value $t$: $U_z(t)=\bar{U}(t)$. If  some observation $i$ is assigned a treatment value $T_i(z)=t^\prime\neq t$, we must have
\[
\bar{U}(t)+u_{it} < U_z(t^\prime)+u_{it^\prime}.
\]
Now suppose that an instrument value $z^\prime$  targets this $t^\prime$; then 
 $U_z(t^\prime)\leq U_{z^\prime}(t^\prime)=\bar{U}(t^\prime)$, so that
\begin{equation}\label{eq:tzt}
U_{z^\prime}(t)+u_{it}\leq  \bar{U}(t)+u_{it} < U_z(t^\prime)+u_{it^\prime}\leq U_{z^\prime}(t^\prime)+u_{it^\prime}
\end{equation}
and $T_i(z^\prime)\neq t$. If $t^\prime=0$, we can use Assumption~\ref{assn:ref:treat} to get a stronger implication: since $U_z(0)=U_{z^{\prime\prime}}(0)=0$ for all $z^{\prime\prime}$,  \eqref{eq:tzt} holds for all $z^{\prime\prime}$ and $T_i(z^{\prime\prime})\neq t$. We summarize this in \Cref{prop:csq-target}.

\begin{proposition}[Consequences of Targeting]\label{prop:csq-target}
If $z$ targets $t$ and $T_i(z)=t^\prime\neq t$, 
then under Assumptions 
\ref{assn:ARUM}, and \ref{assn:ref:treat}:
\begin{enumerate}
\item[(i)]  for all $z^\prime$ that target $t^\prime$, we have $T_i(z^\prime)\neq t$; \
\item[(ii)] if $t^\prime=0$, then $T_i(z^{\prime\prime})\neq t$ for all $z^{\prime\prime}\in\calZ$.
\end{enumerate}
\end{proposition}


\Cref{prop:csq-target}~(i) implies that the events $T_i(z)=t^\prime$ and $T_i(z^\prime)=t$ are mutually exclusive. Hence, the sum of their probabilities cannot exceed one:
\begin{equation}
P(t^\prime\mid z) + P(t\mid z^\prime)\leq 1 \quad \text{if } \;
z \text{ targets } t   \text{ and }   z^\prime \text{ targets } t^\prime\neq t.
\label{ineq:targeting1}
\end{equation}
Similarly, \Cref{prop:csq-target}~(ii) implies that the event $T_i(z)=0$ is disjoint from the event that $T_i(z'')=t$ for \emph{any} $z'' \in \calZ$. Since the probability of this latter event is at least $\max_{z''} P(t \mid z'')$, we obtain the following implication:
\begin{equation}
P(0\mid z) + \max_{z^{\prime\prime}\in \mathcal{Z}, z^{\prime\prime}\neq z} P(t\mid z^{\prime\prime})\leq 1 \quad \text{if } \;
z \text{ targets } t.
\label{ineq:targeting2}
\end{equation}
%


\subsection{One-to-one Targeting}

Now suppose that each  $z$ consists of a set of (possibly zero or negative) subsidies $S_z(t)$ for treatments $t\in\calT$. If there is a no-subsidy regime $z=0$ with $S_0(t)=0$ for all $t$, it   seems natural to write the mean value  as  $U_z(t)=U_0(t)+S_z(t)$. Then  for any treatment $t$, the set $Z^\ast(t)$ consists of the instrument values $z$ that subsidize $t$ most heavily. 
As this illustration suggests,  the sets $Z^\ast(t)$ may not be singletons, and they may well intersect. 
We now introduce a more restrictive definition that rules out these two possibilities.

\begin{definition}[One-to-one targeting]
Targeting is {\em one-to-one} when both $Z^\ast: \calTs\to\calZs$ and $T^\ast:\calZs\to\calTs$ are functions.    
\end{definition}
Under one-to-one targeting, we will often write ``$z=t$'' if $z$ targets $t$; this is without loss of generality.
Let us illustrate these varieties of targeting on Example~\ref{ex:ternaryternary}.

\medskip

	
	



\begin{table}[htbp]
	{\small
		\centering
		\begin{tabular}{cccc}
		  \toprule
		 & $t=0$ & $t=1$ &  $t=2$  \\
		  \midrule
		$z=0$ &  $0$ & $a$ & $d$ \\
		$z=1$ &  $0$ & $b$ & $e$ \\
		$z=2$ &  $0$ & $c$ & $f$ \\
		   \bottomrule
		\end{tabular}
		\caption{Values of $U_z(t)$ in the $3\times 3$ model}
		\label{tab:ternaryternary:modestargeting}
	}
	\end{table}
	
\noindent
{\bf Example~\ref{ex:ternaryternary} continued.}
 Table~\ref{tab:ternaryternary:modestargeting} shows  the values of $U_z(t)$ in the $3\times 3$ model of Example~\ref{ex:ternaryternary}. Suppose that $t=1$ is targeted; choose some $z$ that targets it and relabel it as $z=1$. This means that 
\[
	b\geq \max(a,c)  \; \mbox{ and } \; b>\min(a,c).
\]
If $t=2$ is also targeted  by some $z\neq 1$,  we relabel this instrument value as $z=2$. 
 This gives
\[
	f\geq \max(d,e)  \; \mbox{ and } \; f>\min(d,e).
\]
Finally, if targeting is one-to-one we have $b>\max(a,c)$ and $f>\max(d,e)$.


\medskip

One-to-one targeting implies some useful restrictions on response-groups.  Take an observation $i$ and $t\in\calTs$. 
Under one-to-one targeting, treatment $t$ is targeted only by $z=t$. 
As a result, by \Cref{prop:csq-target}, 
 if $T_i(t)=t^\prime$ and $t^\prime\in \calTs\setminus \{t\}$ then $T_i(t^\prime)\neq t$; and  if $T_i(t)=0$ then $T_i(z)$ can never equal $t$.
These restrictions are summarized in the following proposition.

\begin{proposition}[Response-groups under one-to-one targeting]\label{proprespvectors1}
Under Assumptions 
\ref{assn:ARUM}, \ref{assn:ref:treat}, and one-to-one targeting, take a targeted treatment  $t\in\calTs$.
\begin{enumerate}[label=(\roman*)]
	\item All response-groups $C_R$ with $R(t)=t^\prime\in\calTs\setminus \{t\}$ and $R(t^\prime)=t$ are empty. 
	\item All response-groups $C_R$ with $R(t)=0$ and $R(z)=t$ for some  $z\neq t$ are empty.
\end{enumerate}
\end{proposition}

\medskip

\noindent
{\bf Example~\ref{ex:ternaryternary} (continued)}
Return to the $3\times 3$ model and to Table~\ref{tab:ternaryternary:modestargeting}. Suppose that both $t=1$ and $t=2$ are targeted.  Under the conditions of Proposition~\ref{proprespvectors1}, we have 
$b>\max(a,c)$ and $f>\max(d,e)$.

 Since the points $P_z$ have coordinates $(-U_z(1),-U_z(2))$,
\begin{itemize}
	\item $P_1=(-b,-e)$ must lie to the left of $P_0=(-a,-d)$ and 
	of $P_2=(-c,-f)$,
	\item $P_2$ must lie below $P_0$ and 
	 $P_1$.
\end{itemize}
This is easily rephrased in terms of the response-vectors of definition~\ref{def:cgroups}. First note that 
in the $3\times 3$ case, there are  a priori $3^3=27$ response-vectors,  $R=000$ to $R=222$, with corresponding response-groups $C_{000}$ to $C_{222}$. Groups $C_{ddd}$ are ``always-takers''\footnote{Observations in group $C_{000}$ are usually called the ``never-takers''. We prefer not to break the symmetry in our notation. We hope this will not cause confusion.} of treatment value $d$. All other groups are ``compliers'' of some kind, in that their treatment changes under some changes in the instrument. We will also pay special attention to some non-elemental groups. For instance, $C_{0\ast 2}$ will denote the group who is assigned treatment 0 under $z=0$ and treatment 2 under $z=2$, and any treatment under $z=1$. That is,
\[
C_{0\ast 2} = C_{002} \cup C_{012} \cup C_{022}.
\]
One-to-one targeting implies the emptiness of four composite groups out of the 27 possible.   For any treatment value $\tau$, Proposition~\ref{proprespvectors1}(ii) rules out  group $C_{10\tau}$ since  this group has $R(1)=0$ and $R(0)=1$. It rules out $C_{\tau 01}$ as $R(1)=0$ and $R(2)=1$. This eliminates the composite groups 
$C_{10\ast}$ and $C_{\ast 01}$. The same argument applies to composite groups  $C_{\ast 20}$ and  $C_{2\ast 0}$, which have $R(2)=0$ and $R(1)=2$ or $R(2)=2$. 

These four composite groups correspond to 10 elemental groups\footnote{Specifically, they are: $C_{100}, C_{101}, C_{102}, C_{001}, C_{201}, C_{020}, C_{120}, C_{220}, C_{200}$, and $C_{210}$.}.
This still leaves us with 17  elemental groups, and potentially  complex assignment patterns.  
Consider for instance Figure~\ref{fig:terterum}. It shows one possible configuration for the $3\times 3$ model; the  positions for $P_0, P_1$ and $P_2$ are  consistent with one-to-one targeting.

\begin{figure}[htb]
	\begin{tikzpicture} [scale=0.5] 
\draw [blue!20!white, fill=blue!20!white] (-5,1.5) -- (-2.5,1.5) -- (1,5) -- (-5,5);
\node[black] at (-2.5,3.5) {$C_{222}$};
\draw [green!20!white, fill=green!20!white] (1.5,-2.5) -- (1.5,-5)  -- (5,-5) -- (5, 1);
\node[black] at (3.5,-3.5) {$C_{111}$};
\draw [pink!20!white, fill=pink!20!white] (1.5,-2.5) -- (0,-2.5)  -- (0,-5) -- (1.5, -5);
\node[black] at (0.7,-3.5) {$C_{110}$};
\draw [yellow!20!white, fill=yellow!20!white] (0,-2.5) -- (0,-5)  -- (-2.5,-5) -- (-2.5, -2.5);
\node[black] at (-1.5,-3.5) {$C_{010}$};
\draw [blue!20!white, fill=blue!20!white] (-2.5,-2.5) -- (-5,-2.5)  -- (-5,-5) -- (-2.5, -5);
\node[black] at (-3.5,-3.5) {$C_{000}$};
\draw [yellow!20!white, fill=yellow!20!white] (-5,1.5) -- (-2.5,1.5)  -- (-2.5,0) -- (-5, 0);
\node[black] at (-3.5,0.7) {$C_{202}$};
\draw [pink!20!white, fill=pink!20!white] (-5,0) -- (-5,-2.5)  -- (-2.5,-2.5) -- (-2.5, 0);
\node[black] at (-3.5,-1.3) {$C_{002}$};
\draw [blue!20!white, fill=blue!20!white] (-2.5,-2.5) -- (0,-2.5)  -- (0, 0) -- (-2.5, 0);
\node[black] at (-1.3,-1.3) {$C_{012}$};
\draw [yellow!20!white, fill=yellow!20!white] (0,0) -- (0,-2.5)  -- (1.5, -2.5) -- (5, 1) -- (5,5) --(0,0);
\node[black] at (2,0.5) {$C_{112}$};
\draw [pink!20!white, fill=pink!20!white] (0,0) -- (5,5)  -- (1, 5) -- (-2.5, 1.5) -- (-2.5,0) --(0,0);
\node[black] at (0.7,2.6) {$C_{212}$};
	\draw[->,>=latex] (-5,0)  -- (5,0) node[below] {$u_{i1}-u_{i0}$};
	\draw[->,>=latex] (0,-5)  -- (0,5) node[above] {$u_{i2}-u_{i0}$};
		\draw[-] (0,0)  -- (5,5);
		\draw[-] (0,0)  -- (0,-5);
		\draw[-] (0,0)  -- (-5,0);
		\draw[-] (1.5,-2.5)  -- (-5,-2.5);
		\draw[-] (1.5,-2.5)  -- (1.5,-5);
		\draw[-] (1.5,-2.5)  -- (5,1);
		\draw[-] (-2.5,1.5)  -- (1,5);
		\draw[-] (-2.5,1.5)  -- (-5,1.5);
		\draw[-] (-2.5,1.5)  -- (-2.5,-5);
		\fill[blue] (0.0, 0.0) circle (.1cm);	          
		\fill[blue] (1.5,-2.5) circle (.1cm);	          
		\fill[blue] (-2.5,1.5) circle (.1cm);	      
	\node[blue] at (0.35,-0.35) {$P_0$};
		\node[blue] at (-2.4,1.8) {$P_1$};
		\node[blue] at (1.2,-2.0) {$P_2$};
	\end{tikzpicture} 
	\caption{A $3\times 3$ example}\label{fig:terterum}
	\end{figure}

   The number of distinct response-groups (ten in this case) and the contorted shape of the $C_{212}$ and $C_{112}$ groups in Figure~\ref{fig:terterum} point to the difficulties we  face in  identifying response-groups without further assumptions. Moreover, this is only one possible configuration: other cases exist, which would bring up other response-groups. 

 \citet[pp.\ 16--20]{heckmanpinto-pdt}, \citet{pinto2021}, and~\cite{kirkeboen2016} also studied the $3\times 3$ model;  they proposed sets of assumptions that identify some treatment effects. The example in
 \citet[pp.\ 16--20]{heckmanpinto-pdt} is rather specific. We show in \Cref{appendix:model:pinto} how  to apply our framework  to  the Moving to Opportunity experiment studied in \citet{pinto2021}. The setup  in~\cite{kirkeboen2016} is most similar to ours; we will return to the differences between our approach and theirs in Section~\ref{subsec:ternary-ternary}. $\qed$

\subsection{Strict Targeting}\label{sub:strict}
Figure~\ref{fig:terterum} suggests that if we could make sure that $P_1$ is directly to the left of $P_0$, the shape of $C_{212}$ would become nicer---and group $C_{202}$ would be empty. Bringing $P_2$ directly   under $P_0$ would have a similar effect. This translates directly into assumptions on the dependence of the $U_z(t)$ on the instruments: the first one imposes
	$d=e$ and the second one imposes 
	$a=c$.   This can be interpreted as policy regime $z=1$ (resp.\ $z=2$) subsidizing treatment $t=1$ (resp.\ $z=2$) only.
To return to the general model, there are applications in which
the instruments $z\in Z^\ast(t)$, which maximize $U_z(t)$, do not shift assignment between the other values of the treatment.
The following definition is a direct extension of this discussion.
	
	\begin{definition}[Strict Targeting of Treatment $t$]\label{def:strict-targeting}
		Take any  targeted treatment value  $t\in\calTs$. It is {\rm strictly targeted} if the function
$z \in \calZ \to U_{z}(t)$
takes the same value for all  instruments that do not target $t$ (the values $z\not\in Z^\ast(t)$). 
 We  denote this common value by $\underline{U}(t)$, and we will say of the instrument values $z\in Z^\ast(t)$  that they {\rm strictly target} $t$.
	\end{definition}

Note that strict targeting only bites  if $\calZ$ contains at least three instrument values. If $\abs{\calZ}=2$ (one binary instrument, as in our Example~\ref{ex:binaryinstrument0}) and say $z=1$ targets $t$, then $\calZ \setminus Z^\ast(t)$ can only consist of  $z=0$ and Assumption~\ref{assn:full-targeting} trivially holds.

    Suppose for instance that the data comes from a randomized experiment, where the instrument value $z=t$ targets treatment $t$. If compliance is imperfect, an individual will trade off the benefits from switching to a treatment $t^\prime\neq t$ with the costs of the effort required. Strict targeting obtains when the cost of switching to $t^\prime$ does not depend on the value of $t$.

Under strict targeting, turning on instrument $z\in Z^\ast(t)$ promotes treatment $t$ without affecting the   mean values  $U_z(t^\prime)$ of other treatment values $t^\prime$.  This explains
our use of the term ``strict targeting''. 
In this  ARUM specification, an instrument in  $Z^\ast(t)$ plays the   same role as a  price discount on  good $t$ in a  model of demand for goods whose mean values only depend on their own prices.  In the language of program subsidies, all $z\in Z^\ast(t)$ subsidize $t$ at the same high rate, and all other instrument values offer the same, lower subsidy.



Finally, we should emphasize that one-to-one targeting and strict targeting are logically independent assumptions: neither one implies the other. Consider  the $3\times 3$ model of Example~\ref{ex:ternaryternary} under one-to-one targeting;  strict targeting only holds for $t=1$ if $a=c$, and for $t=2$ if $d=e$. On the other hand, the $3\times 3$ model with $b>a=c$ and $e>d=f$ satisfies strict targeting but not one-to-one targeting, as $z=1$ targets both $t=1$ and $t=2$.

\subsubsection{Consequences of Strict Targeting}\label{sub:csqces-strict-targeting}

Now consider  the general model. If a treatment $t$ is strictly targeted, then $U_z(t)$ can only take one of two values: $\bar{U}(t)$ if z targets $t$, and $\underline{U}(t)$ otherwise. By definition, if $t$ is not targeted then the value of $U_z(t)$ does not depend on $z$; we also denote it $\underline{U}(t)$. 
We will assume in this subsection that all targeted treatments are strictly targeted:
\begin{assumption}[Full strict targeting]\label{assn:full-targeting}
 If $t$ is in $\calTs$, then $t$ is strictly targeted.
\end{assumption}

Under full strict targeting, the values of $U_z(t)$ are given in Table~\ref{tab:uxt:strict}. 
The functions $\underline{V}_i$ and $\underline{t}_i$ that define the top alternative treatment for a given observation  are constant over  $\calZ\setminus\calZs$. The assigned treatment $T_i(z)$ is $\underline{t}_i$ for all such values of $z$, as well as for $z\in\calZs$ if  
 \[
 V^\ast_i(z)=\max_{t\in\calTs(z)} (\bar{U}_t+u_{it}) < \underline{V}_i=\max_{t\not\in\calTs(z)} (\underline{U}_t+u_{it});
 \]
it is  the maximizer  $t^\ast_i(z)$ of $V^\ast_i(z)$ otherwise.

\begin{table}[htbp]
	{\small
	\centering
	\begin{tabular}{ccc}
	  \toprule
	  & $t\in T^\ast(z)$ & $t\not\in T^\ast(z)$\\ \midrule
	 $z\in Z^\ast$ &  $\bar{U}(t)$ & $\underline{U}(t)$ \\
	 $z\not\in Z^\ast$ & - & $\underline{U}(t)$ \\ \bottomrule
	\end{tabular}
	\caption{Values of $U_z(t)$ under full strict targeting}
		\label{tab:uxt:strict}
	}
\end{table}

Note that in a sense,  
all instrument values in $\calZnotZs$ are equivalent under universal strict targeting. If $z$ and $z^\prime$ are two such values, then both functions $U_{z}$ and $U_{z^\prime}$ equal $\underline{U}$ on all of $\calT$ and   
     the counterfactual  treatments $T_i(z)$ and $T_i(z^\prime)$  must be  $\underline{t}_i$ for any observation $i$.

\subsection{Universal targeting}\label{sub:universal}

We now impose one-to-one targeting as well as strict targeting for every targeted treatment value:

\begin{assumption}[Universal targeting]\label{assn:universal}
 If $t$ is in $\calTs$, then $t$ is strictly targeted, and it is uniquely targeted by $z=t$.
\end{assumption}
Note that \Cref{assn:universal} implies \Cref{assn:full-targeting}.

\begin{table}[htbp]
    \renewcommand{\arraystretch}{1.2} 
    {\small
    \centering
    \begin{tabular}{ccc}
    \toprule
    \multirow{2}{*}{$z\in Z^\ast$}  & $t=z$ & $t\neq z$ \\ 
    
    \cmidrule(lr){2-3} 
    
      & $\bar{U}(t)$ & $\underline{U}(t)$ \\ \midrule
    
    $z\not\in Z^\ast$ & \multicolumn{2}{l}{$\underline{U}(0)$, $\underline{U}(1)$, \ldots, $\underline{U}(|\mathcal{T}|-1)$}  \\ \bottomrule
    \end{tabular}
    \caption{Values of $U_z(t)$ under universal targeting}
    \label{tab:uxt:strict:1to1}
    }
\end{table}

Under \Cref{assn:universal}, \Cref{tab:uxt:strict}
becomes~\Cref{tab:uxt:strict:1to1} and we have:

\begin{proposition}[Treatment assignment under universal targeting\label{corclasses-strict-11}]
	 Take any observation $i$. Let $A_i$ be the (possibly empty) subset of $\calTs$ over which  $T_i(t)=t$. Then under Assumptions 
     \ref{assn:ARUM} and~\ref{assn:universal},
	\begin{enumerate}
        \item[(i)]  if $z\in \calZnotAi$ then $T_i(z)=\underline{t}_i$.
	 \item[(ii)] $\underline{t}_i$ belongs to the set $\bar{A}_i\equiv A_i\cup (\calT\setminus\calTs)$.
	\item[(iii)]
	 The pair $(A_i,\underline{t}_i)$ defines an elemental response group which we denote $C(A_i,\underline{t}_i)$. The family of sets $\{ C(A,t) \; \vert \; A\subset \calTs, t\in \bar{A} \}$ form a partition of the set of observations.
     \end{enumerate}
\end{proposition}

Note that the $C(A,t)$  notation is just a shortcut: every $C(A,t)$ is an elemental group,  and every elemental group is a $C(A,t)$. If for instance $\acalT=6$, it is just more convenient to write $C(\{1,3\}, 2)$ than to write $C_{212322}$.

If the set $A_i$ is non-empty and  $\underline{t}_i\in A_i$,   the observation  $i$ is what one could call  a {\em strict $A_i$-complier}:   when the value of the instrument moves from $\calZnotA_i$ to a $t\in A_i$, observation $i$   switches from its top alternative treatment $\underline{t}_i$  to the top targeted treatment $t$. In the 3-by-3 model with $\calTs=\{1,2\}$, there are three groups of strict compliers:   $C_{010}=C(\{1\}, 0)$, $C_{002}=C(\{2\},0)$, and $C_{012}=C(\{1, 2\},0)$.

Universal targeting brings us  very close to the main identifying assumption in \citet[Assumption B-2a, p.\ 5006]{HV2007-handbook-2}: the indicator variable  $\uniset(Z=t)$ can be used as the $Z^{[t]}$ in their assumption.  \citeauthor*{HV2007-handbook-2} use their Assumption B-2a to identify the effect of the preferred treatment $t$ relative  to the next-best treatment. Their complier group consists of those   individuals who choose treatment $t$ under $Z=z$ and another treatment under $Z=z^\prime$.  This  can be  a very heterogeneous group, as our examples will show. To paraphrase \citet[p.\ 5013]{HV2007-handbook-2}:  the mean effect of treatment $t$ versus the next best option is a weighted average over $t^\prime\in \calT\setminus\{t\}$ of the effect of treatment $t$ versus treatment $t^\prime$, conditional on $t^\prime$ being the next best option, weighted by the probability that $t^\prime$ is the next best option. In contrast, we seek a complete characterization of all treatment effects that can be identified under this set of assumptions.

\begin{table}[htb!]
    \centering
    \begin{tabular}{ccccc} 
         & $t=0$ & $t\in \mathcal{T}\setminus(\mathcal{T}^\ast\cup\{0\})$ & $t\in\mathcal{T}^\ast$
         & $t^\prime=z^\prime$ \\ \hline
                    $z\not\in \mathcal{Z}^\ast$ & 0 & $\underline{U}_t$ & $\underline{U}_{t}$ & $\underline{U}_{t^\prime}$ \\
         $z^\prime\in\calZs\setminus\{t\}$ & 0 & $\underline{U}_t$ & $\underline{U}_t$ & $\bar{U}_{t^\prime}$  \\
          $z=t\in\calZs$ & 0 & $\underline{U}_t$ & $\bar{U}_t$  & $\underline{U}_{t^\prime}$ \\
    \end{tabular}
    \caption{Universal targeting}
    \label{tab:ust1}
\end{table}

It is easy to show that universal targeting yields  two sets  of testable implications. First, consider any treatment value $t\in\calT$. It is clear from the first row of Table~\ref{tab:ust1} that the mean utilities $U_z(t)$ are the same for  all $z\in\calZ\setminus \calZs$. Since $0\not\in\calZs$, it follows that 
\begin{equation}\label{eq:test:targeting1}
P(t\vert z) =P(t\vert 0) \text{ for every } t\in\calT  \text{ and } z\in\calZ\setminus\calZs.
\end{equation}

Next, let $t\in\mathcal{T}^\ast$ be a targeted treatment value and take an instrument $z^\prime$  that targets a different treatment $t^\prime \neq t$. Comparing the rows of Table~\ref{tab:ust1} reveals the ordering of utilities. When $z=t$, treatment $t$ is boosted; when $z=0$, it is at baseline; when $z=t^\prime$, the rival is boosted (drawing share away from $t$). This implies:
\begin{equation}\label{eq:test:targeting2}
    P(t\vert t)> P(t\vert 0)>P(t\vert t^\prime)  \quad \text{for all } (t, t^\prime) \in (\mathcal{T}^\ast)^2 \text{ with } t \neq t^\prime.
\end{equation}
Combining these inequalities implies that for all targeted instrument values $t\in\mathcal{T}^\ast$, the choice probability function $z \mapsto P(t\mid z)$ is strictly maximized at $z=t$.


Moreover, the ``encouragement design'' assumption of a recent paper by \cite{Bai:Tabord-Meehan:2025:arXiv} is exactly equivalent to universal targeting when (in their notation) $J_0>0$ and $z=0$ is what they call a ``base state''.  They show that encouragement design with a base state generates the following testable implications:
\begin{equation}
    \label{eq:test:targeting3}
P(t\vert z) \leq P(t\vert 0) \text{ if } z \text{ does not target } t.
\end{equation}
and that these implications are sharp: any set of propensity scores that satisfies these inequalities can be rationalized by an ARUM of encouragement design with a base state.



In our framework, combining \eqref{eq:test:targeting1} for $z\in\calZ\setminus\calZs$ and the rightmost inequality in \eqref{eq:test:targeting2} (strict inequality for rival-targeting $z$) yields  \eqref{eq:test:targeting3}. Conversely, the fact that choice probabilities must sum to one ensures that if $P(t \mid z)$ decreases for all non-target treatments, $P(t \mid t)$ must increase, satisfying the leftmost inequality of \eqref{eq:test:targeting2}. Thus, under our assumptions, conditions \eqref{eq:test:targeting1} and \eqref{eq:test:targeting2} exhaust the testable implications of universal targeting.

\section{Identification}\label{sec:identif}

We use the standard counterfactual notation:
$T_i(z)$ and $Y_i(t,z)$ denote  respectively potential treatments and outcomes.  Let $\uniset(A)$ denote the indicator of set $A$.
The validity of the instruments requires the usual exclusion restriction together with appropriate independence conditions.

\begin{assumption}[Valid Instruments]\label{assn:valid}
\begin{itemize}
    \item[(i)]  
    $Y_i(t,z)=Y_i(t)$ for all $(t,z)\in\mathcal{T}\times\mathcal{Z}$.
    \item[(ii)] 
    For all $(t,z)\in\mathcal{T}\times\mathcal{Z}$ and all response groups $C$,
\begin{align*}
    \Pr(i\in C \mid Z_i=z) &= \Pr(i\in C); \\
   \mathbb{E}(Y_i(t) \mid i\in C, Z_i=z)
    &= \mathbb{E}(Y_i(t) \mid i\in C).
\end{align*}
\end{itemize}
\end{assumption}

One could impose a stronger condition than \Cref{assn:valid}(ii), such as the joint independence of $\{ (Y_i(t), T_i(z)): (t,z) \in \mathcal{T}\times \mathcal{Z}\}$ from $Z_i$. However, we adopt the weaker mean-independence assumption for $Y_i(t)$ in \Cref{assn:valid}(ii), as it is sufficient for the results presented in this paper.
\Cref{assn:valid}(ii) can be viewed as a generalization of Assumption 2 in \citet{huber2017sharp}, who only consider the case of binary instruments and binary treatments.
Under \Cref{assn:valid}, we define $T_i := T_i(Z_i)$ and  $Y_i := Y_i(T_i)$.
Throughout the paper, we assume that we observe $(Y_i, T_i,  Z_i)$ for each $i$.

To simplify the exposition, we introduce one more element of notation. For any $z\in\calZ$ and $t\in \calT$, we define the 
 {\em conditional average outcome\/} by
\[
\bar{E}_z(t) \equiv \mathbb{E}(Y_i \uniset(T_i=t) \mid Z_i=z).
\]
For any response-group $C$ and treatment value $t\in\calT$, we define the {\em group average outcome\/} as $\mathbb{E} (Y_i(t) \mid i \in C)$. Our goal is to identify the proportions of each response-group, $\Pr(i\in C)$, and their group average outcomes.

While the conditional average outcome $\bar{E}_z(t)$ is directly identified from the data, the group average outcomes are not; they combine with the group probabilities to form the conditional average outcomes. We will repeatedly use the following identity:

\begin{lemma}[Group- and conditional average outcomes---Theorem T-1 of \citet{heckmanpinto-pdt}]\label{lemma:ezt-identity}
Let Assumption~\ref{assn:valid} hold. Then, for all $(t,z)\in\mathcal{T}\times\mathcal{Z}$,
 \begin{align*}
 P(t|z) &= \sum_{C_R \,\vert\, R(z)=t} \Pr(i\in C_R); \\
 \bar{E}_z(t) &= 
 \sum_{C_R \; \vert \; R(z)=t} \mathbb{E}(Y_i(t) \; \vert i\in C_R)\Pr(i\in C_R).
 \end{align*}
\end{lemma}

If there are $\abs{\calR}$ response-groups, the first equation in Lemma~\ref{lemma:ezt-identity} is a system of $(\abs{\calT}-1) \times \abs{\calZ}$ equations with $(\abs{\calR}-1)$ unknowns. 
Under monotonicity, the binary-binary model $\abs{\calT} =\abs{\calZ}= 2$ generates the standard LATE case, where the response-groups consist of never-takers ($C_{00}$), compliers ($C_{01}$), and always-takers ($C_{11}$). As is well known, the sizes of these three groups are just identified. On the other hand, even under \Cref{assn:universal}, one restriction is required to identify the sizes of the response-groups for the $3 \times 3$ model.\footnote{We prove this in \Cref{pro:DCM:vector:ternaryternary}.} More generally, the degree of underidentification of group sizes tends to increase exponentially with the number of targeted treatments, like the number of response-groups.

Each of our assumptions on targeting reduces the number of response-groups and therefore the degree of underidentification. Take our strongest assumption: universal targeting. Then the set of response-groups $C=C_R$ such that $R(z)=t$ is as enumerated in Proposition~\ref{prop:ident-ufstrict11}: it consists of
\begin{itemize}
	\item all $C(A,t)$ such that $A\subset \calTs \setminus \{z\}$ and $t\in \bar{A} \equiv A \cup (\calT\setminus\calTs)$;
	\item if $t=z$, all $C(A,\tau)$ for $z\in A\subset \calTs$ and $\tau\in \bar{A}$.
\end{itemize}
This gives directly the system of identifying equations.

\begin{proposition}[Identifying equations under universal targeting]\label{prop:ident-ufstrict11}
	Let Assumptions \ref{assn:ARUM}, \ref{assn:ref:treat}, \ref{assn:universal}, and \ref{assn:valid} hold. Then, for all $(z, t)\in \calZ \times \calT$:
	\begin{equation}\label{eq:identuf:strict11}
	\begin{split}
		\bar{E}_z(t) &= \sum_{A\subset \calTs \setminus \{z\}} \uniset(t \in \bar{A}) \mathbb{E}[Y_i(t) \mid i\in C(A,t)]\Pr(i\in C(A,t)) \\
		&\quad + \sum_{A\subset \calTs} \uniset(t=z\in A) \sum_{\tau\in \bar{A}} \mathbb{E}[Y_i(t) \mid i\in C(A,\tau)] \Pr(i\in C(A,\tau)), \\[1em]
		P(t|z) &= \sum_{A\subset \calTs \setminus \{z\}} \uniset(t \in \bar{A}) \Pr(i\in C(A,t)) \\
		&\quad + \sum_{A\subset \calTs} \uniset(t=z\in A) \sum_{\tau\in \bar{A}} \Pr(i\in C(A,\tau)).
	\end{split}
	\end{equation}
\end{proposition}

\subsection{Positive Selection and the Empirical Content of the Model}\label{sub:pos:sel}

We now introduce an identifying assumption that we call {\em positive selection}. It obtains when a function $h$ of the vector of potential outcomes $\bm{Y}_i\equiv (Y_i(0),\ldots,Y_i(\abs{\calT}-1))$ has a lower expectation for a response-group $C$ than for another response-group $C^\prime$:
\begin{equation}\label{eq:h-pos-sel}
\mathbb{E}(h(\bm{Y}_i) \;\vert \;i\in C) \leq
\mathbb{E}(h(\bm{Y}_i) \;\vert \; i\in C^\prime). 
\end{equation}
Two leading examples are (a) $h(\bm{Y}_i)\equiv Y_i(t)$ for some $t$ and (b) $h(\bm{Y}_i)\equiv Y_i(t)-Y_i(t^\prime)$ for some $t\neq t^\prime$. 
In the $2 \times 2$ model, \citet{huber2017sharp} consider a variant of form~(a), which they call mean dominance.
The identifying power of positive selection depends on the context. We will illustrate it using form~(a)---that is, a restriction on the level of a potential outcome across groups---in \Cref{prop:DCM:late:binaryinstrument:PI}, as well as in our application to Head Start in Section~\ref{sec:empirical}.
We explore form~(b)---a restriction on treatment effect differences across groups---in \Cref{cor:IV:3by3}.

Before turning to specific applications, let us discuss the characterization of the empirical content of our framework under positive selection.
A series of papers\footnote{See \citet{balkepearl:97}, \citet{kitagawa:15}, \citet{mourifiewan:17}, \citet{kedagnimourifie:20}, and \citet{sun:23}.} has provided necessary and, in some cases, sufficient conditions for data to be rationalized under an instrument exclusion restriction. 
Most recently, \citet{Bai:Tabord-Meehan:2025:arXiv} characterized the sharp testable implications of what they call ``encouragement design'' under joint independence, that is, when $Z_i$ is independent of $\{(Y_i(t), T_i(z)) : (t,z) \in \calT \times \calZ\}$.  
Because we assume only mean independence in \Cref{assn:valid} and impose positive selection via \eqref{eq:h-pos-sel}, their testable implications involving outcomes $Y_i$ do not directly apply to our framework. However, their sharp restrictions on the generalized propensity scores $P(t\vert z)$ do apply, and we discuss these in our leading examples below. It is worth noting that while they do not impose an ARUM structure, their ``encouragement design'' condition is satisfied in our context only under our strongest assumption: universal targeting.

In the following sections, we characterize 
the empirical content for the restrictions given in 
\eqref{eq:identuf:strict11}, together with a positive selection assumption $\mathbb{E}(Y_i(t) \mid i \in C) \leq \mathbb{E}(Y_i(t) \mid i \in C^\prime)$ for given $C$ and $C^\prime$. 
The primitives of the model are the response-group probabilities $\Pr(i \in C)$ and the group average outcomes $\mathbb{E}( Y_i(t) \mid i \in C)$. 
The restrictions implied by \eqref{eq:identuf:strict11} generate bilinear equality constraints for
$\mathbb{E}( Y_i(t) \mid i \in C)$ and $ \Pr(i \in C)$, while positive selection yields linear inequality constraints for 
$\mathbb{E}( Y_i(t) \mid i \in C)$.
This makes it difficult to obtain simple yet general characterizations. Instead, we present explicit results for specific models in the next two subsections.

In particular, we study the \emph{identification region of local average treatment effects},
\[
\mathbb{E}[Y_i(t')-Y_i(t)\mid i\in C],
\]
for specific pairs $(t,t')$ and subgroups $C$.

 \subsection{The Binary Instrument Model}\label{subsec:bininstr}
 Recall that with a binary instrument, strict targeting is trivially satisfied.

 \subsubsection{Identification Under Universal Targeting}
 Under one-to-one targeting, $z=1$ targets only one instrument value, which we call $t=1$; and targeting is universal.
 \Cref{prop:ident-ufstrict11} can be applied directly and 
the group probabilities are just identified in our \Cref{ex:binaryinstrument0}.
 %
 Proposition~\ref{prop:ident-ufstrict11} gives $2(\abs{\calT}-1)$ independent equations:  for $t\neq 1$,
 \[ 
	P(t\vert 0)= \Pr(i\in C(\emptyset, t)) +\Pr(i\in C(\{1\}, t))   \; \mbox{ and } \; 
	P(t\vert 1)= \Pr(i\in C(\emptyset, t)).
	\]
	Moreover, $C(\emptyset,t)=C_{tt}$ for $t\neq 1$ and $C(\{1\},t)=C_{t1}$ for all $t$.


Note that when $z$ changes from 0 to 1, the only observations that change  treatment are in $C_{t1}$ for $t\neq 1$. Since the corresponding $C_{1t}$ group is empty, there are no ``two-way flows'' and this model satisfies the unordered monotonicity property of \cite{heckmanpinto-pdt}.
Proposition~\ref{pro:DCM:vector:binaryinstrument}  gives  explicit formul\ae\ for  the probabilities of all $(2\abs{\calT}-1)$ response groups.

	\begin{proposition}[Response-group probabilities in \Cref{ex:binaryinstrument0} under universal targeting]\label{pro:DCM:vector:binaryinstrument}
		Under  Assumptions~\ref{assn:ARUM}, \ref{assn:ref:treat},
			\ref{assn:universal}, and \ref{assn:valid},
		 the following probabilities are identified:
		  \begin{align}\label{prob:vectors:ternary/binary}
		  \begin{split}
			 \Pr(C_{11}) &= P(1 \vert 0),  \\
		 \Pr (C_{tt}) &= P(t\vert 1) \mbox{ for } t\neq 1,  \\
		 \Pr (C_{t1}) &= P(t \vert 0)-P(t \vert 1) \mbox{ for } t\neq 1.
		 \end{split}
		 \end{align}		 
		  \end{proposition}

Since $\Pr (C_{t1}) \geq 0$, the model has $(\abs{T}-1)$ simple testable predictions: $P(t\vert 0) \geq  P(t \vert 1)  \mbox{ for } t\neq 1$.
While all the response group probabilities are point-identified, only some group average outcomes are point identified without further restrictions, as shown by Proposition~\ref{pro:DCM:late:binaryinstrument}.
 
 \begin{proposition}[Group average outcomes in \Cref{ex:binaryinstrument0} under universal targeting]\label{pro:DCM:late:binaryinstrument}
	Under  Assumptions~\ref{assn:ARUM}, \ref{assn:ref:treat},
			\ref{assn:universal}, and \ref{assn:valid}, 
 the following group average outcomes are point-identified:
 \begin{align*}
 \mathbb{E} \left[ Y_i(1) \vert  i  \in C_{11} \right]  &= \frac{\bar{E}_0(1)}{ P(1 \vert 0)}, \\
 \mathbb{E} \left[ Y_i(t) \vert  i  \in C_{tt} \right]  &= \frac{\bar{E}_1(t)}{ P( t\vert 1)} \mbox{ for } t\neq 1, \\
 \mathbb{E} \left[ Y_i(t)  \vert  i  \in C_{t1} \right] &= \frac{\bar{E}_0(t) - \bar{E}_1(t)}{P(t \vert 0)-P(t \vert 1)} \mbox{ for } t\neq 1.
 \end{align*}
 However, if $T>2$ the standard Wald estimator only  identifies a convex combination of 
 the LATEs on the complier groups $C_{t1}$:
 \begin{align}
	 \frac{\mathbb{E}(Y_i\vert Z_i=1)-\mathbb{E}(Y_i\vert Z_i=0)}{\Pr(T_i=1 \vert Z_i=1)-\Pr(T_i=1\vert Z_i=0)} &= 
	 \frac{(\bar{E}_1(1)-\bar{E}_0(1))-\sum_{t\neq 1} (\bar{E}_0(t)-\bar{E}_1(t))}{P(1\vert 1)-P(1\vert 0)} \nonumber \\
 &= \sum_{t\neq 1} \alpha_t \mathbb{E} \left[ Y_i(1) -Y_i(t) \vert  i  \in C_{t1} \right], 
 \label{eq:DCM:late:binaryinstrument}
 \end{align}
 where  the weights  $\alpha_t=\Pr(i\in C_{t1}\vert i \in \cup_{\tau\neq 1} C_{\tau 1})=(P(t \vert 0)-P(t \vert 1))/(P(1\vert 1)-P(1\vert 0))$ are identified, positive, and sum to~1. 
If $T=2$, we have $\alpha_0=1$ and the familiar LATE formula
 \[
	 \mathbb{E}(Y_i(1)-Y_i(0)\vert i \in C_{01}) = 
	 \frac{\mathbb{E}(Y_i \vert Z_i=1)-\mathbb{E}(Y_i \vert Z_i=0)}{\Pr(T_i =1 \vert Z_i=1)-\Pr(T_i = 1 \vert Z_i = 0)}.
	 \]
 \end{proposition}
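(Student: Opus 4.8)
The plan is to apply Lemma~\ref{lemma:ezt-identity} together with the enumeration of response-groups that Corollary~\ref{cor:ufclasses-strict-11} provides for this model. First I would record the structure of the response-groups. With $\calZ=\{0,1\}$, $\calTs=\calZs=\{1\}$ and $\bar{t}(1)=1$, the admissible classes $c(A,\tau)$ have $A\subset\{1\}$: the choice $A=\emptyset$ gives the always-takers $A_\tau$ for $\tau\neq 1$, while $A=\{1\}$ gives $A_1$ (when $\tau=1$) and the compliers $C_{\tau 1}$ (when $\tau\neq 1$). The key bookkeeping fact is the ``triangular'' assignment pattern: under $z=1$ every complier $C_{t1}$ takes the targeted treatment $1$, so the only group taking a treatment $t\neq 1$ under $z=1$ is the always-taker group $A_t$; and under $z=0$ the only group taking treatment $1$ is $A_1$.

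Second, I would read off the point-identified quantities from Lemma~\ref{lemma:ezt-identity}. Applying its decomposition of $\bar{E}_z(t)$ group by group: since only $A_t$ takes $t\neq 1$ under $z=1$, $\bar{E}_1(t)=\mathbb{E}(Y_i(t)\vert A_t)\Pr(A_t)$, which together with $\Pr(A_t)=P(t\vert 1)$ yields the always-taker mean for $t\neq 1$; likewise $\bar{E}_0(1)=\mathbb{E}(Y_i(1)\vert A_1)\Pr(A_1)$ with $\Pr(A_1)=P(1\vert 0)$ gives the mean for $A_1$. For the complier means, under $z=0$ treatment $t\neq 1$ is taken by $A_t$ and $C_{t1}$, so $\bar{E}_0(t)=\mathbb{E}(Y_i(t)\vert A_t)\Pr(A_t)+\mathbb{E}(Y_i(t)\vert C_{t1})\Pr(C_{t1})$; subtracting $\bar{E}_1(t)$ cancels the always-taker term and, dividing by $\Pr(C_{t1})=P(t\vert 0)-P(t\vert 1)$, produces $\mathbb{E}(Y_i(t)\vert C_{t1})$.

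Third, for the Wald estimand I would use the second identity in Lemma~\ref{lemma:ezt-identity}, $\mathbb{E}(Y_i\vert Z_i=z)=\sum_{t\in\calT}\bar{E}_z(t)$, to write the numerator as $\sum_{t\in\calT}(\bar{E}_1(t)-\bar{E}_0(t))$, which rearranges at once into the first displayed equality. The crucial cancellation is that, for each $t$, the always-taker contribution to $\bar{E}_1(t)$ and to $\bar{E}_0(t)$ is identical, so every always-taker term drops out: for $t\neq 1$ one gets $\bar{E}_1(t)-\bar{E}_0(t)=-\mathbb{E}(Y_i(t)\vert C_{t1})\Pr(C_{t1})$, while $\bar{E}_1(1)-\bar{E}_0(1)=\sum_{t\neq 1}\mathbb{E}(Y_i(1)\vert C_{t1})\Pr(C_{t1})$ collects the compliers' treated outcomes. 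Summing gives a numerator equal to $\sum_{t\neq 1}\mathbb{E}(Y_i(1)-Y_i(t)\vert C_{t1})\Pr(C_{t1})$. For the denominator, the same enumeration gives $P(1\vert 1)=\Pr(A_1)+\sum_{t\neq 1}\Pr(C_{t1})$ and $P(1\vert 0)=\Pr(A_1)$, hence $P(1\vert 1)-P(1\vert 0)=\sum_{t\neq 1}\Pr(C_{t1})$. Dividing numerator by denominator produces the convex combination with weights $\alpha_t=\Pr(C_{t1})/\sum_{\tau\neq 1}\Pr(C_{\tau 1})$; positivity follows from the testable inequalities $P(t\vert 0)\geq P(t\vert 1)$ of Proposition~\ref{pro:DCM:vector:binaryinstrument}, and they sum to one by construction.

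The computations are elementary linear bookkeeping and there is no deep obstacle. The one step requiring care is the cancellation argument in the Wald decomposition---recognizing that the always-taker contributions to $\mathbb{E}(Y_i\vert Z_i=1)-\mathbb{E}(Y_i\vert Z_i=0)$ vanish identically, leaving only complier terms---since it is precisely this cancellation that turns a raw difference of conditional means into an interpretable weighted average of complier treatment effects.
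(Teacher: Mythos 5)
Your proposal is correct and follows essentially the same route as the paper's own proof: both decompose the conditional average outcomes $\bar{E}_z(t)$ group by group via Lemma~\ref{lemma:ezt-identity}, plug in the group probabilities from Proposition~\ref{pro:DCM:vector:binaryinstrument} to read off the always-taker and complier means, and then derive the Wald expression by the same subtraction in which the always-taker terms cancel, leaving $\sum_{t\neq 1}\mathbb{E}\left[Y_i(1)-Y_i(t)\vert i\in C_{t1}\right]\Pr(i\in C_{t1})$ over $P(1\vert 1)-P(1\vert 0)=\sum_{t\neq 1}\Pr(i\in C_{t1})$. Your added step of spelling out the class enumeration from Corollary~\ref{cor:ufclasses-strict-11} is implicit in the paper but changes nothing of substance.
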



 \medskip

  \Cref{pro:DCM:late:binaryinstrument} shows that 
 we  only identify a known convex combination of the $(\abs{\calT}-1)$  LATEs\footnote{We use the term ``LATEs'' for
	 the average treatment effects on the various complier groups. Throughout the remainder of the paper, we  assume, as is standard, that probability differences appearing in the denominator 
	of estimands are always nonzero.}.
This formula is reminiscent of \citet[Theorem 1]{angrist1995two}, which deals with a different  model in which treatments are ordered.  It is possible to re-derive our identification results in Propositions \ref{pro:DCM:vector:binaryinstrument} and \ref{pro:DCM:late:binaryinstrument} using the general framework of \citet{heckmanpinto-pdt}. We provide details in \Cref{appendix:HP}.
 
 So far, we only imposed restrictions on the process by which treatment values are assigned to observations; this is what  \cite{goff:oa} calls an ``outcome-agnostic'' approach in that it only assumes that the instruments are excluded from the outcome equations.
 It is possible to bound the average treatment effects in a straightforward manner if we assume that the support of the outcomes is known and bounded. 
 One could instead add  restrictions to achieve point identification of average treatment effects for the compliers. Assuming that the ATEs are all equal is one obvious solution. Another one is to assume some degree of  homogeneity of group average outcomes.
 Alternatively, we may consider weaker conditions under which the average treatment effects for the compliers are only partially identified.
 We explore these ideas below.

 \subsubsection{Alternative Identification Constraints}

Consider the binary instrument model  with $T\geq 3$.

\paragraph{Beyond One-to-one Targeting}
First note that the probabilities of the response-groups can be identified under weaker restrictions than one-to-one targeting. 
Suppose for instance that $z=1$ targets all treatment values $t\geq 1$: we have $U_1(t)>U_0(t)$ for all $t\geq 1$. Then the complier groups $C_{t0}$ for $t\geq 1$ must be empty. To see this, suppose that $T_i(0)=t\geq 1$. This implies
$U_0(t)+u_{it}>U_0(0)+u_{i0}=u_{i0}$. Adding up these inequalities gives $U_1(t)+u_{it}>u_{i0}$,
and $T_i(1)$ cannot be $0$. 

All other groups $C_{tt^\prime}$ may exist. This leaves  $\acalT(\acalT-1)$ unknown group probabilities, which is $\acalT/2$ times more than the $2(\acalT-1)$ propensity scores we observe. We need $(\acalT-1)(\acalT-2)$ additional constraints to point-identify all group probabilities.

\paragraph{Single-peaked Mean Utilities}
Now suppose that mean utilities are ``single-peaked'' in the sense 
that  the function $t\to U_1(t)-U_0(t)$   is decreasing 
over $t=1,\ldots,T-1$. This would be  a reasonable assumption if  $z=1$ makes treatment $t=1$ more attractive and
the treatments $t>1$ are ordered by their proximity to $t=1$. 

 
 If this holds, then the same argument as above shows that the response groups $C_{tt^\prime}$ must be empty when $t^\prime>t\geq 1$. This eliminates $(\acalT-1)(\acalT-2)/2$ response groups;  we divided by two the number of additional identification constraints that we need.

\paragraph{Truncated Moment Bounds with Bounded Outcomes}

One alternative approach is to use truncated moment bounds, a method initiated by \citet{horowitz1995identification} and popularized by \citet{lee2009training} in the context of randomized experiments with sample attrition. See also \citet{huber2017sharp} for extensions to the IV setting with binary instruments and binary treatments, and \citet{semenova:2025} for refinements of Lee bounds.

\paragraph{Monotone Treatment Responses.}
It is sometimes natural to assume that treatments are ordered and that potential outcomes are weakly increasing in the treatment level, that is,
$Y_i(t)\ge Y_i(t')$ whenever $t\ge t'$, as in \citet{manski1997}. A weaker variant imposes monotonicity only in expectation within a response group $C$,
$\mathbb{E}[Y_i(t)\mid i\in C]\ge \mathbb{E}[Y_i(t')\mid i\in C]$ for all $t\ge t'$, as in \citet{Marx:2024}\footnote{See Section~4.2.2 of the job market paper version of \citet{Marx:2024}, available at \url{https://www.dropbox.com/scl/fi/dq5hnkwwbsjpluuonpcwf/marx-jmp.pdf?rlkey=1zpwmnx5dgtmre5urp5hdzof5&e=4&dl=0}.} .
These restrictions differ conceptually from our positive-selection assumptions. Monotone treatment response (MTR) constrains outcomes across treatment levels within a fixed group, whereas positive selection compares average outcomes across response groups at a given treatment level. Since our analysis focuses on unordered treatments, we  focus here on selection-based restrictions.


 \paragraph{Positive Selection}

 The binary instrument model gives a first example of the power of the positive selection defined in Section~\ref{sub:pos:sel}. Take $\tau\neq 1$ and consider the  complier groups  $C_{\tau 1}$: they all have $t=1$ when $z=1$, but they shift to it from different treatment values  $\tau$ under $z=0$.   Depending on the context, there may be a plausible reason to order   the corresponding group average outcomes when $t=1$. Suppose for instance that $T=3$, and that
\begin{equation}\label{eq:pos:sel23}
	\mathbb{E} \left[ Y_i(1)  \vert  i  \in C_{01} \right] \leq \mathbb{E} \left[ Y_i(1)  \vert  i  \in C_{21} \right].
	\end{equation}
In this $2\times 3$ model under universal targeting, there are five response-groups, as illustrated in \Cref{fig:ternary/binary}. Proposition~\ref{pro:DCM:late:binaryinstrument} shows that  the Wald estimator only identifies 
 \[
 \alpha_0 \mathbb{E} \left[ Y_i(1) -Y_i(0) \vert  i  \in C_{01} \right]+ (1-\alpha_0) \mathbb{E} \left[ Y_i(1) -Y_i(2) \vert  i  \in C_{21} \right],
	\]
	where $\alpha_0=(P(0 \vert 0)-P(0 \vert 1))/(P(1\vert 1)-P(1\vert 0))$ is point-identified.  
\Cref{prop:DCM:late:binaryinstrument:PI} shows that adding inequality~\eqref{eq:pos:sel23} yields   bounds on the corresponding LATEs.

 \begin{figure}[htb]
	 \begin{tikzpicture}[scale=0.5]  
 \draw [blue!20!white, fill=blue!20!white] (-5,0) -- (-4,0) -- (-4,-5) -- (-5,-5);
 \draw [green!20!white, fill=green!20!white] (-5,0) -- (-4,0) -- (1,5) -- (-5,5);
 \draw [blue!20!white, fill=blue!20!white] (0,0) -- (5,5) -- (5,-5) -- (0,-5);
 \draw [yellow!20!white, fill=yellow!20!white] (0,0) -- (5,5) -- (1,5) -- (-4,0);	
 \draw [red!20!white, fill=red!20!white] (0,0) -- (0,-5) -- (-4,-5) -- (-4,0);

	 \draw[->,>=latex] (-5,0)  -- (5,0) node[below] {$u_{i1}-u_{i0}$};
	 \draw[->,>=latex] (0,-5)  -- (0,5) node[above] {$u_{i2}-u_{i0}$};
	 \draw[dashed,red] (0,0) -- (5,5);
				 \draw[dashed,red] (-4,0) -- (1,5);
				 \draw[dashed,red] (-4,0)  -- (-4,-5);
				 \node[red] at (-1.5,-1.5) {$C_{01}$};
 \node[red] at (-4.5,-1.5) {$C_{00}$};
 \node[red] at (-3,3) {$C_{22}$};
 
 \node[red] at (2.5,-1.5) {$C_{11}$};
 \node[red] at (1,2.5) {$C_{21}$};
		 \fill[blue] (0.0, 0.0) circle (.1cm);	          
		 \fill[blue] (-4.0,0.0) circle (.1cm);	          
	 \node[blue] at (0.4,-0.4) {$P_0$};
		 \node[blue] at (-4.2,0.4) {$P_1$};
	 \end{tikzpicture} 
	 \caption{A $2\times 3$ model  with one  targeted treatment}\label{fig:ternary/binary}
	 \end{figure}

\begin{proposition}[Positive selection and treatment effects in the $2\times 3$ model under universal targeting]\label{prop:DCM:late:binaryinstrument:PI}
If 
\begin{align}\label{assn:positivesel:ineq:new}
\mathbb{E}\left[ Y_i(1) \mid i \in C_{01} \right] \leq \mathbb{E}\left[ Y_i(1) \mid i \in C_{21} \right],
\end{align}
then the sharp identified region for the local average treatment effects for $C_{01}$ and $C_{21}$ 
\[
L_{01} := \mathbb{E}\!\left[ Y_i(1) - Y_i(0) \mid i \in C_{01} \right] 
 \; \text{ and } \;
 L_{21} :=\mathbb{E}\!\left[ Y_i(1) - Y_i(2) \mid i \in C_{21} \right]  
\]
 is the half-line in the $(L_{01},L_{21})$ plane with equation 
\begin{multline}\label{half-line-eq}
L_{01} \{ P(0 \mid 0) - P(0 \mid 1) \} + L_{21} \{ P(2 \mid 0) - P(2 \mid 1) \}  \\
=(\bar{E}_1(1) - \bar{E}_0(1)) -(\bar{E}_0(0) - \bar{E}_1(0))
-(\bar{E}_0(2) - \bar{E}_1(2))    
\end{multline}
limited by
\begin{align}\label{iden:late:partial:2by3:new}
\begin{split}
 L_{01}    &\leq    \frac{\bar{E}_1(1) - \bar{E}_0(1)}
{P(1 \mid 1) - P(1 \mid 0)} 
- 
\frac{\bar{E}_0(0) - \bar{E}_1(0)}
{P(0 \mid 0) - P(0 \mid 1)}, \\
L_{21} &\geq 
\frac{\bar{E}_1(1) - \bar{E}_0(1)}
{P(1 \mid 1) - P(1 \mid 0)} 
- 
\frac{\bar{E}_0(2) - \bar{E}_1(2)}
{P(2 \mid 0) - P(2 \mid 1)}.
\end{split}
\end{align}
If \eqref{assn:positivesel:ineq:new} holds with equality, then the two inequalities in \eqref{iden:late:partial:2by3:new} hold at equality as well, and $L_{01}$ and $L_{21}$ are point identified.
\end{proposition}

The bounds on the local average treatment effects for $C_{01}$ and $C_{21}$ can be estimated directly from sample averages and are sharp in the sense that they incorporate all model restrictions.  
Specifically, the characterization relies only on two types of constraints:  
(i) the linear equality constraint in \eqref{half-line-eq}, 
which corresponds to the special case of the second equality in \eqref{eq:DCM:late:binaryinstrument} from \Cref{pro:DCM:late:binaryinstrument}, and  
(ii) the linear inequality constraint implied by positive selection in \eqref{assn:positivesel:ineq:new}.  
Accordingly, the bounds in \eqref{iden:late:partial:2by3:new} constitute the explicit solutions to the associated linear programming problems.  
This explicit sharp characterization is possible because the response-group probabilities are point identified, as established in \Cref{pro:DCM:vector:binaryinstrument}.


\subsection{The $3\times 3$ Model}\label{subsec:ternary-ternary}
Let us now 
turn to  the $3\times 3$ model of Example~\ref{ex:ternaryternary}, where 
$\calZs=\calTs = \{ 1, 2\}$ and $\calZ=\calT = \{0, 1, 2\}$.
We  assume universal targeting: for all of our results in this section,  we impose Assumptions
\ref{assn:ARUM}, \ref{assn:ref:treat}, \ref{assn:universal}, and \ref{assn:valid};
$z=1$ targets $t=1$ and $z=2$ targets $t=2$.

 The set $A$ in \Cref{corclasses-strict-11} can be $\emptyset, \{1\}, \{2\}$, or $\{1,2\}$, with corresponding values of $t$ in $\{0\}, \{0,1\}, \{0,2\}$ or $\{0,1,2\}$ respectively. The set $c(\emptyset,0)$ corresponds to the never-takers  $C_{000}$. For $A=\{1\}$ we get $C_{010}$ and $C_{111}$, and for $A=\{2\}$ we get $C_{002}$ and $C_{222}$. Finally, with $A=\{1,2\}$ we have $C_{012}, C_{112}$, and $C_{212}$.

\begin{figure}[htb]
	\begin{tikzpicture}[scale=0.5]  
		\def\dotpone{-3.5}
\draw [blue!20!white, fill=blue!20!white] (0,-5) -- (0,-4) -- (5,1) -- (5,-5);
\draw [green!20!white, fill=green!20!white] (-5,0) -- (\dotpone,0) -- (1,5) -- (-5,5);
\draw [orange!20!white, fill=orange!20!white] (0,0) -- (5,5) -- (5,1) -- (0,-4);
\draw [yellow!20!white, fill=yellow!20!white] (0,0) -- (5,5) -- (1,5) -- (\dotpone,0);	
\draw [red!20!white, fill=red!20!white, opacity=0.5] (0,0) -- (0,-4) -- (-5,-4) -- (-5,0);
\draw [blue!20!white, fill=blue!20!white, opacity=0.5] (0,0) -- (\dotpone,0) -- (\dotpone,-5) -- (0,-5);	
\draw [green!20!white, fill=green!20!white] (-5,-4) -- (-3.5,-4) -- (-3.5,-5) -- (-5,-5);

	\draw[->,>=latex] (-5,0)  -- (5,0) node[below] {$u_{i1}-u_{i0}$};
	\draw[->,>=latex] (0,-5)  -- (0,5) node[above] {$u_{i2}-u_{i0}$};
	\draw[dashed,red] (0,0) -- (5,5);
		\draw[dashed,red] (0,-4) -- (5,1);
				\draw[dashed,red] (\dotpone,0) -- (1,5);
		\draw[dashed,red] (-5,-4)  -- (0,-4);
				\draw[dashed,red] (\dotpone,0)  -- (\dotpone,-5);
				\node[red] at (-1.5,-1.5) {$C_{012}$};
\node[red] at (-1.5,-4.5) {$C_{010}$};
\node[red] at (-4.5,-1.5) {$C_{002}$};
\node[red] at (-4.5,-4.5) {$C_{000}$};
\node[red] at (3,-4) {$C_{111}$};
\node[red] at (-3,3) {$C_{222}$};

\node[red] at (2.5,0.5) {$C_{112}$};
\node[red] at (1,2.5) {$C_{212}$};
		\fill[blue] (0.0, 0.0) circle (.1cm);	          
		\fill[blue] (\dotpone,0.0) circle (.1cm);	          
		\fill[blue] (0.0,-4.0) circle (.1cm);	      
	\node[blue] at (-0.4,0.4) {$P_0$};
		\node[blue] at (-3.7,0.4) {$P_1$};
		\node[blue] at (0.4,-4.4) {$P_2$};

	\end{tikzpicture} 
	\caption{Strictly one-to-one targeted treatment in the $3\times 3$ model}\label{fig:ternary-ternary}
	\end{figure}

These eight elemental response groups are illustrated in Figure~\ref{fig:ternary-ternary}, again with the origin in $P_0$.
    Comparing Figure~\ref{fig:ternary-ternary} with 
Figure~\ref{fig:terterum} shows the identifying power of Assumption~\ref{assn:full-targeting}.  Table~\ref{tab:map:ternary:ternary} shows which groups  take $T_i=t$ when $Z_i=z$.

\begin{table}[htbp]
    \caption{\label{tab:map:ternary:ternary} Response Groups of \Cref{ex:ternaryternary}}\centering\medskip
{\footnotesize
        \begin{tabular}{lccc}  \toprule
       & $T_i(z)=0$  & $T_i(z)=1$  & $T_i(z)=2$ \\  \midrule \\ \vspace{3mm}
$z = 0$ & $C_{000} \cup C_{010} \cup C_{002} \cup C_{012}$ 
 & $C_{111} \cup C_{112}$ & $C_{222} \cup C_{212}$ \\  \vspace{3mm}
$z = 1$ & $C_{000} \cup C_{002}$ & $C_{111} \cup C_{010} \cup C_{012} \cup C_{112} \cup C_{212}$ & $C_{222}$ \\  \vspace{3mm}
$z = 2$ & $C_{000} \cup C_{010}$ & $C_{111}$ & $C_{222} \cup C_{002} \cup C_{012} \cup C_{112} \cup C_{212}$\\  
\bottomrule \end{tabular}
}
\end{table}

Unlike the $2\times 3$ model, even under strict one-to-one targeting the $3\times 3$ model does not satisfy unordered monotonicity. One could show it with the matrix algebra in \cite{heckmanpinto-pdt}.\footnote{See \Cref{appendix:HP} for details.} It is more straightforward to note that when the instrument value changes from $z=1$ to $z=2$, observations in $C_{010}$ move to treatment value~0, while observations in $C_{002}$ leave treatment~0. This is the definition of a two-way flow, which   violates unordered monotonicity. 
Since the $3\times 3$  model has three instrument values and only two targeted treatments,   \citet[Example~4.7]{baietal:monotonicityaverage} shows that it satisfies their weaker general monotonicity assumption. As a consequence, the average potential outcomes $\mathbb{E} [Y_i(d)]$ can only be restricted by identification at infinity arguments.


\subsubsection{Identification in the $3\times 3$ Model}

We know  that 
one  restriction is missing to point-identify  the probabilities of all eight response-groups. 
The following proposition shows that the probabilities of four of the eight elemental groups are point-identified: two groups of always-takers, and two groups of compliers.  The other four probabilities are constrained by three adding-up constraints.

\begin{proposition}[Response-group probabilities in the  $3\times 3$ model under universal targeting]\label{pro:DCM:vector:ternaryternary}
The following four probabilities are identified:
$\Pr (C_{111}) = P(1 \vert 2)$, 
$\Pr (C_{222}) = P(2 \vert 1)$, 
$\Pr (C_{112}) = P(1 \vert 0)-P(1 \vert 2)$,  
and
$\Pr (C_{212}) = P(2 \vert 0)-P(2 \vert 1)$.
The remaining four response group probabilities are partially-identified and
can be parameterized as:
                $\Pr(C_{000}) = p$,
                $\Pr(C_{002}) = P(0\vert 1) - p$,
                $\Pr(C_{010}) = P(0\vert 2) - p$, and
                 $\Pr(C_{012}) = P(0\vert 0) - P(0\vert 1) - P(0\vert 2) + p$, 
			  where the unknown $p$ satisfies
                    $\max\{ 0, P(0\vert 1) + P(0\vert 2) - P(0\vert 0)\} \leq    p \leq \min \{ 1, P(0\vert 1),P(0\vert 2) \}$.
\end{proposition}

As before, the model has the following testable implications: 
	$P(1 \vert 1) \geq P(1 \vert 0) \geq  P(1 \vert 2)$,
	$P(2 \vert 2) \geq P(2 \vert 0) \geq  P(2 \vert 1)$, and
	$P(0 \vert 0) \geq \max(P(0 \vert 1), P(0 \vert 2))$.
It follows from \citet[Example 3.3]{Bai:Tabord-Meehan:2025:arXiv} that these implications are sharp.

The following proposition identifies a number of group average outcomes\footnote{Again, these could also be derived using the general framework of \citet{heckmanpinto-pdt}, even though the unordered monotonicity assumption is not satisfied---see \Cref{appendix:HP}.}. 

\begin{proposition}[Group average outcomes in the $3\times 3$ model under universal targeting]\label{pro:DCM:3by3}
The following group average outcomes are point-identified:
\begin{align*}
\mathbb{E} \left[ Y_i(0) \vert  i  \in C_{000} \cup C_{002} \right]  &= \frac{\bar{E}_1(0)}{ P(0 \vert 1)}, &
\mathbb{E} \left[ Y_i(0) \vert  i  \in C_{000} \cup C_{010} \right]  &= \frac{\bar{E}_2(0)}{ P(0 \vert 2)}, \\
\mathbb{E} \left[ Y_i(1) \vert  i  \in C_{111} \right]  &= \frac{\bar{E}_2(1)}{ P(1 \vert 2)}, &
\mathbb{E} \left[ Y_i(2) \vert  i  \in C_{222} \right]  &= \frac{\bar{E}_1(2)}{ P(2 \vert 1)}, \\
\mathbb{E} \left[ Y_i(0)  \vert  i  \in C_{010} \cup C_{012} \right] &= \frac{\bar{E}_0(0) - \bar{E}_1(0)}{P(0 \vert 0)-P(0 \vert 1)}, &
\mathbb{E} \left[ Y_i(0)  \vert  i  \in C_{002} \cup C_{012} \right] &= \frac{\bar{E}_0(0) - \bar{E}_2(0)}{P(0 \vert 0)-P(0 \vert 2)}, \\
%
%
\mathbb{E} \left[ Y_i(1)  \vert  i  \in C_{010} \cup C_{012}  \cup C_{212} \right] &= \frac{\bar{E}_1(1) - \bar{E}_0(1)}{P(1 \vert 1)-P(1 \vert 0)}, &
%
\mathbb{E} \left[ Y_i(1) \vert  i  \in C_{112} \right]  &= \frac{ \bar{E}_0(1) - \bar{E}_2(1)}{P(1 \vert 0)-P(1 \vert 2)}, \\
\mathbb{E} \left[ Y_i(2)  \vert  i  \in C_{002} \cup C_{012} \cup C_{112} \right] &= \frac{\bar{E}_2(2) - \bar{E}_0(2)}{P(2 \vert 2)-P(2 \vert 0)}, &
\mathbb{E} \left[ Y_i(2) \vert  i  \in C_{212} \right]  &= \frac{ \bar{E}_0(2) - \bar{E}_1(2)}{P(2 \vert 0)-P(2 \vert 1)}.
\end{align*}
\end{proposition}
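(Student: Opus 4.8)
The plan is to reduce every claimed identity to a direct application of \Cref{lemma:ezt-identity} --- which expresses each observed conditional average outcome $\bar{E}_z(t)$ as a probability-weighted sum of the unobserved group-level counterfactual averages $\mathbb{E}(Y_i(t)\vert i\in C)$ over the response groups $C$ that take treatment $t$ under instrument value $z$ --- together with the group probabilities already identified in \Cref{pro:DCM:vector:ternaryternary}. The bookkeeping of which groups enter each $\bar{E}_z(t)$ is read directly off the columns of \Cref{tab:map:ternary:ternary}.

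First I would dispose of the two always-taker averages, which need no differencing. By \Cref{tab:map:ternary:ternary}, the only group with $T_i(2)=1$ is $A_1$, and the only group with $T_i(1)=2$ is $A_2$. Hence \Cref{lemma:ezt-identity} gives $\bar{E}_2(1)=\mathbb{E}(Y_i(1)\vert i\in A_1)\Pr(A_1)$ and $\bar{E}_1(2)=\mathbb{E}(Y_i(2)\vert i\in A_2)\Pr(A_2)$; dividing by the identified probabilities $\Pr(A_1)=P(1\vert 2)$ and $\Pr(A_2)=P(2\vert 1)$ yields the first two formulas.

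For the six complier averages I would take a difference $\bar{E}_z(t)-\bar{E}_{z'}(t)$ chosen so that the groups common to both columns of \Cref{tab:map:ternary:ternary} cancel. The cancellation is exact for two reasons: a group appearing in both $\bar{E}_z(t)$ and $\bar{E}_{z'}(t)$ takes treatment $t$ under \emph{both} $z$ and $z'$, so by the exclusion restriction (\Cref{assn:valid}) its counterfactual outcome $Y_i(t)$ --- and hence the entire product $\mathbb{E}(Y_i(t)\vert i\in C)\Pr(i\in C)$ --- is identical in both terms; and the group membership in \Cref{tab:map:ternary:ternary} is nested, so exactly the target complier group survives. Concretely, $\bar{E}_0(0)-\bar{E}_1(0)$ leaves $C_{010}\cup C_{012}$, $\bar{E}_0(0)-\bar{E}_2(0)$ leaves $C_{002}\cup C_{012}$, $\bar{E}_1(1)-\bar{E}_0(1)$ leaves $C_{010}\cup C_{012}\cup C_{212}$, $\bar{E}_0(1)-\bar{E}_2(1)$ leaves $C_{112}$, $\bar{E}_2(2)-\bar{E}_0(2)$ leaves $C_{002}\cup C_{012}\cup C_{112}$, and $\bar{E}_0(2)-\bar{E}_1(2)$ leaves $C_{212}$. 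In each case I would fold the surviving probability-weighted terms into a single average over the relevant union by the law of iterated expectations over disjoint events, and then divide by the union probability, which \Cref{pro:DCM:vector:ternaryternary} shows equals precisely the stated difference of propensity scores.

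The only step requiring genuine care --- and the natural place for a slip --- is verifying the cancellation pattern against \Cref{tab:map:ternary:ternary}: for each pair $(z,z')$ one must check that the set of groups with $T=t$ under $z'$ is contained in the set with $T=t$ under $z$, with set difference equal to the intended target group, so that no spurious group-level term survives. Once this nesting is confirmed the remaining manipulations are purely arithmetic, and the identification of the associated union probabilities in \Cref{pro:DCM:vector:ternaryternary} closes each formula. Note that no homogeneity or support restriction is needed here, since every parameter in the statement is a group-level mean counterfactual outcome rather than a treatment-effect contrast across groups.
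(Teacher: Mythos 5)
Your proposal is correct and follows essentially the same route as the paper's proof: applying \Cref{lemma:ezt-identity} to read off the composition of each $\bar{E}_z(t)$ from \Cref{tab:map:ternary:ternary}, differencing across instrument values so that common groups cancel, and dividing by the group probabilities identified in \Cref{pro:DCM:vector:ternaryternary}. The only difference is that you spell out explicitly why the cancellation and nesting hold, which the paper leaves implicit.
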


By itself,  \Cref{pro:DCM:3by3}  does not  allow us to identify  an average treatment effect for {\em any\/} 
(even composite) response-group.  Suppose for instance that we want to identify $\mathbb{E} (Y_i(1)-Y_i(0)\vert i\in C)$ for some group $C$. Then $C$ needs to exclude $C_{111}$, $C_{112}$, and $C_{212}$, since $\mathbb{E} (Y_i(0)\vert i\in C^\prime)$ is not identified for any group $C^\prime$ that contains  $C_{111}$, $C_{112}$, or $C_{212}$. Since we only know the mean outcome of treatment~1 for groups that contain one of these three elemental groups, the conclusion follows.

\subsubsection{Using Positive Selection}

Note that  if we assumed $\mathbb{E} (Y_i(1)\vert i\in C_{112})=\mathbb{E} (Y_i(1)\vert i\in C_{212})$, then we could 
combine the two equations in the fourth displayed line of \Cref{pro:DCM:3by3} and the probabilities of $C_{112}$ and $C_{212}$ (which are point-identified by \Cref{pro:DCM:vector:ternaryternary})
to obtain $\mathbb{E} (Y_i(1)\vert i \in C_{010}\cup C_{012})$. This would  point-identify the average effect of treatment~1 vs treatment~0 on this composite complier group $C_{01\ast}$. While this assumption may be overly strong, it seems natural to impose that $Y_i(\tau)$ is on average larger in a response group that has $t=\tau$ for more values of $z$. Assumption~\ref{assn:positivesel:3by3} formalizes this intuition in our setting.

\begin{assumption}[Positive selection in the $3\times 3$ model]\label{assn:positivesel:3by3}
	Either or both of the following assumptions hold:
	\begin{align}
		\mathbb{E} \left[ Y_i(1) \vert  i  \in C_{112} \right] 
		&\geq \mathbb{E} \left[ Y_i(1) \vert  i  \in C_{212} \right],
		\label{assn:positivesel:3by3:1}\\
		\mathbb{E} \left[ Y_i(2) \vert  i  \in C_{212} \right] 
		&\geq \mathbb{E} \left[ Y_i(2) \vert  i  \in C_{112} \right].
		\label{assn:positivesel:3by3:2}
	\end{align}
\end{assumption}
\Cref{assn:positivesel:3by3} states a form of positive selection into treatment, as defined in Section~\ref{sub:pos:sel}. Consider \Cref{assn:positivesel:3by3:1} for instance. It says that within the group of ``$12$-compliers'' $C_{\ast 12}=C_{012}\cup C_{112}\cup C_{212}$, those observations with $T(0)=1$ have a larger average counterfactual  $Y(1)$ than those with $T(0)=2$.  \Cref{corr:DCM:3by3} shows that this gives  bounds on the local average treatment effects for $C_{01\ast}$-compliers, with a similar result for  \Cref{assn:positivesel:3by3:2} and $C_{0\ast 2}$-compliers.

\begin{proposition}[Identifying treatment effects in the $3\times 3$ model]\label{corr:DCM:3by3} 
\begin{enumerate}
	\item Under \eqref{assn:positivesel:3by3:1}, the local average treatment effect 
	\[
	\mathbb{E} \left[ Y_i(1) - Y_i(0)  \vert  i  \in C_{01\ast}  \right] 
	\]
	is at least as large as 
\[
\frac{(\bar{E}_1(1) -\bar{E}_0(1))-(\bar{E}_0(0) -\bar{E}_1(0))}{P(0 \vert 0)-P(0 \vert 1)}
-\frac{\bar{E}_0(1) -\bar{E}_2(1)}{P(1 \vert 0)-P(1 \vert 2)}
\frac{P(2 \vert 0)-P(2 \vert 1)}{P(0 \vert 0)-P(0 \vert 1)}. 
\]
\item 
Under \eqref{assn:positivesel:3by3:2}, the local average treatment effect 
\[
\mathbb{E} \left[ Y_i(2) - Y_i(0)  \vert  i  \in C_{0\ast 2}  \right] 
\]
is at least as large as 
\[
	\frac{(\bar{E}_2(2) -\bar{E}_0(2))-(\bar{E}_0(0) -\bar{E}_2(0))}{P(0 \vert 0)-P(0 \vert 2)}
-\frac{\bar{E}_0(2) -\bar{E}_1(2)}{P(2 \vert 0)-P(2 \vert 1)}
\frac{P(1 \vert 0)-P(1 \vert 2)}{P(0 \vert 0)-P(0 \vert 2)}. 
\]
\item  In both 1 and 2, ``at least as large'' can be replaced with ``equals'' if the corresponding inequality in \Cref{assn:positivesel:3by3} is an equality.
\end{enumerate}
\end{proposition}

The lower bounds given in \Cref{corr:DCM:3by3} are sharp relative to the restrictions imposed by the mean-independence of the instruments and the positive selection assumptions. Just as with  the bounds derived in \eqref{iden:late:partial:2by3:new}, this explicit sharp characterization obtains  because the composite response-group probabilities
\begin{align*}
    \Pr(i\in C_{01\ast})  &=  \Pr(i\in C_{010}\cup C_{012}) = P(0 \mid 0) - P(0 \mid 1), \\
    \Pr(i\in C_{0\ast 2}) &= \Pr(i\in C_{002}\cup C_{012}) = P(0 \mid 0) - P(0 \mid 2), 
\end{align*}
are point-identified, as established in \Cref{pro:DCM:vector:ternaryternary}.


\subsubsection{When is Positive Selection Plausible?}\label{sec:pos:sel:3by3:ARUM}

Let us  focus on  
\eqref{assn:positivesel:3by3:2}.
Given strict one-to-one targeting, $C_{112}$ is defined by
\[
\underline{U}(1)-\bar{U}(2)\leq u_{i2}-u_{i1}\leq \underline{U}(1)-\underline{U}(2),
\; u_{i1}-u_{i0}\geq -\underline{U}(1).
\]
$C_{212}$ is defined by
\[
\underline{U}(1)-\underline{U}(2)\leq u_{i2}-u_{i1}\leq \bar{U}(1)-\underline{U}(2),
\; u_{i2}-u_{i0}\geq -\underline{U}(2).
\]
 To simplify notation, define  $\zeta_i=u_{i2}-u_{i1}$ and $\xi_i=u_{i2}-u_{i0}$, so that $u_{i1}-u_{i0}=\xi_i-\zeta_i$. The inequalities above can be rewritten as 
\begin{itemize}
    \item for $C_{112}$: 
    $\underline{U}(1)-\bar{U}(2)\leq \zeta_i\leq\underline{U}(1)-\underline{U}(2)$, 
    $\xi_i - \zeta_i \geq - \underline{U}(1)$;
    \item for $C_{212}$: $\underline{U}(1)-\underline{U}(2)\leq \zeta_i\leq \bar{U}(1)-\underline{U}(2)$,
    $\xi_i\geq -\underline{U}(2)$.
\end{itemize}

Figure~\ref{fig:ternaryroy} plots these two groups on the $\zeta_i \times \xi_i$ plane.
Group $C_{212}$ corresponds to the top-right (infinite) rectangle and 
group $C_{112}$ is partitioned into the two subgroups: 
$C_{112}^{(i)}$ is a bottom-left triangle 
and $C_{112}^{(ii)}$ is a top-left (infinite) rectangle.

 \begin{figure}[htb]
	 \begin{tikzpicture}[scale=0.5]  
 \draw [green!20!white, fill=green!20!white] (0,5) -- (0,0) -- (5,0) -- (5,5);
\draw [yellow!20!white, fill=yellow!20!white] (-5,0) -- (0,0) -- (-5,-5);	
 \draw [red!20!white, fill=red!20!white] (-5,5) -- (-5,0) -- (0,0) -- (0,5);
 
	 \draw (-5,-5)  -- (-5,5);	 
      \draw (5,-5)  -- (5,5);
	 \draw[->,>=latex] (-5,0)  -- (6,0) node[below] {$\zeta_i$};
	 \draw[->,>=latex] (0,-5)  -- (0,5) node[above] {$\xi_i$};
	 \draw[dashed,yellow] (-5,-5) -- (0,0);
 
 \node[red] at (2,2.5) {$C_{212}$};
  \node[red] at (-3,-2) {$C_{112}^{(i)}$};
   \node[red] at (-3,3) {$C_{112}^{(ii)}$};
		 \fill[blue] (0.0, 0.0) circle (.1cm);	          
		 \fill[blue] (-5.0,0.0) circle (.1cm);	  
         	 \fill[blue] (5.0,0.0) circle (.1cm);	
	 \node[blue] at (-5.4,-0.4) {$P_a$};
		 \node[blue] at (0.4,0.4) {$P_b$};
         		 \node[blue] at (4.6,-0.4) {$P_c$};
	 \end{tikzpicture} 
	 \caption{Positive Selection in the Generalized $3\times3$ Roy model}\label{fig:ternaryroy}
\parbox{4in}{\footnotesize{Notes: 
$P_a=(\underline{U}(1)-\bar{U}(2),-\underline{U}(2))$,
$P_b=(\underline{U}(1)-\underline{U}(2),-\underline{U}(2))$, 
and
$P_c=(\bar{U}(1)-\underline{U}(2),-\underline{U}(2))$.}}
\end{figure}

Suppose for instance that 
\[
\mathbb{E}(Y_i(2)\mid u_{i0}, u_{i1},u_{i2})-\mathbb{E}(Y_i(2))=a_0 u_{i0}+a_1 u_{i1}+a_2 u_{i2},
\]
where $a_0$, $a_1$, and $a_2$ are constants, and $(u_{i0},u_{i1},u_{i2})$ are jointly normal and mutually uncorrelated with mean 0 and variance 1. 
Defining $\zeta_i = u_{i2} - u_{i1}$ and $\xi_i = u_{i2} - u_{i0}$, it is easy to see that
\begin{align}\label{eq:expect:y2}
  \mathbb{E}(Y_i(2)\mid \zeta_i,\xi_i)-\mathbb{E}(Y_i(2))  
&= \frac{a_2 + a_0 - 2 a_1}{3} \zeta_i + \frac{a_2 + a_1 -2 a_0}{3} \xi_i.
\end{align}
If $a_2 \geq \max(a_0,a_1)+\vert a_1-a_0\vert$, 
then the coefficients of $\zeta_i$ and $\xi_i$ in \Cref{eq:expect:y2} are both non-negative. It follows\footnote{See Appendix~\ref{appx:normal:positive_selection} for details.} from the geometry of Figure~\ref{fig:ternaryroy} that $\mathbb{E}(Y_i(2)\mid i\in C_{212}) \geq \mathbb{E}(Y_i(2)\mid i\in C_{112})$.
In summary, a sufficiently large value of $a_2$ induces positive selection, generating patterns similar to those of comparative advantage in generalized Roy models.

\subsection{What do the IV Estimators Identify in the $3 \times 3$ Model?}\label{sub:klm}

\citet[hereafter KLM]{kirkeboen2016}   used a $3\times 3$ model to study  the impact    of the  field of study on later earnings.
 Their Proposition~2 characterizes  what   two-stage least squares (TSLS) estimators 
identify under  different sets of assumptions.
The least stringent version combines a monotonicity assumption (Assumption~4 in KLM) and condition~(iii) in their Proposition~2, which they call 
``irrelevance and information on next-best alternatives''.  ``Irrelevance'' is a set of exclusion restrictions, while ``information on next-best alternatives'' assumes the availability of additional data.

\subsubsection{Monotonicity and Irrelevance}
While we take quite a different path,  our universal targeting assumption turns out to yield exactly the same identifying restrictions as the combination of monotonicity 
 and  irrelevance in KLM. We show it in~\Cref{appendix:KLM}.
 
This set of assumptions in itself is too weak to give  two-stage least squares estimates  a simple interpretation. 
 To see this, let $\beta_1$ and $\beta_2$ be the probability limits of the coefficients in a regression of $Y_i$ on the indicator variables $\uniset(T_i=1)$ and $\uniset(T_i=2)$, with instruments $Z_i$.  Remember from Table~\ref{tab:map:ternary:ternary} that under strict one-to-one targeting, five response-groups have $T(1)=1$:
 \begin{enumerate}
	\item the always-takers $C_{111}$;
	\item the ``intermediate'' group $C_{112}$, which has $T(z)=1$ unless $z=2$;
	\item the three groups $C_{010}, C_{012}$, and $C_{212}$, which have $T(z)=1$ if and only if $z=1$.
 \end{enumerate}
A similar distinction applies to the groups that have $T(2)=2$; it motivates Definition~\ref{def:12compliers}. 

\begin{definition}[1-compliers and 2-compliers]\label{def:12compliers}
	We call
	\[
		\mathcal{C}_1 = 	C_{010}\cup C_{012} \cup C_{212}, 
		\]
		the  1-compliers group and 
		\[
		\mathcal{C}_2 = 	C_{002}\cup C_{012}\cup C_{112} 
		\]
		the   2-compliers group.
	\end{definition}
The  $\beta_1$ and $\beta_2$ coefficients turn out to be  weighted averages of the LATEs on these two groups and on the intermediate groups $C_{112}$ and $C_{212}$.

\begin{proposition}[TSLS in the 3 $\times$ 3 model under universal targeting]\label{prop:IV:3by3}
	The parameters $\beta_1$ and $\beta_2$ satisfy 
	\begin{multline*}
	\begin{pmatrix}
	\Pr(i\in  \mathcal{C}_1) &  - \Pr(i\in C_{212}) \\
	-\Pr(i\in C_{112}) & \Pr(i\in  \mathcal{C}_2)
	\end{pmatrix}
	\begin{pmatrix}
	\beta_1 \\
	\beta_2
	\end{pmatrix}
	\\
	=  
	\begin{pmatrix}
	\mathbb{E} [ \{ Y_i(1)-Y_i(0)  \} \uniset( i\in \mathcal{C}_1) ] 
	- \mathbb{E} [\{  Y_i(2)-Y_i(0) \} \uniset( i\in C_{212})] 
	\\
	\mathbb{E} [\{  Y_i(2)-Y_i(0) \} \uniset( i\in \mathcal{C}_2)]
	- \mathbb{E} [ \{ Y_i(1)-Y_i(0)  \} \uniset( i\in C_{112}) ]
	\end{pmatrix}.
	\end{multline*}
	\end{proposition}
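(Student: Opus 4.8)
The plan is to exploit the just-identified structure of the TSLS problem. The regression has three parameters $(\beta_0,\beta_1,\beta_2)$---an intercept together with the coefficients on $\uniset(T_i=1)$ and $\uniset(T_i=2)$---while the instrument set is the collection of indicators $\uniset(Z_i=z)$ for $z\in\calZ=\{0,1,2\}$. Since $\abs{\calZ}=3$ equals the number of regressors, the system is exactly identified; TSLS coincides with IV, and its probability limit solves the moment conditions $\mathbb{E}[\uniset(Z_i=z)(Y_i-\beta_0-\beta_1\uniset(T_i=1)-\beta_2\uniset(T_i=2))]=0$ for each $z$. Dividing by $\Pr(Z_i=z)$, these are equivalent to the conditional-mean equations
\[
\mathbb{E}[Y_i\vert Z_i=z]=\beta_0+\beta_1 P(1\vert z)+\beta_2 P(2\vert z),\qquad z=0,1,2.
\]
First I would difference the $z=1$ and $z=2$ equations against the $z=0$ equation to eliminate $\beta_0$, obtaining a $2\times2$ linear system in $(\beta_1,\beta_2)$ whose rows are indexed by the propensity-score contrasts $\bigl(P(1\vert z)-P(1\vert 0),\,P(2\vert z)-P(2\vert 0)\bigr)$ for $z=1,2$, and whose right-hand side is the pair of outcome contrasts $\mathbb{E}[Y_i\vert Z_i=z]-\mathbb{E}[Y_i\vert Z_i=0]$.

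Next I would evaluate the coefficient matrix using the response-group accounting behind Table~\ref{tab:map:ternary:ternary} (equivalently \Cref{pro:DCM:vector:ternaryternary}). Reading off which groups take each treatment under each instrument value gives $P(1\vert 1)-P(1\vert 0)=\Pr(C_{010}\cup C_{012}\cup C_{212})$ and $P(2\vert 1)-P(2\vert 0)=-\Pr(C_{212})$, since only $C_{010}$ and $C_{012}$ switch into treatment $1$ while only $C_{212}$ switches out of treatment $2$ as $z$ moves from $0$ to $1$; symmetrically $P(1\vert 2)-P(1\vert 0)=-\Pr(C_{112})$ and $P(2\vert 2)-P(2\vert 0)=\Pr(C_{002}\cup C_{012}\cup C_{112})$. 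These four expressions are exactly the entries of the matrix in the statement, the always-taker probabilities $\Pr(A_0),\Pr(A_1),\Pr(A_2)$ having cancelled in every contrast.

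Finally I would compute the outcome contrasts with \Cref{lemma:ezt-identity}, which writes each $\mathbb{E}[Y_i\vert Z_i=z]$ as the probability-weighted sum over response groups of $\mathbb{E}[Y_i(t)\vert i\in C]$ at the treatment $t$ that group $C$ selects under $z$. Differencing against $z=0$, every group whose assigned treatment is unchanged drops out, leaving, for $z=1$, the contributions of $C_{010}\cup C_{012}$ (switching $0\to1$) and of $C_{212}$ (switching $2\to1$):
\[
\mathbb{E}[Y_i\vert Z_i=1]-\mathbb{E}[Y_i\vert Z_i=0]=\mathbb{E}\bigl[\{Y_i(1)-Y_i(0)\}\uniset(i\in C_{010}\cup C_{012})\bigr]+\mathbb{E}\bigl[\{Y_i(1)-Y_i(2)\}\uniset(i\in C_{212})\bigr],
\]
and symmetrically for $z=2$ with $C_{002}\cup C_{012}$ (switching $0\to2$) and $C_{112}$ (switching $1\to2$). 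The only step that takes any care is recognizing that substituting $Y_i(1)-Y_i(2)=\{Y_i(1)-Y_i(0)\}-\{Y_i(2)-Y_i(0)\}$ on the $C_{212}$ term (and the analogous identity on the $C_{112}$ term for $z=2$) recasts these contrasts as precisely the two entries of the right-hand vector in the proposition. Collecting the matrix and vector then yields the stated linear system; I expect the bookkeeping of which groups switch, and this regrouping into the off-diagonal correction terms $-\Pr(C_{212})$ and $-\Pr(C_{112})$, to be the main obstacle, while invertibility of the matrix---hence uniqueness of $(\beta_1,\beta_2)$---follows from the full-rank condition of \Cref{assn:relevant}.
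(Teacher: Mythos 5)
Your proposal is correct and follows essentially the same route as the paper's proof: both exploit the just-identified IV moment conditions $\mathbb{E}\left[\uniset(Z_i=z)\,U_i\right]=0$ for the TSLS residual $U_i = Y_i-\beta_0-\beta_1\uniset(T_i=1)-\beta_2\uniset(T_i=2)$, eliminate the intercept using the $z=0$ condition, and reduce everything to the bookkeeping of which response groups switch treatment as $z$ moves from $0$ to $1$ or $2$ (together with the regrouping $Y_i(1)-Y_i(2)=\{Y_i(1)-Y_i(0)\}-\{Y_i(2)-Y_i(0)\}$ that produces the off-diagonal terms). The only difference is organizational: the paper substitutes the counterfactual decompositions of $Y_i$ and $\uniset(T_i=t)$ directly into the residual and manipulates the resulting expectations, whereas you first pass to the reduced-form conditional-mean system and then translate the propensity contrasts and outcome contrasts (via \Cref{lemma:ezt-identity}) into response-group quantities---algebraically equivalent steps.
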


	\Cref{prop:IV:3by3} implies that $\beta_1$ and $\beta_2$ are weighted averages of 
	the   four  local average treatment effects on the right-hand side of this system of two equations.
	 The  weights are  functions of the four probabilities on the left-hand side,  which are point identified by \Cref{pro:DCM:vector:ternaryternary}. However, these weights  
	may be  positive or negative. This  complicates interpretation further\footnote{\cite{MTW:2020}  give a set of assumptions under which the weights are positive in a model with multiple binary instruments.}.

 \subsubsection{Additional Assumptions}

 \paragraph{Next-best alternatives}
 	Using the additional information on next-best alternatives in KLM
	amounts, in our notation, to dropping  the ``intermediate'' response-groups  $C_{212}$ and $C_{112}$ from the data. Then the system of equations in \Cref{prop:IV:3by3}  becomes diagonal and it yields
	\begin{align*}
	\beta_1 &= \mathbb{E} [  Y_i(1)-Y_i(0)   | i\in \mathcal{C}_1], \\
	\beta_2 &= \mathbb{E} [ Y_i(2)-Y_i(0)  | i\in \mathcal{C}_2],
	\end{align*}
	where now $\mathcal{C}_1$ reduces to $C_{010}\cup C_{012}$ and $\mathcal{C}_2$ reduces to $C_{002}\cup C_{012}$.
	This is exactly Proposition~2~(iii) of KLM.
 Alternatively, one may simply assume that the response-groups  $C_{212}$ and $C_{112}$ are empty. This is the path taken by \citet{Bhuller22}\footnote{See their Corollary~5 and Table~1 for details.}. 

\paragraph{Positive Selection}
Additional information of the type used by KLM often is not available.  Moreover, assuming away $C_{112}$ and $C_{212}$ seems rather strong. 
On the other hand, reasonable assumptions can be used to generate bounds on the local average treatment effects for 1-compliers and 2-compliers. \Cref{cor:IV:3by3} illustrates this.

\begin{proposition}[TSLS in the $3 \times 3$ model revisited]\label{cor:IV:3by3}
 Assume that 
 \begin{align}\label{rank-condition-3by3}
	\mathcal{D}\equiv \Pr(i\in  \mathcal{C}_1 )\Pr(i\in  \mathcal{C}_2) 
	- \Pr(i\in C_{212}) \Pr(i\in C_{112}) \neq 0.
	\end{align}
Let 
\[
	\mathcal{D}_1\equiv \mathbb{E} \left(
   Y_i(1)-Y_i(0)\vert i\in\mathcal{C}_1
   \right)
-
\mathbb{E} \left(
   Y_i(1)-Y_i(0)\vert i\in C_{112}
   \right)
   \]
   and 
\[
\mathcal{D}_2\equiv \mathbb{E} \left(
	   Y_i(2)-Y_i(0)\vert i\in \mathcal{C}_2
	   \right)
   - \mathbb{E} \left(
	   Y_i(2)-Y_i(0)\vert i\in C_{212}
   \right).
   \]
If $\mathcal{D}_1 \mathcal{D}_2>0$, then $\beta_1-\mathbb{E} \left(
		Y_i(1)-Y_i(0)\vert i\in\mathcal{C}_1
		\right)$ 
		and $\beta_2-\mathbb{E} \left(
			Y_i(2)-Y_i(0)\vert i\in\mathcal{C}_2
			\right)$ have the sign of $\mathcal{D}$.
\end{proposition}

Note that the KLM result of the previous paragraph is the limit case  where $\mathcal{D}_1=\mathcal{D}_2=0$.



The regularity condition~\eqref{rank-condition-3by3}  ensures that 
the $2 \times 2$ matrix that premultiplies $(\beta_1, \beta_2)^\prime$ in 
\Cref{prop:IV:3by3}
is invertible\footnote{
It holds  if $C_{212}$ and $C_{112}$ have positive probability and either $C_{010}\cup C_{012}$ or $C_{002}\cup C_{012}$ has positive probability.}. 
To interpret the assumptions on signs, suppose that $\mathcal{D}_1$ is positive.
	Since $\mathcal{C}_1=C_{010}\cup C_{012}\cup C_{212}$, the positivity of $\mathcal{D}_1$ states that the average effect of treatment~1 on $C_{010}\cup C_{012}\cup C_{212}$ is at least as large as on $C_{112}$. This is a form of positive selection that is in the same spirit as (but different from) \Cref{assn:positivesel:3by3}. If this  form of positive selection holds for both  treatments, then the TSLS estimates overestimate the LATEs on the corresponding compliers if  $\mathcal{D}>0$, and they underestimate them if $\mathcal{D}<0$.

To summarize,  the TSLS estimators in the $3 \times 3$ model are difficult to interpret 
unless  additional information is available and/or some additional assumptions are imposed.
If the groups $C_{112}$ and $C_{212}$ are indeed empty, then both the TSLS estimators and those we obtained in \Cref{corr:DCM:3by3} should  identify the LATEs on the 1- and 2-compliers. Comparing their values 
is a useful (if informal) way of testing the assumptions and of exploring further the heterogeneity of the treatment effects.


\section{Empirical Example: The Head Start Impact Study}\label{sec:empirical} 
We now reexamine the \citeapos{kline2016} analysis of  the Head Start Impact Study (HSIS)
using our framework. 
 We use  exactly the same data as they did; we only 
apply different identifying assumptions\footnote{\citet{Kamat:2019}, \cite{KamatNorrisPecenco2023} and \cite{AngristSantosTecchio2025} analyze the HSIS using different identifying approaches.}.

Head Start is a federal program in the US that addresses various factors affecting children's development in low-income families. It provides early childhood education (hereafter ``preschool'') and health and nutrition services.  HSIS  was a longitudinal study conducted from 2002 to 2010 to assess the program's impact on  cognitive, social-emotional, and health outcomes. It focused on 84 communities where the demand for Head Start services was  larger than the supply. HSIS randomly assigned about   5,000 three and four year old preschool children to either a treatment group which was offered  Head Start services, or  a control group which received no such offer. Children in either group could also attend other preschool centers if offered a slot.


The structure of HSIS is identical to that of \Cref{ex:binaryinstrument0}: it is a $2\times 3$ model.
The treatments here  consist of no preschool ($n$), Head Start ($h$), and other preschool centers ($c$): $\calT=\{n, h, c\}$. 
The instrument is binary, with a control group  ($z=0$) and a group that is offered admission to Head Start ($z=1$). 
   The outcome variable is test scores, measured in standard deviations from their mean.

In our notation, $\mathcal{Z}=\{0,1\}$ and $\mathcal{Z}^\ast=\{1\}$; the instrument $z=1$ uniquely and strictly targets the Head Start treatment $h$. In the terminology of this paper, this assignment structure satisfies universal targeting.
The set of inequalities in \eqref{eq:test:targeting1} is empty, as $z=0$ is the only non-targeting instrument value. On the other hand, \eqref{eq:test:targeting2} yields the following sharp testable implications:
\[
P(n\vert 1) < P(n\vert 0) \quad \text{and} \quad P(c\vert 1) < P(c\vert 0).
\]
In words, the Head Start offer must strictly reduce the choice probabilities of the other two alternatives (no preschool $n$ and other centers $c$). The data is fully consistent with these implications, as evidenced by the proportions of the two complier groups reported in Panel~A of \Cref{HS:cm-te} (see also \citet[Table 3]{kline2016}).


Figure~\ref{fig:kw:2by3} reproduces Figure~\ref{fig:ternary/binary} in this setting.  In addition to the three always-taker groups
$C_{nn}$, $C_{cc}$, and $C_{hh}$, there are two complier groups: 
$C_{nh}$,
and
$C_{ch}$.
In Sections~\ref{sub:kw:replic} and~\ref{sub:kw:subst}, we focus on the LATEs on the two complier groups $C_{nh}$ and $C_{ch}$. Section~\ref{sub:kw:rationed} embeds the model into a larger, $3\times 3$ model in order to evaluate the marginal value of the public funds used in Head Start.

\begin{figure}[htb]
	\begin{tikzpicture}[scale=0.5]  
\draw [blue!20!white, fill=blue!20!white] (-5,0) -- (-4,0) -- (-4,-5) -- (-5,-5);
\draw [green!20!white, fill=green!20!white] (-5,0) -- (-4,0) -- (1,5) -- (-5,5);
\draw [blue!20!white, fill=blue!20!white] (0,0) -- (5,5) -- (5,-5) -- (0,-5);
\draw [yellow!20!white, fill=yellow!20!white] (0,0) -- (5,5) -- (1,5) -- (-4,0);	
\draw [red!20!white, fill=red!20!white] (0,0) -- (0,-5) -- (-4,-5) -- (-4,0);

	\draw[->,>=latex] (-5,0)  -- (5,0) node[below] {$u_{ih}-u_{in}$};
	\draw[->,>=latex] (0,-5)  -- (0,5) node[above] {$u_{ic}-u_{in}$};
	\draw[dashed,red] (0,0) -- (5,5);
				\draw[dashed,red] (-4,0) -- (1,5);
				\draw[dashed,red] (-4,0)  -- (-4,-5);
				\node[red] at (-1.5,-1.5) {$C_{nh}$};
\node[red] at (-4.5,-1.5) {$C_{nn}$};
\node[red] at (-3,3) {$C_{cc}$};

\node[red] at (2.5,-1.5) {$C_{hh}$};
\node[red] at (1,2.5) {$C_{ch}$};
		\fill[blue] (0.0, 0.0) circle (.1cm);	          
		\fill[blue] (-4.0,0.0) circle (.1cm);	          
	\node[blue] at (0.4,-0.4) {$P_n$};
		\node[blue] at (-4.2,0.4) {$P_h$};
	\end{tikzpicture} 
	\caption{The \citet{kline2016} Model of Preschool Choice}\label{fig:kw:2by3}
	\end{figure}

 \subsection{Group proportions and counterfactual means}\label{sub:kw:replic}
 
Our estimates of the proportions  of the two complier groups in the sample use \eqref{prob:vectors:ternary/binary} in \Cref{pro:DCM:vector:binaryinstrument}; they are shown in Panel~A of~\Cref{HS:cm-te}. As expected, they coincide 
 with those in~\citet{kline2016}.

Panel B of \Cref{HS:cm-te} shows the counterfactual means of test scores for the complier groups, as per~\Cref{pro:DCM:late:binaryinstrument}.
While $\mathbb{E}[ Y_i(n) | i \in C_{nh}]$ 
is negative,  $\mathbb{E}[ Y_i(c) | i \in C_{ch}]$ is above $0.1$ standard deviation---not a negligible value in this field.
This  suggests that some of the children who enter Head Start 
would have been  at a good preschool otherwise.  \citet{kline2016} call this pattern the ``substitution effect'' of Head Start.  However, 
they do not report estimates of $\mathbb{E}[ Y_i(n) | i \in C_{nh}]$  and $\mathbb{E}[ Y_i(c) | i \in C_{ch}]$.


\begin{table}[tbh]
{\small
\caption{\label{HS:cm-te} Proportions, Counterfactual Means and Treatment Effects by Response Groups}\medskip
\begin{center}
\begin{tabular}{lccc} \hline \hline
 & 3-year-olds  & 4-year-olds  & Pooled  \\  \hline 
 \multicolumn{4}{l}{Panel A. Proportions of Response Groups via \Cref{pro:DCM:vector:binaryinstrument}} \\ 
 & & & \\
Compliers from $n$ to $h$  ($C_{nh}$) &     0.505 &     0.393 &     0.454 \\  
Compliers from $c$ to $h$ ($C_{ch}$) &     0.198 &     0.272 &     0.232 \\  \hline
 \multicolumn{4}{l}{Panel B. Counterfactual Means of Test Scores via \Cref{pro:DCM:late:binaryinstrument}} \\ 
 & & & \\
$\mathbb{E}[ Y_i(n) | i  \in C_{nh}]$   &    -0.027 &    -0.116 &    -0.062 \\  
$\mathbb{E}[ Y_i(c) | i  \in C_{ch}]$   &     0.112 &     0.144 &     0.129 \\  \hline 
 \multicolumn{4}{l}{Panel C. Treatment Effects via \Cref{prop:DCM:late:binaryinstrument:PI}} \\
 & & & \\ 
Upper Bound on $\mathbb{E}[ Y_i(h) - Y_i(n)  | i  \in C_{nh} ]$   &     0.279 &     0.285 &     0.278 \\  
 & (0.063) &     (0.076) &     (0.050) \\ 
 Lower Bound on $\mathbb{E}[ Y_i(h) - Y_i(c)  | i  \in C_{ch} ]$  &     0.140 &     0.025 &     0.087 \\ 
 &     (0.089) &     (0.097) &     (0.063) \\ 
Upper Bound on &     0.139 &     0.260 &     0.191 \\ 
$\mathbb{E}[ Y_i(h) - Y_i(n)  | i  \in C_{nh}] - \mathbb{E}[ Y_i(h) - Y_i(c)  | i  \in C_{ch}]$&     (0.098) &     (0.115) &     (0.071) \\
\hline \end{tabular}
\end{center}
\parbox{5.5in}{Notes: Head Start ($h$), other centers ($c$), no preschool ($n$). 
Standard errors in parentheses are clustered at the Head Start center level.}
}
\end{table}

\subsection{Treatment Effects}\label{sub:kw:subst}

To fully measure the  substitution effect, one needs to identify 
$\mathbb{E} \left[ Y_i(h)  |  i  \in C_{nh} \right]$ and  
$\mathbb{E} \left[ Y_i(h)  |  i  \in C_{ch} \right]$.
However,  we know from  \Cref{pro:DCM:late:binaryinstrument} that
they are only partially identified by 
\begin{multline*}
\alpha_0 \mathbb{E} \left[ Y_i(h)  |  i  \in C_{ch} \right]+(1-\alpha_0) \mathbb{E} \left[ Y_i(h)  |  i  \in C_{nh} \right]  
=\frac{\mathbb{E} \left[ Y_i \uniset(T_i = h)  | Z_i = 1 \right] - \mathbb{E} \left[ Y_i \uniset(T_i = h)  | Z_i = 0 \right]}{P(h\vert 1)-P(h\vert 0)}.
\end{multline*}
where $\alpha_0=(P(c\vert 0) - P(c\vert 1))/(P(h\vert 1)-P(h\vert 0))$.
This is exactly the formula on \citet[pp.1811]{kline2016}: as they  point out, the LATE for Head Start is a weighted average of ``subLATEs'' with weights determined by the proportion of $C_{ch}$ among compliers, which is identified from the data\footnote{Our $\alpha_0$ is denoted $S_c$ in their paper.}.

\citet{kline2016} first tried to identify $\mathbb{E}[ Y_i(h) - Y_i(c) | i \in C_{ch}]$ and 
$\mathbb{E}[ Y_i(h) - Y_i(n) | i \in C_{nh}]$ separately using 
interactions of the instrument  with covariates or experimental sites.
 They acknowledged the limitations of this approach
and resorted  to a parametric selection model \`{a} la \citet{Heckman1979}  instead. 
  They report\footnote{See \citet[Table VIII,  column (4), full model]{kline2016}.}  estimates of the  local average treatment effects of $0.370$ for   $C_{nh}$ and 
$-0.093$ for $C_{ch}$, with respective standard errors $0.088$ and $0.154$.
   The resulting point estimate of the difference is quite large, at $0.463$ standard deviation.

 
 Our \Cref{prop:DCM:late:binaryinstrument:PI} provides an alternative approach to separating the two treatment effects. 
Given that compliers coming from  other preschools ($C_{ch}$) had better test scores than compliers not originally in preschools ($C_{nh}$), it seems reasonable to assume that  they also have better test scores under Head Start:
\begin{equation}
	\mathbb{E} \left[ Y_i(h)  |  i  \in C_{ch} \right]  \geq \mathbb{E} \left[ Y_i(h)  |  i  \in C_{nh} \right].\label{eq:hsis:positive}
\end{equation}
This is a ``positive selection'' that fits within the framework of \Cref{prop:DCM:late:binaryinstrument:PI}.  It can be derived in a simple model in which better students benefit more from Head Start and other preschools; and students choose schools as a function of their expected outcome. We show in Appendix~\ref{appx:pos:sel:binary} that this model generates positive selection under reasonable assumptions.
The pooled cohort estimates in Panel~C of \Cref{HS:cm-te} indicate that the upper bound on $\mathbb{E}[Y_i(h)-Y_i(n)\mid i\in C_{nh}]$ is $0.28$, while the lower bound on $\mathbb{E}[Y_i(h)-Y_i(c)\mid i\in C_{ch}]$ is $0.09$.\footnote{Under monotone treatment response, $Y_i(h)\ge Y_i(n)$, the lower bound on $\mathbb{E}[Y_i(h)-Y_i(n)\mid i\in C_{nh}]$ would be zero, yielding a 95\% confidence  interval $0\le \mathbb{E}[Y_i(h)-Y_i(n)\mid i\in C_{nh}]\le 0.36$.}
The difference between these two numbers gives an upper bound of $0.19$ for the difference of these two LATEs,
with a standard error of $0.07$. Negative selection (reversing the inequality~\eqref{eq:hsis:positive}) would make $0.19$ a  {\em lower\/} bound  for the difference of the LATEs. At the same time, it would imply that the lower bound is negative; this is soundly rejected by the data.




Our upper bound of $0.19$   is much lower than the point estimate reported 
by \cite{kline2016}. In fact, 
our 95\% and 99\% one-sided confidence intervals for 
\[
	\mathbb{E}[ Y_i(h) - Y_i(n)  | i  \in C_{nh}] - \mathbb{E}[ Y_i(h) - Y_i(c)  | i  \in C_{ch}]
	\]
are $(-\infty, 0.308)$ and $(-\infty, 0.356)$. We conclude that the  $0.463$ estimate in \citet{kline2016}
may overstate the difference between the two complier groups: it  can only be rationalized under  negative  selection, which is a much less plausible assumption.

\subsection{Expanding Access to Head Start}\label{sub:kw:rationed}

\citet{kline2016} sought to evaluate the welfare effect of increasing the number of slots in Head Start, as summarized by the marginal value of public funds (MVPF). They  note that  any expansion of Head Start  will  vacate some slots at competing preschools, which are oversubscribed. The relaxation of this rationing must be counted as an  effect of Head Start expansions.  This is what they call ``rationed substitutes''\footnote{See  \citet[Sections V.D and IX.A]{kline2016} for details.}.

The  children who move from $T_i=n$ to $T_i=c$ when a slot is vacated by a child who moves to Head Start  constitute a $C_{nc}$ group that is ruled out by the $2\times 3$ model. These children increase their grades by $Y_i(c)-Y_i(n)$, whose average generates a LATE that we denote $\textrm{LATE}_{nc}$.  Equation~(9) in \citet[p. 1816]{kline2016} shows that the value  of $\textrm{LATE}_{nc}$ is a crucial input in the computation of the MVPF of a Head Start expansion.  Identifying it  requires either data on offers to all preschools, which \citet{kline2016} do not have\footnote{See footnote 19 in their paper.}, or additional modeling assumptions. They used their parametric selection model to construct an estimate for $\textrm{LATE}_{nc}$. Their estimate of  $\textrm{LATE}_{nc} = 0.294$ results in a high MVPF estimate of $2.02$ (see Table IX in their paper).

We take a different approach by embedding the $2\times 3$ model within  a $3 \times 3$ model. In this richer model, the instrument can take three values: in addition to 
the control group  ($z=0$) and those offered admission to Head Start ($z=1$), we have a new group that we denote $z=2$.  This group receives a direct offer of admission to a competing preschool. Economically, such offers arise when a Head Start expansion vacates slots at competing preschools---what \citet{kline2016} call ``rationed substitutes''---but formally $z=2$ is simply an additional instrument value requiring only that it satisfies \Cref{assn:valid}. Note that this maintains strict, one-to-one targeting. Since potential outcomes $Y_i(t)$  only depend on child $i$'s own treatment $t$, SUTVA still holds: we condition on the market equilibrium in which $z=2$ offers exist, consistent with the partial equilibrium framework of \citet{kline2016}.

Figure~\ref{fig:kw:3by3}  shows the resulting treatment assignment, using tildes to denote the complier groups of the $3\times 3$ model\footnote{Again, it is just Figure~\ref{fig:ternary-ternary} with different notation.}. Using this notation, $\textrm{LATE}_{nc}$ can be written as
\begin{align*}
\textrm{LATE}_{nc} = \mathbb{E}[ Y_i(c) - Y_i(n)  | i  \in \tilde{C}_{n*c}],
\end{align*}
where $\tilde{C}_{n*c} = \tilde{C}_{nnc} \cup \tilde{C}_{nhc}$ is the composite group of $n\to c$ compliers.
 Comparing Figure~\ref{fig:kw:3by3} with Figure~\ref{fig:kw:2by3} shows that  the other complier groups of the two models are linked by 
\begin{align*}
	C_{nh} &=   \tilde{C}_{nh*} = \tilde{C}_{nhn}\cup  \tilde{C}_{nhc}\\
	C_{ch} &=   \tilde{C}_{ch*} = \tilde{C}_{chc}.
\end{align*}

\begin{figure}[htb]
	\begin{tikzpicture}[scale=0.5]  
		\def\dotpone{-3.5}
\draw [blue!20!white, fill=blue!20!white] (0,-5) -- (0,-4) -- (5,1) -- (5,-5);
\draw [green!20!white, fill=green!20!white] (-5,0) -- (\dotpone,0) -- (1,5) -- (-5,5);
\draw [orange!20!white, fill=orange!20!white] (0,0) -- (5,5) -- (5,1) -- (0,-4);
\draw [yellow!20!white, fill=yellow!20!white] (0,0) -- (5,5) -- (1,5) -- (\dotpone,0);	
\draw [red!20!white, fill=red!20!white, opacity=0.5] (0,0) -- (0,-4) -- (-5,-4) -- (-5,0);
\draw [blue!20!white, fill=blue!20!white, opacity=0.5] (0,0) -- (\dotpone,0) -- (\dotpone,-5) -- (0,-5);	
\draw [green!20!white, fill=green!20!white] (-5,-4) -- (-3.5,-4) -- (-3.5,-5) -- (-5,-5);

	\draw[->,>=latex] (-5,0)  -- (5,0) node[below] {$u_{ih}-u_{in}$};
	\draw[->,>=latex] (0,-5)  -- (0,5) node[above] {$u_{ic}-u_{in}$};
	\draw[dashed,red] (0,0) -- (5,5);
		\draw[dashed,red] (0,-4) -- (5,1);
				\draw[dashed,red] (\dotpone,0) -- (1,5);
		\draw[dashed,red] (-5,-4)  -- (0,-4);
				\draw[dashed,red] (\dotpone,0)  -- (\dotpone,-5);
				\node[red] at (-1.5,-1.5) {$\tilde{C}_{nhc}$};
\node[red] at (-1.5,-4.5) {$\tilde{C}_{nhn}$};
\node[red] at (-4.5,-1.5) {$\tilde{C}_{nnc}$};
\node[red] at (-4.5,-4.5) {$\tilde{C}_{nnn}$};
\node[red] at (3,-4) {$\tilde{C}_{hhh}$};
\node[red] at (-3,3) {$\tilde{C}_{ccc}$};

\node[red] at (2.5,0.5) {$\tilde{C}_{hhc}$};
\node[red] at (1,2.5) {$\tilde{C}_{chc}$};
		\fill[blue] (0.0, 0.0) circle (.1cm);	          
		\fill[blue] (\dotpone,0.0) circle (.1cm);	          
		\fill[blue] (0.0,-4.0) circle (.1cm);	      
	\node[blue] at (-0.4,0.4) {$P_n$};
		\node[blue] at (-3.7,0.4) {$P_h$};
		\node[blue] at (0.4,-4.4) {$P_c$};

	\end{tikzpicture} 
	\caption{Embedding Preschool Choice in a $3\times 3$ Model}\label{fig:kw:3by3}
	\end{figure}

	Now consider the new group of $n\to c$ compliers. It differs from $\tilde{C}_{chc}$ in that its members will not go to a preschool  unless they are offered a slot. We show in Appendix~\ref{appx:pos:sel:ternary} that the  structural model that we used in the binary instrument case  predicts the following  inequality:
	\begin{equation}\label{eq:possel:3x3first}
		\mathbb{E}[ Y_i(c)  | i  \in \tilde{C}_{n*c}] \leq \mathbb{E}[ Y_i(c)   | i  \in \tilde{C}_{chc}].
		\end{equation}

 Now consider the composite response-groups $\tilde{C}_{n*c} = \tilde{C}_{nnc} \cup \tilde{C}_{nhc}$ and $\tilde{C}_{nn\ast}=\tilde{C}_{nnc}\cup \tilde{C}_{nnn}$. 
As Figure~\ref{fig:kw:3by3} shows, they
only differ by the substitution of $\tilde{C}_{nnn}$ for $\tilde{C}_{nhc}$. The former never go to Head Start or to another preschool, while the latter are full compliers.   Our structural model generates the inequality
	\[
		\mathbb{E}[ Y_i(n)  | i  \in \tilde{C}_{nnn}] \leq \mathbb{E}[ Y_i(n)  | i  \in \tilde{C}_{nhc}]
	\]
 which implies
	\begin{equation}\label{eq:possel:3x3second}
		\mathbb{E}[ Y_i(n)  | i  \in \tilde{C}_{nn*}] \leq \mathbb{E}[ Y_i(n)  | i  \in \tilde{C}_{n*c}].
	\end{equation}
Inequalities~\eqref{eq:possel:3x3first} and~\eqref{eq:possel:3x3second} again are  ``positive selection'' assumptions that fall under our  \Cref{corr:DCM:3by3}.

Since $\tilde{C}_{nn\ast}$ coincides with $C_{nn}$ and $\tilde{C}_{chc}$ is $C_{ch}$, we already know the values of the right-hand sides of both inequalities. Applying the same logic as in \Cref{corr:DCM:3by3} gives us an upper bound for $\textrm{LATE}_{nc}$: 
	\begin{align*}
	\textrm{LATE}_{nc} 
	\leq \mathbb{E}[ Y_i(c)   | i  \in \tilde{C}_{chc}] - \mathbb{E}[ Y_i(n)   | i  \in \tilde{C}_{nn*}]
	= \mathbb{E}[ Y_i(c)   | i  \in C_{ch}] - \mathbb{E}[ Y_i(n)   | i  \in C_{nn}].
	\end{align*}

	As the $\textrm{MVPF}$ is an increasing function of $\textrm{LATE}_{nc}$, this gives us in turn an upper bound on its value\footnote{Online Appendix~\ref{appx:mvpf} derives the formula for the MVPF in this model.}. We obtain
	$\textrm{LATE}_{nc} \leq 0.164$ and $\textrm{MVPF} \leq 1.55$. These upper bounds are noticeably smaller than the point estimates  that result from the parametric selection model of \citet{kline2016}.

\section*{Concluding Remarks}

We   have shown that the idea of targeting is a useful way to analyze models with multivalued treatments and multivalued instruments. The testable implications that we derived suggest  a natural three-step procedure to elucidate targeting patterns  in the data, supplementing any qualitative information provided by the empirical context.

\paragraph{Step 1: Screening Candidates.}
For each instrument-target pair $(z,t)\in\mathcal{Z}\times\mathcal{T}$, test the hypothesis $H_{0,zt}: S_{zt}\le 0$ against $H_{1,zt}: S_{zt}>0$, where
\[
S_{z t} = P(0\mid z)+\max_{z''\in\mathcal{Z}, z^{\prime\prime}\neq z} P(t\mid z'')-1.
\]
 Let $\widehat{\mathcal{V}}$ denote the set of surviving pairs:
\[
\widehat{\mathcal V} \equiv \{(z,t)\in\mathcal{Z}\times\mathcal{T}: H_{0,zt}\ \text{is not rejected}\}.
\]
If for a fixed $z$, no pair $(z,t)$ survives in $\widehat{\mathcal V}$, we reject the hypothesis that $z$ targets any treatment in $\mathcal{T}$.

\paragraph{Step 2: Checking Compatibility}
For every pair of  candidates $(z,t)$ and $(z',t')$ in $\widehat{\mathcal V}$ with $t\neq t'$, we test  $H_{0,c}: S_{c} \le 0$, where
\[
S_{c} = P(t'\mid z)+P(t\mid z')-1.
\]
A rejection implies that $(z,t)$ and $(z',t')$ cannot hold simultaneously.


\paragraph{Step 3: Testing for Universal Targeting}
The set of valid mappings that survive Steps 1 and 2 may or may not be consistent with universal targeting. One can further use the inequalities in \eqref{eq:test:targeting3} to formally test it.

As usual,
one should be careful about controlling the size of this three-step procedure by applying standard multiple-testing corrections within each step and/or bootstrapping. We leave this for further research.

Our paper only analyzed discrete-valued instruments and treatments. Some of the notions we used would extend naturally to continuous instruments and treatments: the definitions of targeting,  one-to-one targeting, and positive selection would translate directly. Strict targeting, on the other hand, is less appealing in a context in which continuous values may denote intensities. Our earlier paper \citep{LeeSalanie2018} as well as \citeapos{Mountjoy:2022} can be seen as analyzing continuous-instruments/discrete-treatments models. 
Extending our analysis to  models with continuous treatments is an obvious topic for further research.
It would  also be interesting to apply the partial identification approach of \citet{MST:2018} in our setting.
Finally, there has been  a surge of recent interest on understanding  the properties of OLS and 2SLS estimands when treatment effects vary with the covariates  \citep{BBMT,Sloc22,GHK:24}. 
We believe that the targeting concept   and the identifying assumptions explored in this paper  should be relevant in this context and that  they merit further investigation.


\appendix


\section{Proofs for \Cref{sec:targeting}\label{appx:proofs:sec2}}

\begin{proof}[Proof of \Cref{proprespvectors1}]
As is explained in the main text, it is a direct consequence of \Cref{prop:csq-target} under one-to-one targeting. 
\end{proof}

\begin{proof}[Proof of \Cref{corclasses-strict-11}]
(i): Suppose that $z$ is not a targeting instrument. Then $T_i(z)=\underline{t}_i$ (which does not depend on $z$ under strict targeting). If $z$ is targeting but $z\not\in A_i$, then $T_i(z)\neq z$;  since $z$ targets $z$ uniquely, $t^\ast_i(z)=z$ and $T_i(z)$  can only be $\underline{t}_i$.

(ii): If $\underline{t}_i$ is in $\calTs$, we must have $\underline{V}_i=\underline{U}(\underline{t}_i)+u_{i\underline{t}_i}< \bar{U}(\underline{t}_i)+u_{i\underline{t}_i}=V^\ast_i(\underline{t}_i)$; therefore $T_i(\underline{t}_i)=\underline{t}_i$ and $\underline{t}_i$ is in $A_i$.

(iii): We already know that each observation $i$ is in a $C(A,\tau)$ set. Then $T_i(t)=t$ over $A$ and $T_i(z)=\tau$ for all other $z$. This defines an elemental response-group. If $i$ belongs to both $C(A,\tau)$ and $C(A^\prime,\tau^\prime)$, then $A$ must equal $A^\prime$ since they are both defined as the subset of $\calTs$ where $T_i(t)=t$. Moreover, $T_i(0)=\tau=\tau^\prime$ since $0\not\in\calTs$. Therefore $\tau=\tau^\prime$ and the various $C(A,t)$ form a partition of the set of observations. 
\end{proof}

\section{Proofs for \Cref{sec:identif}\label{appx:proofs}}

\begin{proof}[Proof of \Cref{lemma:ezt-identity}]
We start from the definition:
\begin{align*}
  \bar{E}_z(t) &= \mathbb{E}(Y_i\uniset(T_i=t) \mid Z_i=z)    \\
  &= \sum_{C} \mathbb{E}(Y_i\uniset(T_i=t) \mid i\in C, Z_i=z) \Pr(i\in C \mid Z_i=z) \\
  &= \sum_{C} \mathbb{E}(Y_i(t) \mid T_i=t, i\in C, Z_i=z) \Pr(T_i=t, i\in C \mid Z_i=z) \\
  &= \sum_{C} \mathbb{E}(Y_i(t) \mid T_i(z)=t, i\in C, Z_i=z) \Pr(T_i(z)=t, i\in C \mid Z_i=z),
  \end{align*}
  where the sum runs over all complier groups and the third equality uses Assumption~\ref{assn:valid}(i).
  
Now if $C=C_R$ and $R(z)\neq t$, $\Pr(T_i(z)=t,i\in C\mid Z_i=z)$ is zero; and if $R(z)=t$, then we can drop the ``$T_i(z)=t$'' terms:
\[
\mathbb{E}(Y_i(t) \mid T_i(z)=t, i\in C, Z_i=z)=
\mathbb{E}(Y_i(t) \mid  i\in C, Z_i=z),
\]
which equals $\mathbb{E}(Y_i(t) \mid  i\in C)$ by Assumption~\ref{assn:valid}(ii);
 and
\[
\Pr(T_i(z)=t, i\in C \mid Z_i=z) = \Pr(i\in C \mid Z_i=z),
\]
which equals $\Pr(i\in C)$ by Assumption~\ref{assn:valid}(ii).
Therefore we can rewrite the sum as
\[
 \bar{E}_z(t)=\sum_{C_R \; \vert \; R(z)=t} \mathbb{E}(Y_i(t) \; \vert i\in C_R)\Pr(i\in C_R).
\]
This proves the second result in the lemma.

The first result in the lemma follows the same logic by observing that $\Pr(T_i=t\mid Z_i=z) = \mathbb{E}[\uniset(T_i=t)\mid Z_i=z]$, which is equivalent to the second result where $Y_i=1$ (and thus $Y_i(t)=1$).
\end{proof}

\begin{proof}[Proof of \Cref{prop:ident-ufstrict11}]
Take $z\in\calZ$ and $t\in\calT$.  Consider any  observation $i$ and the corresponding   $A_i\subset \calTs$ and $\underline{t}_i  \in A_i \cup (\calTnotTs)$.  There are only two ways to obtain $T_i(z)=t$:
	\begin{itemize}
		\item if  $z\not\in A_i$, then   $T_i(z)=\underline{t}_i$; therefore  $\underline{t}_i=t$. Summing over all subsets $A$ of $\calTs$ that exclude $z$   gives the first term of~\eqref{eq:identuf:strict11}.
		\item if $z\in A_i$ (which implies  $z\in\calTs$), we know that $T_i(z)=z$ no matter what the value of $\underline{t}_i$ is; hence $t$ must equal $z$. Summing over all subsets $A$ that include $z$ and all values of $\underline{t}_i\in A \cup (\calTnotTs)$  gives the second line  in~\eqref{eq:identuf:strict11}.
	\end{itemize} 
	By construction, each $C(A,t)$ completely defines the mapping from instrument values 
	 to treatment values; therefore each $C(A,t)$ is an elemental group. 
	 Their union is clearly the set of all observations. 
	 If $i\in C(A,t)\cup C(A^\prime,t^\prime)$, then $A^\prime=A=A_i$ by the definition of $A_i$, and $t^\prime=t=\underline{t}_i$. Therefore the $C(A,t)$ partition the set of observations.
\end{proof}

\begin{proof}[Proof of \Cref{pro:DCM:vector:binaryinstrument}]
The proof is in the text, with the exception of $\Pr(C_{11})=P(1\vert 0)$ which follows from the fact that the probabilities add up to~1. 
		\end{proof}

\begin{proof}[Proof of \Cref{pro:DCM:late:binaryinstrument}]
Lemma~\ref{lemma:ezt-identity} gives $2\abs{\calT}$ equations:
	\begin{align}\label{eq:eztbinary}
	\begin{split}
		\bar{E}_0(1) &=  \mathbb{E}(Y_i(1)\vert i\in C(\{1\}, 1))\Pr(i\in C(\{1\},1))=\mathbb{E}(Y_i(1)\vert i\in C_{11})\Pr(C_{11}) \\
		\text{ for } t\neq 1, \; \bar{E}_0(t) &=  
		\mathbb{E}(Y_i(t)\vert i\in C(\emptyset,t))\Pr(i\in C(\emptyset,t)) \\
		&\;\;\;+\mathbb{E}(Y_i(t)\vert i\in C(\{1\},t))\Pr(i\in C(\{1\},t)) \\
		& =\mathbb{E}(Y_i(t)\vert i\in C_{tt})\Pr(C_{tt})
		+\mathbb{E}(Y_i(t)\vert i\in C_{t1})\Pr(C_{t1}) \\
		\bar{E}_1(1) &=  
		\sum_{\tau\in \calT} \mathbb{E}(Y_i(1)\vert i\in C(\{1\},\tau))\Pr(i\in C(\{1\},\tau))
		 \\
		&= \sum_{\tau\in \calT} \mathbb{E}(Y_i(1)\vert i\in C_{\tau 1})\Pr(C_{\tau 1}) \\
		\text{ for } t\neq 1, \;\bar{E}_1(t) &=  \mathbb{E}(Y_i(t)\vert i\in  C(\emptyset,t))\Pr(i\in  C(\emptyset,t))
		= \mathbb{E}(Y_i(t)\vert i\in C_{tt})\Pr(C_{tt}). 
	\end{split}	
	\end{align}   
Since \Cref{pro:DCM:vector:binaryinstrument} identifies all type probabilities,
 the first and fourth equations in~\eqref{eq:eztbinary} give directly 
 $\mathbb{E}(Y_i(t)\vert i\in C_{tt})$ for all $t$. Then the second equation
  identifies $\mathbb{E}(Y_i(t)\vert i\in C_{t1})$ for $t\neq 1$.

By subtraction, we obtain
\begin{align*}
	& (\bar{E}_1(1)-\bar{E}_0(1))-\sum_{t\neq 1}(\bar{E}_0(t)-\bar{E}_1(t))\\
	&= \sum_{t\neq 1}
	\mathbb{E} \left[ Y_i(1) -Y_i(t)  \vert  i  \in C_{t1} \right] \Pr( i \in C_{t1}).
\end{align*}
Combining  these results with \Cref{pro:DCM:vector:binaryinstrument} and~\Cref{lemma:ezt-identity} yields the formula in the Proposition.   The denominator 
\[
	\sum_{t\neq 1} (P(t\vert 0) -P(t \vert 1))=P(1\vert 1)-P(1\vert 0)
	\]
	is positive, since all terms in the sum are positive. It follows that all $\alpha_t$ weights are positive and sum to~1.
\end{proof}

\begin{proof}[Proof of \Cref{prop:DCM:late:binaryinstrument:PI}]
First, we consider the baseline counterfactuals. By \Cref{pro:DCM:late:binaryinstrument}, the expectations of $Y_i(0)$ and $Y_i(2)$ for the complier groups are point identified as:
\begin{align*}
    \mathbb{E}[ Y_i(0) \mid i \in C_{01} ] &= \frac{\bar{E}_0(0) - \bar{E}_1(0)}{P(0 \mid 0) - P(0 \mid 1)}, \\
    \mathbb{E}[ Y_i(2) \mid i \in C_{21} ] &= \frac{\bar{E}_0(2) - \bar{E}_1(2)}{P(2 \mid 0) - P(2 \mid 1)}.
\end{align*}
Next, we consider the treated counterfactuals. Recall from~\eqref{eq:eztbinary} that when $T=3$,
\[
    \bar{E}_1(1)-\bar{E}_0(1)
    =
    \mathbb{E} [ Y_i(1) \mid i \in C_{01} ] \Pr( i \in C_{01})
    +
    \mathbb{E} [ Y_i(1) \mid i \in C_{21} ] \Pr( i \in C_{21}).
\]
This implies that the weighted sum of $L_{01}$ and $L_{21}$ satisfies the equality given in \eqref{half-line-eq}.
Furthermore, under assumption \eqref{assn:positivesel:ineq:new}, we have
\begin{align*}
    &\mathbb{E} [ Y_i(1) \mid i \in C_{01} ] \{ \Pr( i \in C_{01}) + \Pr( i \in C_{21}) \} \\
    &\quad \leq \bar{E}_1(1)-\bar{E}_0(1) \\
    &\quad \leq \mathbb{E} [ Y_i(1) \mid i \in C_{21} ] \{ \Pr( i \in C_{01}) + \Pr(i \in C_{21}) \}.
\end{align*}
Substituting the point-identified expressions for $\mathbb{E}[ Y_i(0) \mid i \in C_{01} ]$ and $\mathbb{E}[ Y_i(2) \mid i \in C_{21} ]$ derived above, and using the identity $\Pr(i \in C_{01}) + \Pr(i \in C_{21}) = P(1 \mid 1)-P(1 \mid 0)$, yields the inequalities in \eqref{iden:late:partial:2by3:new}. The point identification results under the special case when \eqref{assn:positivesel:ineq:new} holds with equality follow immediately.
\end{proof}



\begin{proof}[Proof of \Cref{pro:DCM:vector:ternaryternary}]

By \Cref{tab:map:ternary:ternary}, the treatment probabilities decompose as follows.
For treatment~$1$: $P(1 \vert 2) = \Pr(C_{111})$ and $P(1 \vert 0) = \Pr(C_{111}) + \Pr(C_{112})$.
The first equation gives $\Pr(C_{111}) = P(1\vert 2)$, and subtracting yields $\Pr(C_{112}) = P(1\vert 0) - P(1\vert 2)$.
For treatment~$2$: $P(2 \vert 1) = \Pr(C_{222})$ and $P(2 \vert 0) = \Pr(C_{222}) + \Pr(C_{212})$.
The first equation gives $\Pr(C_{222}) = P(2\vert 1)$, and subtracting yields $\Pr(C_{212}) = P(2\vert 0) - P(2\vert 1)$.

For the remaining groups, set $p = \Pr(C_{000})$.
From the table: $P(0\vert 1) = \Pr(C_{000}) + \Pr(C_{002}) = p + \Pr(C_{002})$, so that $\Pr(C_{002}) = P(0\vert 1) - p$.
Similarly, $P(0\vert 2) = \Pr(C_{000}) + \Pr(C_{010}) = p + \Pr(C_{010})$, so that $\Pr(C_{010}) = P(0\vert 2) - p$.
Finally, $P(0\vert 0) = \Pr(C_{000}) + \Pr(C_{010}) + \Pr(C_{002}) + \Pr(C_{012})$, which gives $\Pr(C_{012}) = P(0\vert 0) - P(0\vert 1) - P(0\vert 2) + p$.

Non-negativity of $\Pr(C_{002})$, $\Pr(C_{010})$, $\Pr(C_{012})$, and $\Pr(C_{000}) = p$ imposes:
\[
  \max\{0,\, P(0\vert 1) + P(0\vert 2) - P(0\vert 0)\} \;\leq\; p \;\leq\; \min\{P(0\vert 1),\, P(0\vert 2)\}.
\]
Every value of $p$ in this interval is attainable since the eight probabilities sum to~1.

\end{proof}

\begin{proof}[Proof of \Cref{pro:DCM:3by3}]
By \Cref{lemma:ezt-identity}, we obtain 
\begin{align*}
\bar{E}_1(0) &= \mathbb{E} \left[ Y_i(0) \vert  i  \in C_{000} \cup C_{002}  \right] \Pr( i \in C_{000} \cup C_{002}),  \\
\bar{E}_2(0) &=  \mathbb{E} \left[ Y_i(0) \vert  i  \in C_{000} \cup C_{010} \right] \Pr( i \in C_{000} \cup C_{010}),  \\
\bar{E}_2(1) &= \mathbb{E} \left[ Y_i(1) \vert  i  \in C_{111} \right] \Pr( i \in C_{111}),  \\
\bar{E}_1(2) &=  \mathbb{E} \left[ Y_i(2) \vert  i  \in C_{222} \right] \Pr( i \in C_{222}),  \\
\bar{E}_0(0) - \bar{E}_1(0) 
&= \mathbb{E} \left[ Y_i(0) \vert  i  \in C_{010} \cup C_{012} \right] \Pr( i \in C_{010} \cup C_{012}), \\
\bar{E}_0(0) - \bar{E}_2(0) 
&= \mathbb{E} \left[ Y_i(0) \vert  i  \in C_{002} \cup C_{012} \right] \Pr( i \in C_{002} \cup C_{012}), \\
\bar{E}_1(1) - \bar{E}_0(1) 
&= \mathbb{E} \left[ Y_i(1) \vert  i  \in C_{010} \cup C_{012} \cup C_{212} \right] \Pr( i \in C_{010} \cup C_{012} \cup C_{212}), \\
%
%
 \bar{E}_0(1) - \bar{E}_2(1) &= \mathbb{E} \left[ Y_i(1) \vert  i  \in C_{112} \right] \Pr( i \in  C_{112} ), \\
\bar{E}_2(2) - \bar{E}_0(2) 
&= \mathbb{E} \left[ Y_i(2) \vert  i  \in C_{002} \cup C_{012} \cup C_{112} \right] \Pr( i \in C_{002} \cup C_{012} \cup C_{112}), \\
%
 %
 \bar{E}_0(2) - \bar{E}_1(2) &= \mathbb{E} \left[ Y_i(2) \vert  i  \in C_{212} \right] \Pr( i \in  C_{212} ).
 \end{align*}
Then, the results follow from the fact that all group probabilities are identified. 
\end{proof}

\begin{proof}[Proof of \Cref{corr:DCM:3by3}]
First note that $C_{01\ast}=C_{010} \cup C_{012}$. Under \eqref{assn:positivesel:3by3:1}, we have
\begin{align*}
	\mathbb{E} (Y_i(1)\uniset(i\in C_{010} \cup C_{012})) &= 
	\mathbb{E} (Y_i(1)\uniset(i\in C_{010} \cup C_{012} \cup C_{212}))
	- \mathbb{E} (Y_i(1)\uniset(i\in C_{212}))\\
	&=\mathbb{E} (Y_i(1)\uniset(i\in C_{010} \cup C_{012} \cup C_{212}))
	- \mathbb{E} (Y_i(1)\vert i\in C_{212})\Pr(i\in C_{212})\\
	&\geq \mathbb{E} (Y_i(1)\vert i\in C_{010} \cup C_{012} \cup C_{212})\times \Pr(i\in C_{010} \cup C_{012} \cup C_{212})\\
	&- \mathbb{E} (Y_i(1)\vert i \in C_{112}) \Pr(i\in C_{212}).
\end{align*}
Replacing the probabilities and conditional expectations with their values from   \Cref{pro:DCM:vector:ternaryternary} and \Cref{pro:DCM:3by3}, we obtain $\Pr(i\in C_{010}\cup C_{012})=P(0\vert 0)-P(0\vert 1)$ and 
\[
	\mathbb{E} (Y_i(1)\uniset(i\in C_{010} \cup C_{012}))  \geq 
\bar{E}_1(1)-\bar{E}_0(1) -\frac{\bar{E}_0(1)-\bar{E}_2(1)}{P(1\vert 0)-P(1\vert 2)}(P(2\vert 0)-P(2\vert 1)).
	\]
	Finally, writing
	\[
		\mathbb{E} (Y_i(1)-Y_i(0)\vert i\in C_{010} \cup C_{012})	
		= \frac{\mathbb{E} (Y_i(1)\uniset(i\in C_{010} \cup C_{012}))}{\Pr(i\in C_{010}\cup C_{012})}-
		\frac{\bar{E}_0(0)-\bar{E}_1(0)}{P(0\vert 0)-P(0\vert 1)}
	\]
	gives the result.

	The proof under \eqref{assn:positivesel:3by3:2} is similar: we start from  $C_{0\ast 2}=C_{002} \cup C_{012}$. Under \eqref{assn:positivesel:3by3:2}, we have
	\begin{align*}
		\mathbb{E} (Y_i(2)\uniset(i\in C_{002} \cup C_{012})) &= 
		\mathbb{E} (Y_i(2)\uniset(i\in C_{002} \cup C_{012} \cup C_{112}))
		- \mathbb{E} (Y_i(2)\uniset(i\in C_{112}))\\
		&=\mathbb{E} (Y_i(2)\uniset(i\in C_{002} \cup C_{012} \cup C_{112}))
		- \mathbb{E} (Y_i(2)\vert i\in C_{112})\Pr(i\in C_{112})\\
		&\geq \mathbb{E} (Y_i(2)\vert i\in C_{002} \cup C_{012} \cup C_{112})\times \Pr(i\in C_{002} \cup C_{012} \cup C_{112})\\
		&- \mathbb{E} (Y_i(2)\vert i \in C_{212}) \Pr(i\in C_{112}).
	\end{align*}
	Replacing the probabilities and conditional expectations with their values from   \Cref{pro:DCM:vector:ternaryternary} and \Cref{pro:DCM:3by3}, we obtain $\Pr(i\in C_{002}\cup C_{012})=P(0\vert 0)-P(0\vert 2)$ and 
	\[
		\mathbb{E} (Y_i(2)\uniset(i\in C_{002} \cup C_{012}))  \geq 
	\bar{E}_2(2)-\bar{E}_0(2) -\frac{\bar{E}_0(2)-\bar{E}_1(2)}{P(2\vert 0)-P(2\vert 1)}(P(1\vert 0)-P(1\vert 2)).
		\]
		Finally, writing
		\[
			\mathbb{E} (Y_i(2)-Y_i(0)\vert i\in C_{002} \cup C_{012})	
			= \frac{\mathbb{E} (Y_i(2)\uniset(i\in C_{002} \cup C_{012}))}{\Pr(i\in C_{002}\cup C_{012})}-
			\frac{\bar{E}_0(0)-\bar{E}_2(0)}{P(0\vert 0)-P(0\vert 2)}
		\]
		gives the result.
\end{proof}

\section{Positive Selection in Head Start}\label{appx:pos:sel}

Let realized grades be given by 
\[
Y_i(t)= f_i + k_t + m_i p_t +\zeta_{it},
\]
where $m_i>0$, $p_h > 0$, $p_c > 0$, and $p_n = 0$.
These conditions imply that children with a larger $m_i$ benefit more from preschool (either Head Start or other centers); however, $m_i$ does not play a role if $i$ goes to neither type of preschool.
The $\zeta_{it}$ shocks are zero mean and idiosyncratic; we suppose that each subject $i$ expects $\mathbb{E}_i (Y_i(t))= f_i + k_t +m_i p_t$.
Preference shocks depend positively on expected grades: 
\[
u_{it}= a_i+b_i \mathbb{E}_i (Y_i(t))+\varepsilon_{it}
    =a_i+ b_if_i + b_i k_t + b_i m_i p_t+\varepsilon_{it},
\]
 with $b_i>0$.

 Let us define $v_{it}=u_{it}-u_{in}$; $\eta_{it}=\varepsilon_{it}-\varepsilon_{in}$; and $d_t=k_t-k_n$  for $t=c,h$.
With this specification, we have 
\[
v_{it}=b_i d_t + b_i m_i p_t+\eta_{it}.
\]
We assume that $b_i$, $m_i$, and the random vectors $(\eta_{ic}, \eta_{ih})$ and $(\zeta_{in},\zeta_{ic},\zeta_{ih})$ are mutually independent; and that $f_i$ is independent of $(\eta_{ic}, \eta_{ih}, \zeta_{in},\zeta_{ic},\zeta_{ih})$ while $\mathbb{E}(f_i\mid m_i=m)$ is non-decreasing in $m$.

We will use the following lemma:


\begin{lemma}\label{appx:lemma:AandB}
Let $A(\eta_{ic},\eta_{ih}, b_i)$ and $B(\eta_{ic},\eta_{ih}, b_i)$ be random subsets of $\mathbb{R}$ such that 
\[
\sup A(\eta_{ic},\eta_{ih}, b_i) \leq \inf B(\eta_{ic},\eta_{ih}, b_i)
\]
with probability one. Then for $t=c,h$,
\[
\mathbb{E}(Y_i(t) \mid m_i \in A(\eta_{ic},\eta_{ih}, b_i)) \leq 
\mathbb{E}(Y_i(t) \mid m_i \in B(\eta_{ic},\eta_{ih}, b_i)).
\]
\end{lemma}

\begin{proof}[Proof of \Cref{appx:lemma:AandB}]
Take $t\in \{c,h\}$. Since $\mathbb{E}(Y_i(t)\mid m_i=m) = \mathbb{E}(f_i\mid m_i=m) + k_t+m p_t$, it is a non-decreasing function of $m$ (recalling that $\mathbb{E}(f_i\mid m_i=m)$ is non-decreasing and $p_t > 0$). Fix $(\eta_{ic},\eta_{ih}, b_i)$; obviously, the distribution of $m_i$ conditional on $m_i \in B(\eta_{ic},\eta_{ih}, b_i)$ first-order stochastically dominates that of $m_i$ conditional on $m_i\in A(\eta_{ic},\eta_{ih}, b_i)$. Therefore,
\[
\mathbb{E}(Y_i(t) \mid m_i \in A(\eta_{ic},\eta_{ih}, b_i), \eta_{ic},\eta_{ih}, b_i) \leq 
\mathbb{E}(Y_i(t) \mid m_i \in B(\eta_{ic},\eta_{ih}, b_i), \eta_{ic},\eta_{ih}, b_i).
\]
Taking the expectation over $(\eta_{ic},\eta_{ih}, b_i)$ completes the proof.
\end{proof}


\subsection{The Binary Instrument Case}\label{appx:pos:sel:binary}

In Section \ref{sub:kw:subst}, subjects  who are assigned $z=1$ receive a Head Start offer; those with $z=0$ do not. 
The complier group $C_{ch}$ has
\begin{align*}
 \underline{U}_c + v_{ic}   &\geq \max(0, \underline{U}_h+v_{ih}), \\
 \bar{U}_h + v_{ih}   &\geq \max(0, \underline{U}_c+v_{ic}).
\end{align*}
and the  complier group $C_{nh}$ has
\begin{align*}
0 &\geq \max(\underline{U}_c+v_{ic}, \underline{U}_h+v_{ih}), \\
 \bar{U}_h + v_{ih}   &\geq \max(0, \underline{U}_c+v_{ic}).
\end{align*}
Note that $v_{ic}\geq-\underline{U}_c$ in $C_{ch}$ and  $v_{ic}\leq -\underline{U}_c$ in $C_{nh}$.  Fix values $\bar{\eta}_c, \bar{\eta}_h,$ and $\bar{b}$ and take $i\in C_{ch}$ and $j\in C_{nh}$ such that 
\[
 \eta_{ic}   =\eta_{jc}=\bar{\eta}_c; \;    
 \eta_{ih}   =\eta_{jh}=\bar{\eta}_h; \;
 b_{i}   = b_{j}=\bar{b}.
\]
Since $v_{kc}=\bar{b} d_c + \bar{b} m_k p_c+\bar{\eta}_{c}$ for $k=i,j$ and $p_c \bar{b}>0$, it follows that $m_i\geq m_j$. 
  Therefore we can apply Lemma~\ref{appx:lemma:AandB} with $t=h$ to obtain
  
\[
\mathbb{E}(Y_i(h) \vert i \in C_{ch})\geq \mathbb{E}(Y_i(h) \vert i \in C_{nh}),
\]
which is our version of positive selection in the binary case.


\subsection{The Ternary Instrument Case}\label{appx:pos:sel:ternary}

In our setup in Section~\ref{sub:kw:rationed}, subjects who are assigned $z=1$ receive a Head Start offer,  and those who are assigned $z=2$ are offered admission to another preschool; those with $z=0$ receive neither.

First note that under our assumptions,
\[
\mathbb{E}(Y_i(n) \mid \eta_{ic},\eta_{ih},b_i)= \mathbb{E}(f_i \mid \eta_{ic},\eta_{ih},b_i) + k_n.
\]
Given our assumptions on the distribution of $f_i$, we have
\[
\mathbb{E}(f_i \mid \eta_{ic},\eta_{ih},b_i) = \mathbb{E}\left[\mathbb{E}(f_i \mid \eta_{ic},\eta_{ih},b_i, m_i) \mid \eta_{ic},\eta_{ih},b_i\right]= \mathbb{E}\left[\mathbb{E}(f_i \mid m_i) \mid \eta_{ic},\eta_{ih},b_i\right].
\]
Figure~\ref{fig:kw:3by3} shows that if $i\in \tilde{C}_{nnn}$ and $j\in \tilde{C}_{nhc}$, then $v_{ih}\leq v_{jh}$ and $v_{ic}\leq v_{jc}$. Since $v_{it}=b_i d_t + b_i m_i p_t+\eta_{it},$ the distribution of $m$ in $\tilde{C}_{nnn}$ is dominated by its distribution in $\tilde{C}_{nhc}$. Since $\mathbb{E}(f_i\mid m_i=m)$ is non-decreasing in $m$, it follows that
\[
\mathbb{E}(Y_i(n)\mid i \in \tilde{C}_{nhc})
\geq 
\mathbb{E}(Y_i(n)\mid i \in \tilde{C}_{nnn}).
\]

Now let us consider the response-groups $\tilde{C}_{n*c}$ and $\tilde{C}_{chc}$. 
$\tilde{C}_{n*c}$ is defined by the inequalities 
\begin{equation}\label{ineq:first:1}
\bar{U}_c+v_{ic} > 0 > \max(\underline{U}_h+v_{ih}, \underline{U}_c+v_{ic});
\end{equation}
and $\tilde{C}_{chc}$ is defined by the inequalities 
\begin{equation}\label{ineq:first:2}
\bar{U}_h+v_{ih} > \underline{U}_c+v_{ic} > \max(\underline{U}_h+v_{ih}, 0).
\end{equation}
\eqref{ineq:first:1} implies that $v_{ic}=b_i d_c + b_i m_i p_c +\eta_{ic}<-\underline{U}_c$, while \eqref{ineq:first:2} implies the reverse inequality. Here also, applying Lemma~\ref{appx:lemma:AandB} with $t=c$ directly gives the conclusion:
\[
\mathbb{E}(Y_i(c)\mid i \in \tilde{C}_{n*c})
\leq 
\mathbb{E}(Y_i(c)\mid i \in \tilde{C}_{chc}).
\]

\section*{Acknowledgements}

We thank the editor, the associate editor, and three anonymous referees, as well as Josh Angrist, Junlong Feng, Len Goff, and Jim Heckman for helpful comments.

\bigskip\noindent\textbf{Funding:} This work was supported by the European Research Council [ERC-2014-CoG-646917-ROMIA] and the UK Economic and Social Research Council [ES/P008909/1] to the CeMMAP.

{\singlespacing
\bibliographystyle{economet}
\bibliography{citations-TETI}
}

\newpage

\newpage

\section*{Online Appendices to ``Treatment Effects with Targeting Instruments''}

\renewcommand{\thepage}{\roman{page}}
\setcounter{page}{1}

\section{Proofs for  \Cref{sub:klm}}\label{appendix:KLM}

Let us  first translate \citeapos{kirkeboen2016} assumptions  in our notation to show that their assumptions are equivalent to strict one-to-one targeting.

KLM impose the following in their Assumption~4:
\begin{itemize}
   \item if $T_i(0)=1$ then $T_i(1)=1$
   \item if $T_i(0)=2$ then $T_i(2)=2$.
\end{itemize}
This can be viewed as a monotonicity assumption. It  excludes the twelve response groups $C_{10\ast}, C_{12\ast}, C_{2\ast 0}$, and $C_{2\ast 1}$. 

Their Proposition~2 proves point-identification of response-groups when one of three alternative assumptions is added to their Assumption~4.  We focus here on 
the irrelevance assumption in their Proposition~2~(iii), which is the weakest of the three and the one their application relies on.  In our notation, it states that:
\begin{itemize}
   \item if ($T_i(0)\neq 1$ and $T_i(1)\neq 1$),  then ($T_i(0)=2$ iff $T_i(1)=2$)
   \item if ($T_i(0)\neq 2$ and $T_i(2)\neq 2$),  then ($T_i(0)=1$ iff $T_i(2)=1$).
\end{itemize}
These complicated statements can be simplified. Take the first part. If both $T_i(0)$ and $T_i(1)$ are not~1, then they can only be~0 or~2. Therefore we are requiring $T_i(0)=T_i(1)$. Applying the same argument to the second part,  the irrelevance assumption becomes:
\begin{itemize}
   \item if ($T_i(0)\neq 1$ and $T_i(1)\neq 1$),  then $T_i(0)=T_i(1)$
   \item if ($T_i(0)\neq 2$ and $T_i(2)\neq 2$),  then $T_i(0)=T_i(2)$.
\end{itemize}
It therefore excludes the  response-groups $C_{02\ast}, C_{20\ast}, C_{0\ast 1}$, and $C_{1\ast 0}$. The response-group $C_{021}$ appears twice in this list; and  four other response-groups   were already ruled out by Assumption~4. The reader can easily check that the $3^3-12-(11-4)=8$ response-groups left are exactly the same as in our~\Cref{fig:ternary-ternary}. 

\begin{proof}[Proof of \Cref{prop:IV:3by3}]

   The  moment conditions that define $\beta_0$, $\beta_1$ and $\beta_2$ are 
   \begin{equation}\label{eq:moments:tsls}
	   \mathbb{E} \left[\left(Y_i-\beta_0-\beta_1
	   \uniset(T_i=1)-\beta_2\uniset(T_i=2)\right)\uniset(Z_i=z)\right]=0
   \end{equation}
   for $z=0,1,2$. 

Using counterfactual notation, we write
\begin{equation}\label{cf-y-eq}
Y_i = Y_i(0) +  (Y_i(1) - Y_i(0)) \uniset(T_i = 1)
+  (Y_i(2) - Y_i(0)) \uniset(T_i = 2),
\end{equation}
which allows us to write \Cref{eq:moments:tsls} as
\begin{equation}\label{eq:moments2:tsls}
   \mathbb{E} \left[\left(Y_i(0)-\beta_0 + b_i(1)
   \uniset(T_i=1)+b_i(2)\uniset(T_i=2)\right)\uniset(Z_i=z)\right]=0,
\end{equation}
where  $b_i(t) \equiv Y_i(t)-Y_i(0)-\beta_t$ for $t=1,2$.

Now since
\begin{align*}
\uniset(T_i = t) &= \uniset(T_i(0) = t) 
+ (\uniset(T_i(1) = t)  - \uniset(T_i(0) = t))\uniset(Z_i = 1)\\
&+ (\uniset(T_i(2) = t)  - \uniset(T_i(0) = t))\uniset(Z_i = 2),
\end{align*}
we can expand 
\begin{align*}
   &\left[Y_i(0)-\beta_0 + b_i(1)
   \uniset(T_i=1)+b_i(2)\uniset(T_i=2)\right]
\times \uniset(Z_i = z) \\ 
   &=\left[Y_i(0)-\beta_0 + b_i(1)
   \uniset(T_i(0)=1)+b_i(2)\uniset(T_i(0)=2)\right.\\
   &\left. \; \; + b_i(1) (\uniset(T_i(z) = 1)  - \uniset(T_i(0) = 1))+ b_i(2) (\uniset(T_i(z) = 2)  - \uniset(T_i(0) = 2))\right]
   \times \uniset(Z_i=z).
\end{align*}
 By \Cref{assn:valid}(ii), for all response groups $C$ and all $(t,z)$,
\begin{equation*}
  \Pr(i\in C\mid Z_i=z)=\Pr(i\in C)
  \quad\text{and}\quad
  \mathbb{E}(Y_i(t)\mid i\in C,\,Z_i=z)=\mathbb{E}(Y_i(t)\mid i\in C).
\end{equation*}
 
Fix $z^\prime$ and $t$ and a  response group $C$.
Since $\uniset(T_i(z')=t)$ is constant within  $C$,
 the term $b_i(t)\bigl(\uniset(T_i(z')=t)-\uniset(T_i(0)=t)\bigr)$  equals  $d_Cb_i(t)\in\{-1,0,1\}$ for some constant $d_C$, so that
\begin{align*}
&\mathbb{E}\!\left[b_i(t)\bigl(\uniset(T_i(z')=t)-\uniset(T_i(0)=t)\bigr)\uniset(Z_i=z)\right]\\
  &\quad=\sum_C d_C\,
     \mathbb{E}\!\left[b_i(t)\mid i\in C,\,Z_i=z\right]\Pr(i\in C\mid Z_i=z)\,P(Z_i=z).
\end{align*}
 \Cref{assn:valid}(ii) gives $\Pr(i\in C\mid Z_i=z)=\Pr(i\in C)$,  and since $b_i(t)=Y_i(t)-Y_i(0)-\beta_t$ is an affine function of $Y_i(\cdot)$, 
 \[
\mathbb{E}\!\left[b_i(t)\mid i\in C,\,Z_i=z\right]=
 \mathbb{E}\!\left[b_i(t)\mid i\in C\right].
 \]
 It follows that
\begin{align*}
&\mathbb{E}\!\left[b_i(t)\bigl(\uniset(T_i(z')=t)-\uniset(T_i(0)=t)\bigr)\uniset(Z_i=z)\right]\\
  &\quad=\sum_C d_C\,
     \mathbb{E}\!\left[b_i(t)\mid i\in C\right]\Pr(i\in C)\,P(Z_i=z)\\
  &\quad=\mathbb{E}\!\left[b_i(t)\bigl(\uniset(T_i(z')=t)-\uniset(T_i(0)=t)\bigr)\right]P(Z_i=z).
\end{align*}
 
The same argument applies to the remaining terms in~\eqref{eq:moments2:tsls}.
Dividing by $P(Z_i=z)$ for $z=0,1,2$ yields
\begin{align*}
&\mathbb{E} \left[Y_i(0)-\beta_0 + b_i(1)
\uniset(T_i(0)=1)+b_i(2)\uniset(T_i(0)=2)\right.\\
&\left. + b_i(1) (\uniset(T_i(z) = 1)  - \uniset(T_i(0) = 1))+ b_i(2) (\uniset(T_i(z) = 2)  - \uniset(T_i(0) = 2))\right]=0.	
\end{align*}
When $z=0$, the second line is zero; therefore
\[
   \mathbb{E} \left(Y_i(0)-\beta_0 + b_i(1)
\uniset(T_i(0)=1)+b_i(2)\uniset(T_i(0)=2)\right)=0.
\]
The other two equations  become
\[
   \mathbb{E} \left(b_i(1) (\uniset(T_i(z) = 1)  - \uniset(T_i(0) = 1))+ b_i(2) (\uniset(T_i(z) = 2)  - \uniset(T_i(0) = 2))\right)=0
   \]
   for $z=1,2$. Remembering that $b_i(t)=Y_i(t)-Y_i(0)-\beta_t$ for $t=1,2$, we obtain
   \begin{align*}
   &\mathbb{E} \left[(Y_i(1)-Y_i(0)) (\uniset(T_i(z) = 1)  - \uniset(T_i(0) = 1))+ (Y_i(2)-Y_i(0)) (\uniset(T_i(z) = 2)  - \uniset(T_i(0) = 2))\right]\\
   &=\beta_1 \mathbb{E} (\uniset(T_i(z) = 1)  - \uniset(T_i(0) = 1))
   +\beta_2 \mathbb{E} (\uniset(T_i(z) = 2)  - \uniset(T_i(0) = 2)).
   \end{align*}
   Proposition~\ref{prop:IV:3by3} follows after noting that given Table~\ref{tab:map:ternary:ternary}, 
   \begin{itemize}
	   \item  the variable $\uniset(T_i(z) = 1)  - \uniset(T_i(0) = 1)$  is $\uniset(i\in \mathcal{C}_1)$ for $z=1$ and $-\uniset(i\in C_{112})$ for $z=2$;
	   \item  the variable $\uniset(T_i(z) = 2)  - \uniset(T_i(0) = 2)$  is $\uniset(i\in \mathcal{C}_2)$ for $z=2$ and $-\uniset(i\in C_{212})$ for $z=1$.
   \end{itemize}
\end{proof}

\begin{proof}[Proof of \Cref{cor:IV:3by3}]
Solving the system of equations in \Cref{prop:IV:3by3} gives, after elementary calculations,
\begin{align*}
   \beta_1 \mathcal{D} &= \Pr(i\in C_{212})
   \left[
   \mathbb{E} \left(
	   (Y_i(2)-Y_i(0))\uniset(i\in \mathcal{C}_2)
	   \right)
	   - \mathbb{E} \left(
		   (Y_i(1)-Y_i(0))\uniset(i\in C_{112})
		   \right)
		   \right]\\
   &+
   \Pr(i\in \mathcal{C}_2)\left[
	   \mathbb{E} \left(
		   (Y_i(1)-Y_i(0))\uniset(i\in \mathcal{C}_1)
		   \right)
	   -  \mathbb{E} \left(
		   (Y_i(2)-Y_i(0))\uniset(i\in C_{212})
		   \right)
		   \right]\\
		   &= \Pr(i\in \mathcal{C}_1)\Pr(i\in \mathcal{C}_2)
			   \mathbb{E} \left(
				   Y_i(1)-Y_i(0)\vert i\in \mathcal{C}_1
				   \right)
				   - \Pr(i\in C_{112})\Pr(i\in C_{212})
				   \mathbb{E} \left(
				   Y_i(1)-Y_i(0)\vert i\in C_{112}
				   \right)\\
   &+ \Pr(i\in C_{212})\Pr(i\in \mathcal{C}_2)
   \left[
   \mathbb{E} \left(
	   Y_i(2)-Y_i(0)\vert i\in \mathcal{C}_2
	   \right)
	   - 
	   \mathbb{E} \left(
	   Y_i(2)-Y_i(0)\vert i\in C_{212}
	   \right)
	   \right].
\end{align*}
The difference of treatment effects in the last line is simply $\mathcal{D}_2$; note that it is multiplied by a non-negative term.  
Suppose for instance that  $\mathcal{D}_1, \mathcal{D}_2\geq 0$. Then 
\begin{align}\label{ineq:beta1}
\begin{split}
&	\beta_1 \mathcal{D}  \\
& \geq
\Pr(i\in \mathcal{C}_1)\Pr(i\in \mathcal{C}_2)
			   \mathbb{E} \left(
				   Y_i(1)-Y_i(0)\vert i\in \mathcal{C}_1
				   \right) 
				   - \Pr(i\in C_{112})\Pr(i\in C_{212})
				   \mathbb{E} \left(
				   Y_i(1)-Y_i(0)\vert i\in C_{112}
				   \right). 
\end{split}
\end{align}
Moreover, it is easy to prove the following: define $r=(\alpha a-\beta b)/(a-b)$ with $a,b \geq 0$ and  $a\neq b$.  Then 
\begin{enumerate}
   \item if $(\alpha-\beta)$ and $(a-b)$ have the same sign, $r\geq \max(\alpha,\beta)$
   \item if $(\alpha-\beta)$ and $(a-b)$ have different signs, $r\leq\min(\alpha,\beta)$.
\end{enumerate}
Now take 
\begin{align*}
a &=\Pr(i\in \mathcal{C}_1)\Pr(i\in \mathcal{C}_2)\\
b&=\Pr(i\in C_{112})\Pr(i\in C_{212})\\
\alpha &= \mathbb{E} \left(
   Y_i(1)-Y_i(0)\vert i\in \mathcal{C}_1
   \right)\\
   \beta &= \mathbb{E} \left(
	   Y_i(1)-Y_i(0)\vert i\in C_{112}
	   \right).
\end{align*}
Note that $a$ and $b$ are non-negative, and $a-b=\mathcal{D}\neq 0$. Suppose that $\mathcal{D}>0$ so that \Cref{ineq:beta1} becomes $\beta_1  \geq r$.
Since $\alpha-\beta=\mathcal{D}_1\geq 0$,  we can apply result 1 and we get 
\[
   \beta_1 \geq \max(\alpha,\beta)=\alpha= \mathbb{E} \left(
	   Y_i(1)-Y_i(0)\vert i\in \mathcal{C}_1
	   \right).
\]
If on the other hand $\mathcal{D}$ is negative, then we have $\beta_1 \leq r$ and since $\mathcal{D}$ and $\mathcal{D}_1$ have different signs result~2 gives 
\[
   \beta_1 \leq \min(\alpha,\beta)=\beta 
\]
and a fortiori $\beta_1\leq \alpha$.

Similar arguments apply to $\beta_2$, as well as to  the case when $\mathcal{D}_1$ and $\mathcal{D}_2$ are non-positive.
\end{proof}



\section{Revisiting the $2 \times 3$ and $3 \times 3$ Models via \citet{heckmanpinto-pdt}}\label{appendix:HP} 

\subsection{Notation}

We first adapt \citet[HP hereafter]{heckmanpinto-pdt}'s notation to our framework. As in the main text, we focus on identifying the probabilities of the various response groups $\Pr(i\in C)$ and the  group average  outcomes $\mathbb{E}(Y_i(t)\vert i\in C)$. The following population quantities are directly identified from data for all treatment values $t$:
\begin{align*}
\bm{P}_Z(t) &= \left(P\left(T=t \mid Z=z\right)\right)_{z\in\calZ}, \\
\bm{Q}_Z(t) &=\left(
\mathbb{E}\left(Y \; \mathbf{1}(T=t) \mid Z=z\right)\right)_{z\in\calZ}.
\end{align*}
We also define $\bm{P}_Z=\left(\bm{P}_Z\left(t\right)\right)_{t\in\calT}$.

We choose an arbitrary ordering $(C^1,\ldots,C^S)$ of the $S$ non-empty response groups  and we define the $S$ dummy  variables $c_{i}^s=\mathbf{1}(i\in C^s)$. 
The {\em response vector\/} $\bm{S}$ is $\{c^1, \ldots, c^{S} \}$.
With  this notation, our main objects of interest are
\begin{align*}
\boldsymbol{P}_S &= \mathbb{E}\bm{S} \\
\boldsymbol{Q}_S(t) &= \mathbb{E}\left(Y(t)\bm{S}\right) \text{ for } t \in \calT,
\end{align*}
from which we obtain $\Pr(i\in C^s)= \boldsymbol{P}_S^s$ and $\mathbb{E}(Y_i(t)\vert i\in C^s)=\boldsymbol{Q}^s_S(t)/\boldsymbol{P}_S^s$.

As in HP, $\boldsymbol{B}_t$ denotes a binary matrix with dimension  $\abs{\calZ} \times S$ whose element in row $z$ and column $s$ equals 1 if response group $C^s$ has $T_i=t$ when $Z_i=z$, and zero otherwise. Finally, let $\bcalB$ be the binary matrix of dimension $\left(\abs{\calZ} \cdot \abs{\calT}\right) \times S$ generated by stacking  the matrices $\boldsymbol{B}_t$ vertically: $\bcalB=\left[\boldsymbol{B}_{0}^{\prime}, \ldots, \boldsymbol{B}_{\abs{\calT} -1}^{\prime}\right]^{\prime}$.

\subsection{Theorem T-2 in HP}
Let $\boldsymbol{M}^{\dagger}$ denote the Moore-Penrose pseudo-inverse of a matrix $\boldsymbol{M}$.  We define
$$
\boldsymbol{K}_t=\boldsymbol{I}_{S}-\boldsymbol{B}_t^{\dagger} \boldsymbol{B}_t  \; \text{ and } \;
\bcalK=\boldsymbol{I}_{S}-\bcalB^{\dagger} \bcalB,
$$
where $\boldsymbol{I}_{S}$ denotes the identity matrix of dimension $S$. Note that $\bcalK$ and $\boldsymbol{K}_t$ are orthogonal projection matrices in $\mathbb{R}^S$ that only depend  on the  binary matrices $\bcalB$ and $\boldsymbol{B}_t$.
Theorem T-2 in HP shows that 
\begin{align}
\boldsymbol{P}_S & =\bcalB^{\dagger}
\boldsymbol{P}_Z+\bcalK \boldsymbol{\lambda}, \label{HP-Thm2-a} \\
\boldsymbol{Q}_S(t) & =\boldsymbol{B}_t^{\dagger} \boldsymbol{Q}_Z(t)+\boldsymbol{K}_t \tilde{\boldsymbol{\lambda}}, \label{HP-Thm2-b}
\end{align}
where $\boldsymbol{\lambda}$ and $\tilde{\boldsymbol{\lambda}}$ are arbitrary $S$-dimensional vectors.

\subsection{Identification in the $2 \times 3$ model}\label{appx:hp:2by3}
We can now  re-derive our identification results for the 2 by 3 model using the theorems in HP.
To do so, we order the response-types as $\{ C_{00}, C_{11}, C_{22}, C_{01}, C_{21} \}$.
Then the binary matrices $\boldsymbol{B}_0, \boldsymbol{B}_1$, and $\boldsymbol{B}_2$  are
\begin{align*}
\boldsymbol{B}_0 
&=
\begin{bmatrix}
1 & 0 & 0 & 1 & 0 \\
1 & 0 & 0 & 0 & 0
\end{bmatrix}, \\
\boldsymbol{B}_1 
&=
\begin{bmatrix}
0 & 1 & 0 & 0 & 0 \\
0 & 1 & 0 & 1 & 1
\end{bmatrix}, \\
\boldsymbol{B}_2 
&=
\begin{bmatrix}
0 & 0 & 1 & 0 & 1 \\
0 & 0 & 1 & 0 & 0
\end{bmatrix},
\end{align*}
and
\begin{align*}
\bcalB
&=
\begin{bmatrix}
1 & 0 & 0 & 1 & 0 \\
1 & 0 & 0 & 0 & 0 \\
0 & 1 & 0 & 0 & 0 \\
0 & 1 & 0 & 1 & 1 \\
0 & 0 & 1 & 0 & 1 \\
0 & 0 & 1 & 0 & 0
\end{bmatrix}.
\end{align*}
It is easy to see that $\bcalB$ has full column rank; it follows that $\bcalB^\dagger \bcalB=\boldsymbol{I}_5$ and  $\bcalK$ is the 5 by 5 matrix with all elements zero. 
Therefore  by Theorem T-2 in HP (see equation \eqref{HP-Thm2-a} above), $\boldsymbol{P}_S$ is point-identified as
$\boldsymbol{P}_S  =\bcalB^{\dagger} \boldsymbol{P}_Z$.

Since 
\begin{align*}
\bcalB^{\dagger}
&=\frac16
\begin{bmatrix}
 1&  5&  1& -1&  1& -1 \\
       -1&  1&  5&  1& -1&  1 \\
        1& -1&  1& -1&  1&  5 \\
        4& -4& -2&  2& -2&  2 \\
       -2&  2& -2&  2&  4& -4
\end{bmatrix},
\end{align*}
this  is not very transparent, however. 
To derive our identification results, we use~\eqref{HP-Thm2-b}  instead. Note that the equation $\boldsymbol{Q}_S(t) =\boldsymbol{B}_t^{\dagger} \boldsymbol{Q}_Z(t)+\boldsymbol{K}_t \tilde{\boldsymbol{\lambda}}$ holds for any function of $Y(t)$. If we take it to be a constant function of $Y(t)$, we get $\boldsymbol{Q}_S(t)=\mathbb{E}\bm{S}=\boldsymbol{P}_S$ and $\boldsymbol{Q}_Z(t)=\boldsymbol{P}_Z(t)$, so that~\eqref{HP-Thm2-b}  boils down to 
\begin{equation}\label{eq:getPS}
    \boldsymbol{P}_S=\boldsymbol{B}_t^{\dagger} \boldsymbol{P}_Z(t)+\boldsymbol{K}_t \tilde{\boldsymbol{\lambda}} \; \text{ for all values of } t.
\end{equation}

Now
\begin{align*}
\boldsymbol{B}_0^{\dagger} 
=
\begin{bmatrix}
0 & 1  \\
0 & 0 \\
0 & 0 \\
1 & -1 \\
0 & 0
\end{bmatrix}
&\Longrightarrow
\boldsymbol{K}_{0} 
=
\begin{bmatrix}
0 & 0 & 0 & 0 & 0 \\
0 & 1 & 0 & 0 & 0 \\
0 & 0 & 1 & 0 & 0 \\
0 & 0 & 0 & 0 & 0 \\
0 & 0 & 0 & 0 & 1
\end{bmatrix};
\end{align*}
\begin{align*}
\boldsymbol{B}_1^{\dagger} 
=
\begin{bmatrix}
0 & 0  \\
1 & 0 \\
0 & 0 \\
-1/2 & 1/2 \\
-1/2 & 1/2
\end{bmatrix}
&\Longrightarrow
\boldsymbol{K}_{1} 
=
\begin{bmatrix}
1 & 0 & 0 & 0 & 0 \\
0 & 0 & 0 & 0 & 0 \\
0 & 0 & 1 & 0 & 0 \\
0 & 0 & 0 & 1/2 & -1/2 \\
0 & 0 & 0 & -1/2 & 1/2
\end{bmatrix};
\end{align*}
and 
\begin{align*}
\boldsymbol{B}_2^{\dagger} 
=
\begin{bmatrix}
0 & 0  \\
0 & 0 \\
0 & 1 \\
0 & 0 \\
1 & -1
\end{bmatrix}
&\Longrightarrow
\boldsymbol{K}_{2} 
=
\begin{bmatrix}
1 & 0 & 0 & 0 & 0 \\
0 & 1 & 0 & 0 & 0 \\
0 & 0 & 0 & 0 & 0 \\
0 & 0 & 0 & 1 & 0 \\
0 & 0 & 0 & 0 & 0
\end{bmatrix}.
\end{align*}
Let $(\bm{e}_s)_{s=1,\ldots,5}$ denote the standard basis vectors in $\mathbb{R}^5$.  If $\bm{e}^\prime_s \bm{K}_t=0$, then~\eqref{eq:getPS} point-identifies $\Pr(i\in C^s)=\bm{e}^\prime_s \bm{P}_S = \bm{e}^\prime_s \bm{B}_t^{\dagger} \bm{P}_Z(t)$. Clearly,
\[
\bm{e}^\prime_1\bm{K}_0=\bm{e}^\prime_4\bm{K}_0=\bm{e}^\prime_2\bm{K}_1=\bm{e}^\prime_3\bm{K}_2=\bm{e}^\prime_4\bm{K}_2=0;
\]
this reproduces our identification results for $\boldsymbol{P}_S$ in \Cref{pro:DCM:vector:binaryinstrument}: 
\begin{align*}
\boldsymbol{P}_S=
\left(
\bm{e}_1' \boldsymbol{B}_0^{\dagger} \bm{P}_Z(0),
\bm{e}_2' \boldsymbol{B}_1^{\dagger} \bm{P}_Z(1),
\bm{e}_3' \boldsymbol{B}_2^{\dagger} \bm{P}_Z(2),
\bm{e}_4' \boldsymbol{B}_0^{\dagger} \bm{P}_Z(0),
\bm{e}_5' \boldsymbol{B}_2^{\dagger} \bm{P}_Z(2)
\right)'.
\end{align*}
Returning to the counterfactual outcomes  $Y(t)$, the same argument 
results in \Cref{pro:DCM:late:binaryinstrument}: 
\begin{align*}
\mathbb{E} \left(Y_i(0) \cdot \mathbf{1}\left[i\in C_{00}\right]\right) &= \bm{e}_1' \boldsymbol{B}_0^{\dagger} \bm{Q}_Z(0), \\
\mathbb{E} \left(Y_i(1) \cdot \mathbf{1}\left[i\in C_{11}\right]\right) &= \bm{e}_2' \boldsymbol{B}_1^{\dagger} \bm{Q}_Z(1), \\
\mathbb{E} \left(Y_i(2) \cdot \mathbf{1}\left[i\in C_{22}\right]\right) &= \bm{e}_3' \boldsymbol{B}_2^{\dagger} \bm{Q}_Z(2), \\
\mathbb{E} \left(Y_i(0) \cdot \mathbf{1}\left[i\in C_{01}\right]\right) &= \bm{e}_4' \boldsymbol{B}_0^{\dagger} \bm{Q}_Z(0), \\
\mathbb{E} \left(Y_i(2) \cdot \mathbf{1}\left[i\in C_{21}\right]\right) &= \bm{e}_5' \boldsymbol{B}_2^{\dagger} \bm{Q}_Z(2).
\end{align*}
We conclude that while the  first part of Theorem T-2 in HP 
(i.e., $\boldsymbol{P}_S  =\bcalB^{\dagger} \boldsymbol{P}_Z+\bcalK \boldsymbol{\lambda}$)
is useful to determine the degrees of identification by checking the rank of $\bcalB$, it does not yield the most constructive form of identification.
To get the objects of interest, it is better to invoke 
the second part of Theorem T-2 (i.e., $\boldsymbol{Q}_S(t) =\boldsymbol{B}_t^{\dagger} \boldsymbol{Q}_Z(t)+\boldsymbol{K}_t \tilde{\boldsymbol{\lambda}}$).
Note that since the $2 \times 3$ model satisfies the unordered monotonicity assumption, we could also obtain the same results using Theorem T-6 in HP.

\subsection{Identification in the $3 \times 3$ model}
We now turn to  our 3 by 3 model. We sort the response-types as 
\[
\left\{ C_{000}, C_{111}, C_{222}, 
C_{010}, C_{002}, C_{012}, C_{112}, C_{212}\right\}.
\]
Now 
\begin{align*}
\boldsymbol{B}_0 
&=
\begin{bmatrix}
1 & 0 & 0 & 1 & 1 & 1 & 0 & 0 \\
1 & 0 & 0 & 0 & 1 & 0 & 0 & 0 \\
1 & 0 & 0 & 1 & 0 & 0 & 0 & 0
\end{bmatrix}, \\
\boldsymbol{B}_1 
&=
\begin{bmatrix}
0 & 1 & 0 & 0 & 0 & 0 & 1 & 0 \\
0 & 1 & 0 & 1 & 0 & 1 & 1 & 1 \\
0 & 1 & 0 & 0 & 0 & 0 & 0 & 0 
\end{bmatrix}, \\
\boldsymbol{B}_2 
&=
\begin{bmatrix}
0 & 0 & 1 & 0 & 0 & 0 & 0 & 1 \\
0 & 0 & 1 & 0 & 0 & 0 & 0 & 0 \\
0 & 0 & 1 & 0 & 1 & 1 & 1 & 1 
\end{bmatrix}.
\end{align*}
%
%
Note that  
\begin{align*}
\boldsymbol{B}_0^{\dagger} 
=
\begin{bmatrix}
-0.25 & 0.5 & 0.5  \\
0 & 0 & 0 \\
0 & 0 & 0 \\
0.25 & -0.5 &  0.5 \\
0.25 & 0.5 &  -0.5 \\
0.75 & -0.5 &  -0.5 \\
0 & 0 & 0 \\
0 & 0 & 0
\end{bmatrix}
&\Longrightarrow
\boldsymbol{K}_{0} 
=
\begin{bmatrix}
0.25 & 0 & 0 & -0.25 & -0.25 & 0.25 & 0 & 0 \\
0 & 1 & 0 & 0 & 0 & 0 & 0 & 0 \\
0 & 0 & 1 & 0 & 0 & 0 & 0 & 0 \\
-0.25 & 0 & 0 & 0.25 & 0.25 & -0.25 & 0 & 0 \\
-0.25 & 0 & 0 & 0.25 & 0.25 & -0.25 & 0 & 0 \\
0.25 & 0 & 0 & -0.25 & -0.25 & 0.25 & 0 & 0 \\
0 & 0 & 0 & 0 & 0 & 0 & 1 & 0 \\
0 & 0 & 0 & 0 & 0 & 0 & 0 & 1 
\end{bmatrix}.
\end{align*}
Let $(\bm{e}_s)_{s=1,\ldots,8}$ denote the standard basis vectors in $\mathbb{R}^8$. 
Since $\bm{K}_0$ has no zero column,  none of the $\bm{e}^\prime_s\bm{K}_0$ is zero and the argument in Section~\ref{appx:hp:2by3}  shows that no  population share $\Pr(i\in C^s)$ is point-identified by $\boldsymbol{B}_0^{\dagger} \bm{P}_Z(0)$. 
On the other hand, 
\begin{align*}
\boldsymbol{B}_1^{\dagger} 
=
\begin{bmatrix}
0 & 0 & 0 \\
0 & 0 & 1 \\
0 & 0 & 0 \\
-1/3 & 1/3 &  0 \\
0 & 0 &  0 \\
-1/3 & 1/3 &  0 \\
1 & 0 & -1 \\
-1/3 & 1/3 & 0
\end{bmatrix}
&\Longrightarrow
\boldsymbol{K}_{1} 
=
\begin{bmatrix}
1 & 0 & 0 & 0 & 0 & 0 & 0 & 0 \\
0 & 0 & 0 & 0 & 0 & 0 & 0 & 0 \\
0 & 0 & 1 & 0 & 0 & 0 & 0 & 0 \\
0 & 0 & 0 & 2/3 & 0 & -1/3 & 0 & -1/3 \\
0 & 0 & 0 & 0 & 1 & 0 & 0 & 0 \\
0 & 0 & 0 & -1/3 & 0 & 2/3 & 0 & -1/3 \\
0 & 0 & 0 & 0 & 0 & 0 & 0 & 0 \\
0 & 0 & 0 & -1/3 & 0 & -1/3 & 0 & 2/3 
\end{bmatrix}
\end{align*}
so that $\bm{e}^\prime_2\bm{K}_1=\bm{e}^\prime_7\bm{K}_1=0$, which point-identifies the population shares of $C_{111}$ and $C_{112}$. Similarly,
\begin{align*}
\boldsymbol{B}_2^{\dagger} 
=
\begin{bmatrix}
0 & 0 & 0 \\
0 & 0 & 0 \\
0 & 1 & 0 \\
0 & 0 & 0 \\
-1/3 & 0 &  1/3 \\
-1/3 & 0 &  1/3 \\
-1/3 & 0 &  1/3 \\
1 & -1 & 0 
\end{bmatrix}
&\Longrightarrow
\boldsymbol{K}_{2} 
=
\begin{bmatrix}
1 & 0 & 0 & 0 & 0 & 0 & 0 & 0 \\
0 & 1 & 0 & 0 & 0 & 0 & 0 & 0 \\
0 & 0 & 0 & 0 & 0 & 0 & 0 & 0 \\
0 & 0 & 0 & 1 & 0 & 0 & 0 & 0 \\
0 & 0 & 0 & 0 & 2/3 & -1/3 & -1/3 & 0 \\
0 & 0 & 0 & 0 & -1/3 & 2/3 & -1/3 & 0 \\
0 & 0 & 0 & 0 & -1/3 & -1/3 & 2/3 & 0 \\
0 & 0 & 0 & 0 & 0 & 0 & 0 & 0 
\end{bmatrix}
\end{align*}
and $\bm{e}^\prime_3\bm{K}_2=\bm{e}^\prime_8\bm{K}_2=0$ so that 
the shares of 
 $C_{222}$ and $C_{212}$
are point-identified. On the other hand, the shares of 
$C_{010}$, $C_{002}$, and $C_{012}$ are not identified.
The results in Propositions~\ref{pro:DCM:vector:ternaryternary} and~\ref{pro:DCM:3by3} follow.

Finally, note that  $\boldsymbol{B}_0$ has the following $2 \times 2$ sub-matrix: 
\begin{align*}
\begin{pmatrix}
     & [C_{010} & C_{002}] \\
[z = 1] & 1 & 0  \\
[z = 2]  & 0 & 1 
\end{pmatrix},
\end{align*}
where we indicate the relevant columns and rows of matrix $\boldsymbol{B}_0$. 
Given this pattern, Theorem T-3 and Remark 6.3 in \citet{heckmanpinto-pdt} imply that
the unordered monotonicity assumption is not satisfied for the $3 \times 3$ model.
As mentioned in the main text, when the instrument value switches from 1 to 2, observations in $C_{010}$ move to treatment $0$, while those in $C_{002}$  move out of treatment $0$. Recall that the ARUM structure rules out ``direct two-way flows'' (that is, instrument values $1$ and $2$, respectively, make treatments $1$ and $2$ more favorable for everyone). However, the $3 \times 3$ model allows for ``indirect two-way flows'', where treatment 0 is not targeted by either $z=1$ or $z=2$.  
Unordered monotonicity is more restrictive than ARUM in that it rules out both direct and indirect two-way flows.

\section{The $3\times 3$ Model of \citet{pinto2021}}\label{appendix:model:pinto}
 
 \citet{pinto2021} has proposed a $3\times 3$ model of the Moving to Opportunity (MTO) experiment. Here  we use our framework to identify response-group probabilities and several counterfactual averages. 

We follow the notation  in \citet{pinto2021}.
Let $\mathcal{Z} = \{z_c, z_e, z_8\}$ and $\mathcal{T} = \{t_h, t_l, t_m\}$,
where 
\begin{itemize}
    \item $z_c$ refers to control families,
    $z_e$ those who received the experimental voucher,
    and
    $z_8$ those who received a Section 8 voucher;
    \item 
    $t_h$ refers to families who did not move and chose high-poverty neighborhoods,
    $t_l$ those who moved to low-poverty neighborhoods,
    and
    $t_m$ those who moved to medium-poverty neighborhoods.
\end{itemize}

There are 7 response types in \citet{pinto2021}: the three always-taker groups  $C_{hhh}$, $C_{lll}$, and $C_{mmm}$, and four complier groups:
\begin{itemize}
\item $C_{hlm}$: families who choose high-poverty without vouchers, low-poverty with the experimental voucher,
and medium-poverty with Section 8 vouchers (Pinto calls this group full-compliers);
\item $C_{hll}$: families  who choose high-poverty without vouchers, low-poverty with either voucher;
\item $C_{mlm}$: families   who choose medium-poverty without the experimental voucher, low-poverty with it;
\item $C_{hhm}$: families  who  choose high-poverty without a Section 8 voucher, medium-poverty with it.
\end{itemize}

\begin{figure}[htb]
	\begin{tikzpicture}[scale=0.5]  
		\def\dotpone{-3.5}
\draw [green!20!white, fill=green!20!white] (0,-5) -- (0,0) -- (5,5) -- (5,-5);
\draw [green!20!white, fill=green!20!white] (-5,0) -- (\dotpone,0) -- (1,5) -- (-5,5);
\draw [blue!20!white, fill=blue!20!white] (0,0) -- (\dotpone,\dotpone) -- (\dotpone,0);
\draw [yellow!20!white, fill=yellow!20!white] (0,0) -- (5,5) -- (1,5) -- (\dotpone,0);	
\draw [red!20!white, fill=red!20!white, opacity=0.5] (\dotpone,0) -- (\dotpone,\dotpone) -- (-5,\dotpone) -- (-5,0);
\draw [orange!20!white, fill=orange!20!white, opacity=0.5] (0,0) -- (0,-5) -- (\dotpone,-5) -- (\dotpone,\dotpone);	
\draw [green!20!white, fill=green!20!white] (-5,\dotpone) -- (\dotpone,\dotpone) -- (\dotpone,-5) -- (-5,-5);

	\draw[->,>=latex] (-5,0)  -- (5,0) node[below] {$u_{il}-u_{ih}$};
	\draw[->,>=latex] (0,-5)  -- (0,5) node[above] {$u_{im}-u_{ih}$};
	\draw[dashed,red] (0,0) -- (5,5);
	\draw[dashed,red] (\dotpone,\dotpone) -- (0,0);
	\draw[dashed,red] (\dotpone,0) -- (1,5);
	\draw[dashed,red] (-5,\dotpone)  -- (\dotpone,\dotpone);
	\draw[dashed,red] (\dotpone,0)  -- (\dotpone,\dotpone);

\node[red] at (-2.5,-1) {$C_{hlm}$};
\node[red] at (-1.5,-4.5) {$C_{hll}$};
\node[red] at (-4.5,-1.5) {$C_{hhm}$};
\node[red] at (-4.5,-4.5) {$C_{hhh}$};
\node[red] at (3,-4) {$C_{lll}$};
\node[red] at (-3,3) {$C_{mmm}$};
\node[red] at (1,2.5) {$C_{mlm}$};

\fill[blue] (0.0, 0.0) circle (.1cm);	          
\fill[blue] (\dotpone,0.0) circle (.1cm);	          
\fill[blue] (\dotpone,\dotpone) circle (.1cm);	      
	
\node[blue] at (-0.4,0.5) {$P_c$};
\node[blue] at (-3.7,0.5) {$P_e$};
\node[blue] at (-3.9,-3.9) {$P_8$};

	\end{tikzpicture} 
	\caption{MTO}\label{fig:mto}
	\end{figure}

The seven response groups are illustrated in Figure~\ref{fig:mto} and in Table~\ref{tab:map:mto}.

\begin{table}[htbp]
    \caption{\label{tab:map:mto} Response Groups in MTO}\centering\medskip
{\small
        \begin{tabular}{lccc}  \toprule
       & $T_i(z)=t_h$  & $T_i(z)=t_l$  & $T_i(z)=t_m$ \\  \midrule \\ \vspace{3mm}
$z = z_c$ & $C_{hhh} \cup C_{hhm} \cup C_{hlm} \cup C_{hll}$ 
 & $C_{lll}$ & $C_{mmm} \cup C_{mlm}$ \\  \vspace{3mm}
$z = z_e$ & $C_{hhh} \cup C_{hhm}$ & $C_{lll} \cup C_{mlm} \cup C_{hlm} \cup C_{hll}$ & $C_{mmm}$ \\  \vspace{3mm}
$z = z_8$ & $C_{hhh}$ & $C_{lll} \cup C_{hll}$ & $C_{mmm} \cup C_{hhm} \cup C_{hlm} \cup C_{mlm} $\\  
\bottomrule \end{tabular}
}
    \end{table}

\begin{proposition}[Response-group probabilities in MTO]\label{pro:DCM:vector:mto}
The following probabilities are identified:
 \begin{align}\label{prob:vectors:mto}
 \begin{split}
\Pr (C_{hhh}) &= P(t_h \vert z_8),  \\
\Pr (C_{lll}) &= P(t_l \vert z_c),  \\
\Pr (C_{mmm}) &= P(t_m \vert z_e), \\
\Pr (C_{hhm}) &= P(t_h \vert z_e) - P(t_h \vert z_8),  \\
\Pr (C_{hll}) &= P(t_l \vert z_8) - P(t_l \vert z_c), \\
\Pr (C_{mlm}) &= P(t_m \vert z_c) - P(t_m \vert z_e), \\
\Pr (C_{hlm})  &= 1 - P(t_h \vert z_e) - P(t_l \vert z_8) - P(t_m \vert z_c).
\end{split}
\end{align}
 The model has the following testable implications: 
\begin{align}\label{testable:mto}
\begin{split}
	P(t_h \vert z_e) &\geq P(t_h \vert z_8), \\
	P(t_l \vert z_8) &\geq P(t_l \vert z_c), \\
	P(t_m \vert z_c) &\geq P(t_m \vert z_e), \\
	1 &\geq P(t_h \vert z_e) + P(t_l \vert z_8) + P(t_m \vert z_c).
\end{split}	
\end{align}
\end{proposition}

  The following proposition identifies a number of group average outcomes.

\begin{proposition}[Identification in MTO]\label{pro:DCM:mto}
The following group average outcomes are point-identified:
\begin{align*}
\mathbb{E} \left[ Y_i(t_h) \vert  i  \in C_{hhh} \right]  &= \frac{\bar{E}_{z_8}(t_h)}{P(t_h \vert z_8)},  \\
\mathbb{E} \left[ Y_i(t_l) \vert  i  \in C_{lll} \right]  &= \frac{\bar{E}_{z_c}(t_l)}{P(t_l \vert z_c)},  \\
\mathbb{E} \left[ Y_i(t_m) \vert  i  \in C_{mmm} \right]  &= \frac{\bar{E}_{z_e}(t_m)}{P(t_m \vert z_e)},  \\
\mathbb{E} \left[ Y_i(t_h) \vert  i  \in C_{hhm} \right]  &= \frac{\bar{E}_{z_e}(t_h) - \bar{E}_{z_8}(t_h)}{P(t_h \vert z_e) - P(t_h \vert z_8)},  \\
\mathbb{E} \left[ Y_i(t_l) \vert  i  \in C_{hll} \right]  &= \frac{\bar{E}_{z_8}(t_l) - \bar{E}_{z_c}(t_l)}{P(t_l \vert z_8) - P(t_l \vert z_c)},  \\
\mathbb{E} \left[ Y_i(t_m) \vert  i  \in C_{mlm} \right]  &= \frac{\bar{E}_{z_c}(t_m) - \bar{E}_{z_e}(t_m)}{P(t_m \vert z_c) - P(t_m \vert z_e)},  \\
%
\mathbb{E} \left[ Y_i(t_h)  \vert  i  \in C_{hll} \cup C_{hlm} \right] &= \frac{\bar{E}_{z_c}(t_h) - \bar{E}_{z_e}(t_h)}{P(t_h \vert z_c)-P(t_h \vert z_e)}, \\
\mathbb{E} \left[ Y_i(t_l)  \vert  i  \in C_{mlm} \cup C_{hlm} \right] &= \frac{\bar{E}_{z_e}(t_l) - \bar{E}_{z_8}(t_l)}{P(t_l \vert z_e)-P(t_l \vert z_8)}, \\
\mathbb{E} \left[ Y_i(t_m)  \vert  i  \in C_{hhm} \cup C_{hlm} \right] &= \frac{\bar{E}_{z_8}(t_m) - \bar{E}_{z_c}(t_m)}{P(t_m \vert z_8)-P(t_m \vert z_c)}.
\end{align*}
\end{proposition}

The proofs of \Cref{pro:DCM:vector:mto,pro:DCM:mto} are straightforward; we omit the details.

\section{The MVPF of Extending Head Start}\label{appx:mvpf}

Recall our ternary instrument setting:
\begin{itemize}
\item 	$Z=0$ means no offer of admission to Head Start or to another preschool;
\item 	$Z=1$ means an offer of admission to Head Start	only;
\item 	$Z=2$ means an offer of admission to another preschool	only.
\end{itemize}
$Z=0$ does not preclude other ways to get into $h$ or $c$, $Z=1$ does not preclude other ways  to get into $c$, and $Z=2$ does not preclude  other ways  to get into $h$.

We denote $p(z)$ the probability that $Z=z$.  We are considering an increase in $p(1)$: more offers of admission to Head Start. As $p(1)$ increases, we also increase $p(2)$ to maintain the number of slots in alternative preschools constant. Like \citet{kline2016}, we assume that this increase in $p(2)$ only brings into alternative preschools children that would otherwise not attend preschools.

The MVPF is the ratio of the  benefits $dB$ of increasing $p(1)$ by $dp(1)$ to its budgetary costs $dC$. 
We have $B=(1-\tau)p \mathbb{E} Y$, where $p$ is the pre-tax return to expected scores, and $\tau$ the tax rate. Hence
\[
dB = (1-\tau) p d\mathbb{E} Y.
\]
The budget costs are the subsidies  ($\phi_j$ per student) to Head Start and other preschools, minus the tax receipts:
\[
C= \phi_h \Pr(D=h)+\phi_c \Pr(D=c)-\tau p \mathbb{E} Y.
\]
Therefore
\[
\text{MVPF} = \frac{(1-\tau)p d\mathbb{E} Y/dp(1)}{\phi_h d\Pr(D=h)/dp(1)-\tau p d\mathbb{E} Y/dp(1)}.
\]
In order to compute the MVPF, we start by evaluating the marginal return in expected outcomes $d\mathbb{E} Y/dp(1)$.

\subsection{The Expected Change in Outcomes}
Since 
\[
\Pr(D=c)=\Pr(D(0)=c) + \sum_{z=1,2} p(z) (\Pr(D(z)=c)-\Pr(D(0)=c))
\]
to keep it constant we must have 
\[
\frac{dp(2)}{dp(1)}= \frac{\Pr(D(0)=c)-\Pr(D(1)=c)}{\Pr(D(2)=c)-\Pr(D(0)=c)}.
\]
 $D(0)=c$ implies $D(2)=c$ since $Z=2$ targets $c$. Therefore $\Pr(D(2)=c)-\Pr(D(0)=c)=\Pr(D(2)=c, D(0)\neq c)$. Since $Z=1$ targets $h$, $D(1)=c$ implies $D(0)=c$; and $D(0)=c$ implies that $D(1)$ can only be $c$ or $h$. This gives us
\[
\Pr(D(0)=c)-\Pr(D(1)=c)=\Pr(D(0)=c, D(1)\neq c)=
\Pr(D(0)=c, D(1)=h),
\]
which is the proportion of the group $C_{ch}$ in the $2\times 3$ model.
Therefore
\[
\frac{dp(2)}{dp(1)}= \frac{\Pr(i \in C_{ch})}{\Pr(D(2)=c, D(0)\neq c)}.
\]
The resulting change in expected scores is
\[
d\mathbb{E} Y = dp(1) \mathbb{E}(Y(D(1))-Y(D(0))) + dp(2) \mathbb{E}(Y(D(2))-Y(D(0))).
\]
Now an offer of Head Start ($Z=1$) can only move children  to Head Start: $D(1)\neq D(0)$ implies that $D(1)=h$. As a consequence, 
\[
\mathbb{E}(Y(D(1))-Y(D(0)))=\mathbb{E}(Y(h)-Y(D(0)) \vert D(1)=h, D(0)\neq h) \times \Pr(D(1)=h, D(0)\neq h)
\]
and by the same argument,
\[
\mathbb{E}(Y(D(2))-Y(D(0)))=\mathbb{E}(Y(c)-Y(D(0)) \vert D(2)=c, D(0)\neq c) \times \Pr(D(2)=c, D(0)\neq c).
\]
Putting things together gives
\begin{align*}
\frac{d\mathbb{E} Y}{dp(1)} &= \Pr(D(1)=h, D(0)\neq h) \times\\
&\left(
\mathbb{E}(Y(h)-Y(D(0)) \vert D(1)=h, D(0)\neq h) +
\mathbb{E}(Y(c)-Y(D(0)) \vert D(2)=c, D(0)\neq c) 
\times S_c
\right) \\ 
&=  \Pr(D(1)=h, D(0)\neq h) \times \left(\text{LATE}_h +S_c \text{LATE}_c\right),
\end{align*}
where
\[
S_c = \frac{\Pr(i\in C_{ch})}{\Pr(D(1)=h, D(0)\neq h)}
\]
is, as  in the text of the paper, the proportion of the $h$-compliers that come from $c$.

\subsection{The MVPF}
We still need to compute the denominator $d\Pr(D=h)/dp(1)$. It is 
\[
\left(\Pr(D(1)=h)-\Pr(D(0)=h)\right)- 
\frac{dp(2)}{dp(1)} (\Pr(D(0)=h)-\Pr(D(2)=h)).
\]
The first term in the difference is $\Pr(D(1)=h, D(0)\neq h)$, the proportion of $h$-compliers. The second term equals
\[
\frac{\Pr(i \in C_{ch})}{\Pr(D(2)=c, D(0)\neq c)}(\Pr(D(0)=h)-\Pr(D(2)=h)).
\]
Since $Z=2$ targets $c$, the difference $\Pr(D(0)=h)-\Pr(D(2)=h)$ represents the proportion of children who  would get to Head Start  under $Z=0$ and leave it when offered admission to another preschool ($Z=2$) as $p(1)$ increases. Since these children can only have $D(2)=c$, our assumption rules out this group and the second term of the difference is zero.

As in \citet{kline2016}, $\text{LATE}_c=\text{LATE}_{nc}$; we end up with
\[
\text{MVPF} = \frac{(1-\tau)p \left(\text{LATE}_h +S_c \text{LATE}_{nc}\right)}{\phi_h-\tau p \left(\text{LATE}_h +S_c \text{LATE}_{nc}\right)},
\]
which happens to coincide with the formula used by \citet{kline2016}.

\section{A Roy Model with Positive Selection}\label{appx:normal:positive_selection}

In this section, we provide details for the example in Section~\ref{sec:pos:sel:3by3:ARUM}.
Recall that we assumed that
\[
\mathbb{E}(Y_i(2)\mid u_{i0}, u_{i1},u_{i2})-\mathbb{E}(Y_i(2))=a_0 u_{i0}+a_1 u_{i1}+a_2 u_{i2}\equiv v_i,
\]
where $a_0$, $a_1$, and $a_2$ are some constants. 
Since $(u_{i0},u_{i1},u_{i2})$ are jointly normal, taking expectations conditional on $u_{i2}-u_{i1}=\zeta_i$ and 
$u_{i2}-u_{i0}=\xi_i$ gives
\[ 
\mathbb{E}(Y_i(2)\mid \zeta_i, \xi_i)-\mathbb{E}(Y_i(2))
= 
\begin{pmatrix}
\text{Cov}(v_i, \zeta_i) & \text{Cov}(v_i, \xi_i)
\end{pmatrix}
\left(
\mathbb{V}\mathrm{ar}
\begin{pmatrix}
\zeta_i \\ \xi_i 
\end{pmatrix}
\right)^{-1} 
\begin{pmatrix}
\zeta_i \\ \xi_i 
\end{pmatrix}.
\]
Simple calculations give 
\[
 \mathbb{E}(Y_i(2)\mid \zeta_i,\xi_i)-\mathbb{E}(Y_i(2))  
=   \frac{a_2 + a_0 - 2 a_1}{3} \zeta_i + \frac{a_2 + a_1 -2 a_0}{3} \xi_i.
\]
The coefficients of $\zeta_i$ and $\xi_i$ are non-negative if and only if  
\[
a_2 + a_0 \geq 2 a_1 \text{ and } a_2 + a_1 \geq 2 a_0,
\]
which can be written more compactly as $a_2\geq \max(a_0,a_1)+\vert a_1-a_0\vert$. 
Assume that this inequality holds, and consider Figure~\ref{fig:ternaryroy}. Every point in $C_{112}^{(i)}$ has lower values of both $\zeta_i$ and $\xi_i$ than any point in $C_{212}$. Therefore, the expected value of $Y_i(2)$ in this triangle is smaller than $\mathbb{E}(Y_i(2)\mid i\in C_{212})$. 
Every point in $C_{112}^{(ii)}$ has a smaller value of $\zeta_i$ than any point in $C_{212}$ (fixing the value of $\xi_i \geq -\underline{U}(2)$ on both sides). The expected value of $Y_i(2)$ in this rectangle is again smaller than $\mathbb{E}(Y_i(2)\mid i\in  C_{212})$.
Combining these two inequalities gives $\mathbb{E}(Y_i(2)\mid i\in C_{212})\geq \mathbb{E}(Y_i(2)\mid i\in C_{112})$, that is \eqref{assn:positivesel:3by3:2}.

\end{document}